\tikzset{
  mynode/.style={fill=orange!30, circle, minimum size=0.5mm, inner sep=5pt}
}
\theoremstyle{plain} 
\newtheorem{theorem}{Theorem}[section]
\newtheorem{lemma}[theorem]{Lemma}
\newtheorem{corollary}[theorem]{Corollary}
\theoremstyle{definition} 
\newtheorem{definition}[theorem]{Definition}
\newtheorem{fact}[theorem]{Fact}
\theoremstyle{remark}  
\newtheorem{remark}[theorem]{Remark}
\newcommand{\eps}[0]{\varepsilon}
\newcommand{\tr}[0]{\text{Tr}}
\newcommand{\pr}[0]{\text{Pr}}
\newcommand{\ketbra}[2]{|{#1}\rangle\langle{#2}|}
\newcommand{\abs}[1]{\left| #1 \right|}
\newcommand{\norm}[1]{\left\lVert #1 \right\rVert}
\newcommand{\inn}[2]{\left\langle #1,  #2 \right\rangle}
\newcommand{\pbra}[1]{\left(#1\right)}
\newcommand{\ot}{\otimes}
\newcommand{\zo}{\{0,1\}}
\newcommand{\PARITY}{\texttt{PARITY}\xspace}
\DeclareMathOperator{\Poly}{poly}
\title{Random Unitaries in Constant (Quantum) Time}
\author{
Ben Foxman\thanks{Yale University. \url{ben.foxman@yale.edu}} \quad\quad\quad
Natalie Parham\thanks{Columbia University. \url{natalie@cs.columbia.edu}} \quad\quad\quad
Francisca Vasconcelos\thanks{UC Berkeley. \url{francisca@berkeley.edu}} \quad\quad\quad
Henry Yuen\thanks{Columbia University. \url{hyuen@cs.columbia.edu}}
}
\date{}
\begin{document}
\maketitle
\begin{abstract}
   Random unitaries are a central object of study in quantum information, with applications to quantum computation, quantum many-body physics, and quantum cryptography. Recent work has constructed unitary designs and pseudorandom unitaries (PRUs) using $\Theta(\log \log n)$-depth unitary circuits with two-qubit gates. 

   In this work, we show that unitary designs and PRUs can be efficiently constructed in several well-studied models of \emph{constant-time} quantum computation (i.e., the time complexity on the quantum computer is independent of the system size). These models are constant-depth circuits augmented with certain nonlocal operations, such as (a) many-qubit \texttt{TOFFOLI} gates, (b) many-qubit \texttt{FANOUT} gates, or (c) mid-circuit measurements with classical feedforward control. Recent advances in quantum computing hardware suggest experimental feasibility of these models in the near future. 

   Our results demonstrate that unitary designs and PRUs can be constructed in much weaker circuit models than previously thought. 
   Furthermore, our construction of PRUs in constant-depth with many-qubit \texttt{TOFFOLI} gates shows that, under cryptographic assumptions, there is no polynomial-time learning algorithm for the circuit class $\QAC^0$.\ Finally, our results suggest a new approach towards proving that \texttt{PARITY} is not computable in $\QAC^0$, a long-standing question in quantum complexity theory. 
\end{abstract}
\section{Introduction}

 Unitary designs and pseudorandom unitaries (PRUs) are efficiently implementable ensembles of unitaries that mimic Haar-random unitaries; $t$-designs are \emph{statistically} indistinguishable from the Haar measure up to $t$'th moments, and PRUs are \emph{computationally} indistinguishable from Haar when queried by efficient quantum algorithms.
 Applications of unitary designs and pseudorandom unitaries are diverse, ranging from quantum compiling \cite{Wallman_2016}, benchmarking \cite{Dankert_2009, Magesan_2011}, quantum learning \cite{huang2023learningpredictarbitraryquantum}, quantum supremacy experiments \cite{48651}, and quantum cryptography~\cite{gunn2023commitments}. 

There is a long line of work focused on efficiently implementing unitary designs~\cite{Brand_o_2016,Haferkamp_2022,   Harrow_2023,metger2024simpleconstructionslineardepthtdesigns, laracuente2024approximateunitarykdesignsshallow,schuster2025randomunitariesextremelylow} and PRUs~\cite{schuster2025randomunitariesextremelylow, ma2024constructrandomunitaries, lu2025parallelkacswalkgenerates}. Particularly striking are the ``gluing-type'' results of LaRacuente and Leditzky \cite{laracuente2024approximateunitarykdesignsshallow} and Schuster, Haferkamp, and Huang \cite{schuster2025randomunitariesextremelylow}.\ In these works, the authors show that several small, independent copies of $t$-designs or PRUs can be ``glued’’ together, yielding random unitaries on larger numbers of qubits. This allows one to construct $t$-designs/PRUs in $O(\log n)$ depth on nearest-neighbor circuit geometries, and $O(\log \log n)$ depth in all-to-all circuits. Given that circuit depth corresponds to the time complexity of a computation, this results imply that (pseudo)random unitaries can be implemented in little time on a quantum computer. 

Schuster et al.\ ~\cite{schuster2025randomunitariesextremelylow} also show that the Gluing Lemma is optimal: unitary designs and PRUs provably require $\Theta(\log \log n)$ depth in all-to-all architectures and $\Theta(\log n)$ depth in nearest-neighbor architectures. Given this, it may appear that the story is closed on minimizing the time complexity of unitary designs and PRUs. However, this is only the limit when we consider \emph{unitary} circuits composed of single- and two-qubit gates. In this paper we show that $t$-designs and PRUs can be implemented in \emph{constant time} in several well-motivated models of quantum computation. We then describe several implications of our results, ranging from possible experimental implementations, quantum complexity theory, and limits on quantum learning.

\paragraph{Models of constant-time computation.} Constant-depth quantum circuits are a model of \emph{constant-time} quantum computations (i.e., the time complexity on the quantum computer is independent of the system size). Constant-depth circuits with only local unitary gates are quite limited in power; they cannot prepare long-range entanglement (like in the GHZ state) or compute functions that depend on all input bits. However, there has been significant interest in understanding the power of constant-depth quantum circuits augmented with complex operations such as many-qubit gates or intermediate measurements with feedforward control. The addition of nonlocal operations appears to unlock significant computational advantages for constant-depth circuits. Some models include:
\label{sec:intro-models}
\begin{enumerate}
    \item Short-time Hamiltonian evolutions. Such operations are natural in analog quantum simulators~\cite{periwal2021programmable}, and are capable of entangling many qubits simultaneously~\cite{fenner2003implementingfanoutgatehamiltonian}.
    
    \item Constant-depth circuits with mid-circuit measurements (i.e., intermediate measurements, followed by classical postprocessing of measurement outcomes to control subsequent layers of gates). This model is important for quantum error-correction~\cite{shor1995scheme}, and has been shown to be capable of generating exotic quantum states with long-range entanglement~\cite{tantivasadakarn2023hierarchy,Buhrman_2024}. Here we treat classical computation as a free operation (i.e., costing no time), which is a reasonable assumption in many cases. 

    \item Constant-depth circuits with arbitrary single-qubit gates and many-qubit \texttt{FANOUT} gates (which maps $\ket{s, x_1, \dots, x_k}$ to $\ket{s, s\oplus x_1, \dots, s \oplus x_k}$). This gives rise to the so-called $\QAC^0_f$ circuit class, which has been studied in quantum complexity theory as a quantum analogue of the classical circuit class $\AC^0$~\cite{moore1999quantumcircuitsfanoutparity}. The addition of the \texttt{FANOUT} gate makes constant-depth circuits surprisingly powerful; for example, the \texttt{PARITY} function and even the quantum Fourier transform can be computed with $\QAC^0_f$ circuits~\cite{Hoyer_2005}. 
    
    \item Circuits with constant \emph{$T$-depth}, that is, arbitrary Clifford computation interleaved with a constant number of layers of $T$ gates. Such circuits have been studied in the context of fault-tolerant quantum computation~\cite{fowler2012time}, nonlocal quantum computation~\cite{speelman2016instantaneous}, and quantum circuit lower bounds~\cite{parham2025quantumcircuitlowerbounds}. This is a reasonable model of constant-time computation in the fault-tolerance setting, because the cost of performing logical $T$ gates dominates that of Clifford operations. 
\end{enumerate}

These models might appear very different from each other, but in fact some of them are closely related. For example, constant $T$-depth circuits, when given access to a catalytic resource state, can simulate $\QAC^0_f$~\cite{kim2025catalyticzrotationsconstanttdepth}, which in turn is equivalent to constant-depth circuits with mid-circuit measurements where the classical processing consists of parity computations~\cite{Buhrman_2024}. 

The equivalence of these models greatly extends the flexibility of quantum algorithm design. For example, one could design an efficient quantum algorithm in the $\QAC^0_f$ model (which may be conceptually easier to think about), and then immediately obtain an efficient implementation in the mid-circuit measurement model, which is experimentally realizable~\cite{corcoles2021exploiting}. We take this approach to show that unitary designs and pseudorandom unitaries can be implemented in constant (quantum\footnote{The parenthetical is meant to emphasize that the \emph{quantum} part of the circuit has constant depth, and we treat the classical processing, which will be simple parity computations, as free.}) time.

\subsection{Our Contributions}
First we show that unitary designs and PRUs can be implemented by polynomial-size $\QAC^0_f$ circuits. Here, we consider two flavors of designs and PRUs, \emph{standard} and \emph{strong}. Standard means that the unitaries appear Haar-random to algorithms that can make (adaptive) queries to the forwards direction of the unitary, whereas strong means that the unitaries appear Haar-random even when the algorithm can make queries to the inverse unitary as well~\cite{ma2024constructrandomunitaries}. 

This theorem circumvents the lower bounds of~\cite{schuster2025randomunitariesextremelylow} which only apply to circuits with single- and two-qubit unitary gates.

\begin{theorem}[Random Unitaries in $\QAC^0_f$]
\label{thm:intro-strong-random-unitaries-in-qac0f}
Let $n$ denote the number of input qubits. The following can be implemented in $\QAC^0_f$, the class of constant-depth circuits with arbitrary single-qubit gates and many-qubit \texttt{FANOUT} gates:
\begin{itemize}
    \item Standard $1/\Poly(n)$-approximate $t$-designs can be implemented with $n \Poly\log n \cdot \Poly(t)$-size circuits; and
    \item Strong $\exp(-\Omega(n))$-approximate $t$-designs can be implemented with $\Poly(n,t)$-size circuits.
\end{itemize}
Furthermore, assuming subexponential post-quantum security of the Learning With Errors (LWE) problem, the following can also be implemented in $\QAC^0_f$:
\begin{itemize}
    \item Standard PRUs can be implemented with $n \Poly \log n$-size circuits; and
    \item Strong PRUs can be implemented with $\Poly(n)$-size circuits.
\end{itemize}
\end{theorem}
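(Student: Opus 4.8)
The plan is to bypass the gluing paradigm of \cite{schuster2025randomunitariesextremelylow,laracuente2024approximateunitarykdesignsshallow} entirely — its $\Theta(\log\log n)$ depth comes from the number of recursive doublings, which no gate set can collapse — and instead take an existing \emph{polynomial-time} design/PRU construction whose algebraic structure is amenable to constant-depth implementation with \texttt{FANOUT}. The natural candidate is the PFC ensemble of \cite{ma2024constructrandomunitaries}, in which a sampled unitary has the form $U = C \cdot F \cdot P$, where $P$ is a random Pauli, $F = \sum_x (-1)^{f(x)}\ketbra{x}{x}$ is a diagonal phase operator determined by a random (or pseudorandom) function $f$, and $C$ is a random Clifford; for the strong variants one uses the corresponding inverse-secure construction of \cite{ma2024constructrandomunitaries}. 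Since those works already establish that such ensembles are $t$-designs / PRUs with the stated error, my only task is to show that each of the three ingredients — Pauli, phase, Clifford — together with its inverse, lies in $\QAC^0_f$, and then to track the resulting gate counts.

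The Pauli layer $P$ is a single depth-one layer of $X$ and $Z$ gates, so it is trivially in $\QAC^0_f$. For the Clifford $C$, I would invoke the canonical (normal-form) decomposition of Clifford unitaries into a \emph{constant} number of layers of three types: Hadamard layers, diagonal-Clifford layers built from $S$ and $\cz$ gates, and linear-reversible (\texttt{CNOT}) layers. A \texttt{CNOT} network implements an $\mathbb{F}_2$-linear map $x \mapsto Ax$, i.e.\ a vector of parities of the input bits; the key point is that \texttt{FANOUT} computes parities in constant depth \cite{Hoyer_2005,moore1999quantumcircuitsfanoutparity}, so each such layer — and hence a uniformly random Clifford, sampled by drawing the parameters of its normal form — is realized in constant depth and $\Poly(n)$ size. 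The diagonal-Clifford layers are quadratic phase functions over $\mathbb{F}_2$ and are handled the same way, and the Hadamard layers are local. This yields the $\Poly(n)$-size accounting needed for the strong constructions.

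The phase operator $F$ is the crux. To apply $F$ I would coherently compute $f(x)$ into an ancilla register using a reversible circuit for $f$, apply the appropriate single-qubit phase gates to the ancilla, and then uncompute the register. The obstruction is implementing $f$ in \emph{constant} quantum depth: a generic function is far too expensive. I would therefore choose $f$ from a family computable in $\TC^0$ and use the fact that $\TC^0 \subseteq \QAC^0_f$, since unbounded \texttt{FANOUT} enables threshold and counting in constant depth \cite{Hoyer_2005}. For PRUs, take $f$ to be a pseudorandom function computable in $\TC^0$, which exists under the (subexponential) post-quantum hardness of LWE; for $t$-designs, take $f$ from a $t$-wise (approximately) independent family realized by low-degree polynomials over a finite field, again in $\TC^0$. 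Because \texttt{FANOUT} is self-inverse and the compute/uncompute pattern is symmetric, the inverse $F^{-1}$ (and thus $U^{-1}$) is implementable with the same resources, which is exactly what the strong variants require.

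The main obstacle I anticipate is not any single ingredient but the quantitative accounting that separates the four claims. Three points need care. First, compiling the classical $\TC^0$ computation of $f$ into a $\QAC^0_f$ circuit must be done with only polynomial (or, for the standard design, near-linear $n\,\Poly\log n$) size while keeping the depth constant and the ancillas clean, and the uncomputation step must be verified to preserve constant depth. Second, I must explain why the \emph{standard} constructions achieve the smaller $n\,\Poly\log n$ size at the cost of only $1/\Poly(n)$ error, whereas the \emph{strong} constructions require full $\Poly(n)$ size but attain $\exp(-\Omega(n))$ error — this reflects a genuinely cheaper forward-only construction versus the more elaborate inverse-secure one, with the error parameters propagated through the design/PRU guarantees of \cite{ma2024constructrandomunitaries}. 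Third, for the approximate-design claims the approximation must be measured in the appropriate diamond-type norm, and the $t$-wise independence error combined with the construction error. Once these are in place, the theorem follows by assembling the three $\QAC^0_f$-implementable ingredients and their inverses.
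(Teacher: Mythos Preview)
Your proposal has two genuine gaps.

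First, you misidentify the $P$ in the $PFC$ ensemble as a random Pauli. In \cite{metger2024simpleconstructionslineardepthtdesigns,ma2024constructrandomunitaries}, and in this paper, $P$ denotes a uniformly random \emph{permutation} of computational basis states, $\ket{x}\mapsto\ket{p(x)}$. A depth-one Pauli layer does not suffice for the design/PRU guarantees. The permutation is in fact the most delicate ingredient to place in $\QAC^0_f$: there is no known $\TC^0$ construction of (approximate) $t$-wise independent permutations for $t=\omega(1)$, so one cannot simply reuse the $\TC^0\subseteq\QAC^0_f$ trick you apply to $F$. The paper sidesteps this by invoking the Luby--Rackoff--style result of \cite{cui2025unitarydesignsnearlyoptimal}, replacing $P$ by two Feistel rounds $S_L S_R$ built from ($2t$-wise independent or pseudorandom) \emph{functions}; those functions are in $\TC^0$, and the $\texttt{XOR}$ wiring is in $\AC^0$, so the whole permutation lands in $\QAC^0_f$. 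Your proposal is missing this step entirely.

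Second, you explicitly reject gluing, but gluing is precisely what yields the $n\,\Poly\log n$ size bounds for the \emph{standard} items. Your diagnosis that the $\Theta(\log\log n)$ depth ``comes from the number of recursive doublings, which no gate set can collapse'' is incorrect: the gluing construction of \cite{schuster2025randomunitariesextremelylow} is a fixed two-layer brickwork, and its $\log\log n$ depth arises only because each $\ell=\Theta(\log n)$-qubit patch costs $\Theta(\log\ell)$ depth with two-qubit gates. Once each patch is a $\QAC^0_f$ circuit of size $\Poly(\ell,t)=\Poly(\log n,t)$, the two-layer brickwork has constant depth and total size $(n/\ell)\cdot\Poly(\ell,t)=n\,\Poly\log n\cdot\Poly(t)$. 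That, not ``a genuinely cheaper forward-only construction,'' is the source of the near-linear size; the strong items do not use gluing (gluing is not inverse-secure) and therefore pay the full $\Poly(n,t)$ price but inherit the $\exp(-\Omega(n))$ error of \Cref{thm:cpfc}. Your alternative Clifford implementation via the normal form is fine and would work in place of the paper's teleportation-based \Cref{lem:C}, but it does not rescue the two points above.
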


The precise definitions of approximate unitary designs and PRUs can be found in \Cref{sec:random-unitaries-defs}, and we prove \Cref{thm:intro-strong-random-unitaries-in-qac0f} in \Cref{sec:random-unitaries-in-qac0}. We stress that while the designs are implemented by constant-depth $\QAC^0_f$ circuits, the pseudorandomness holds against \emph{all} algorithms that make at most $t$ queries; in particular the algorithms can perform arbitrarily deep computation in between queries. 

The result for $t$-designs and PRUs are stated and proved in \Cref{thm:fdesign} and \Cref{thm:fpru} respectively. We note that a similar approach was used by Chia, Liang, and Song \cite{chia2024quantumstatelearningimplies} to obtain random unitaries in $\QAC^0_f$, under slightly different assumptions. 

The close connection between $\QAC^0_f$ and circuits with mid-circuit measurements yields the following corollary: unitary designs and PRUs can also be constructed in constant-depth 2D-local circuit geometries using nearest-neighbor gates, intermediate measurements, and feedforward operations. 

\begin{corollary}[Random unitaries in the intermediate measurement model]
    \label{cor:intro-msmt}
    Approximate $t$-designs and PRUs can be implemented by constant-depth quantum circuits with nearest-neighbor two-qubit gates on a 2D lattice, intermediate measurements, and classical feedforward consisting of computing parities of measurement outcomes. The circuits for $t$-designs have size $n \Poly\log n \cdot \Poly(t)$, and the circuits for PRUs have size $n \Poly\log(n)$. In the same model, strong $t$-designs and PRUs can be implemented by circuits with size $\Poly(n,t)$ and $\Poly(n)$ respectively.
\end{corollary}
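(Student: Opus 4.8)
The plan is to deduce Corollary~\ref{cor:intro-msmt} from Theorem~\ref{thm:intro-strong-random-unitaries-in-qac0f} by invoking the equivalence between $\QAC^0_f$ and the intermediate-measurement model that is cited in the introduction~\cite{Buhrman_2024}. The key observation is that every $\QAC^0_f$ circuit promised by the theorem is a constant-depth circuit whose only nonlocal resource is the many-qubit \texttt{FANOUT} gate, and that a \texttt{FANOUT} gate can be simulated in constant depth using nearest-neighbor gates, intermediate measurements, and classically computed parities of the measurement outcomes. So the corollary should follow by a black-box simulation argument: take the circuits from the theorem and replace each \texttt{FANOUT} layer by its constant-depth measurement-based gadget, checking that the depth stays constant, the classical feedforward is only parity computation, and the overall size blows up by at most a constant (or $\Poly\log$) factor so that the stated size bounds are preserved.

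Concretely, I would proceed in the following steps. First I would recall the precise form of the equivalence: a layer of (possibly many-qubit) \texttt{FANOUT} gates acting on some set of qubits can be implemented by preparing a GHZ-type resource across those qubits via nearest-neighbor gates on a 2D lattice, performing single-qubit measurements, and then applying Pauli corrections conditioned on parities of the outcomes. The fan-out/GHZ-creation step is exactly the operation that constant-depth measurement circuits can perform, and the only classical postprocessing needed is computing parities, which is what the corollary allows. Second, I would argue that routing the qubits into a 2D nearest-neighbor geometry costs at most a constant-depth overhead per \texttt{FANOUT} layer (with the understanding that ancillas are introduced to hold intermediate qubits), so that a constant-depth $\QAC^0_f$ circuit maps to a constant-depth measurement circuit. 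Third, I would track the size bounds: since each of the constantly many \texttt{FANOUT} layers is replaced by a gadget of size linear in the number of qubits it touches, the total size is preserved up to constant factors, yielding $n \Poly\log n \cdot \Poly(t)$ for standard $t$-designs, $n\Poly\log n$ for standard PRUs, and $\Poly(n,t)$, $\Poly(n)$ for the strong variants, matching the statement.

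I expect the main obstacle to be the \emph{geometric} part of the reduction rather than the logical part. The cited equivalence between $\QAC^0_f$ and measurement-based circuits is typically stated for all-to-all connectivity, whereas the corollary demands a 2D nearest-neighbor lattice; so the nontrivial content is showing that the GHZ/\texttt{FANOUT} gadget can still be laid out and executed in constant depth under this locality constraint. The standard resolution is that a GHZ state (equivalently, a long-range \texttt{FANOUT}) on an arbitrary subset of qubits can be created in constant depth on a 2D lattice using ancilla qubits and a single round of measurement, precisely because measurement and feedforward break the light-cone barrier that otherwise forces $\Omega(\log n)$ depth for unitary circuits. I would therefore need to verify that the qubit permutation / embedding needed to bring the relevant qubits adjacent on the lattice, together with the measurement-based fan-out gadget, all fit into constant depth and require only parity feedforward; once that is established, the size accounting and the transfer of both the standard and strong guarantees from Theorem~\ref{thm:intro-strong-random-unitaries-in-qac0f} are routine.
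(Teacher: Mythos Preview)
Your approach is correct and essentially matches the paper's proof (\Cref{thm:msmt} in \Cref{sec:msmt}): start from the $\QAC^0_f$ circuits of \Cref{thm:intro-strong-random-unitaries-in-qac0f} and replace each \texttt{FANOUT} layer by a constant-depth 2D measurement gadget with parity-only feedforward, then check that depth and size are preserved. The paper resolves the 2D-geometry issue you correctly flagged as the main obstacle by observing that a layer of \texttt{FANOUT} gates is a Clifford circuit, rewriting it as an $O(\ell)$-depth 1D nearest-neighbor circuit via~\cite{Bravyi_2021}, and then applying the standard teleportation-based Clifford construction (\Cref{app:Clifford}) to flatten it onto an $O(\ell)\times O(\ell)$ grid---this is just a particular instantiation of the GHZ-style gadget you had in mind.
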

\noindent We prove \Cref{cor:intro-msmt} in \Cref{sec:msmt}. 

\begin{remark}
    It is important to note that, even though the mid-circuit measurements yield random outcomes, \Cref{cor:intro-msmt} gives a \emph{deterministic} implementation of every unitary in the $t$-design/PRU ensemble. The measurement outcomes are only used for simple Pauli corrections, but don't change the overall unitary that is being performed. 
\end{remark}

\Cref{thm:intro-strong-random-unitaries-in-qac0f} and \Cref{cor:intro-msmt} provides another illustration of the power of models of constant-time quantum computation. There is a long line of work demonstrating the various capabilities of $\QAC^0_f$ circuits and circuits with mid-circuit measurement, from the quantum Fourier transform~\cite{Hoyer_2005} to the preparation of novel states from condensed matter physics~\cite{tantivasadakarn2023hierarchy,Buhrman_2024}. We can now add $t$-designs and PRUs to this list.

Furthermore, our constructions suggest the possibility of highly efficient experimental implementations of random unitaries. Random unitaries have been extremely useful in a variety of experimental settings, such as benchmarking~\cite{Dankert_2009, Magesan_2011}, demonstrations of quantum supremacy~\cite{48651}, and explorations of quantum chaos~\cite{belyansky2020minimal} and quantum thermalization~\cite{green2022experimental}. A number of experimental platforms are starting to support mid-circuit measurements and feedforward control~\cite{corcoles2021exploiting,Iqbal_2024}, which may bring the protocol of \Cref{cor:intro-msmt} within the realm of near-term implementation. 

\paragraph{Random unitaries in $\QAC^0$.} In~\cite{moore1999quantumcircuitsfanoutparity}, Moore also introduced a variant of $\QAC^0_f$ called $\QAC^0$ -- note the lack of the $f$ subscript. This model consists of constant-depth circuits with arbitrary single-qubit gates and (instead of the \texttt{FANOUT} gate) many-qubit \texttt{TOFFOLI} gates, which maps $\ket{x_1,\ldots,x_k,s}$ to $\ket{x_1,\ldots,x_k, s \oplus \texttt{AND}(x_1, \ldots,x_k)}$. Both $\QAC^0$ and $\QAC^0_f$ were defined as natural analogues of the classical circuit class $\AC^0$, an extremely well-studied circuit model in theoretical computer science~\cite{FSS84}.  

$\QAC^0$ can be efficiently simulated by $\QAC^0_f$ circuits\footnote{Moore originally defined $\QAC^0_f$ to include \emph{both} \texttt{FANOUT} and \texttt{TOFFOLI} gates, which meant that $\QAC^0$ is contained in $\QAC^0_f$ by definition. However, Takahashi and Tani later showed that \texttt{TOFFOLI} gates can be simulated using \texttt{FANOUT} gates~\cite{takahashi2012collapsehierarchyconstantdepthexact}.}, but it is believed that $\QAC^0$ is much weaker than $\QAC^0_f$. It is conjectured, for example, that the $\texttt{PARITY}$ function cannot be computed by polynomial-size $\QAC^0$ circuits, implying that $\QAC^0 \neq \QAC^0_f$. Despite $\QAC^0$ appearing to be such a weak circuit class, we show that such circuits can nevertheless efficiently implement $t$-designs for $t = O(1)$ and PRUs with inverse-polynomial approximation error. 

\begin{theorem}[Random unitaries in $\QAC^0$]
\label{thm:intro-designs-in-qac0}
Let $n$ denote the number of input qubits. Let $\delta > 0$ and let $k \in \mathbb{N}$ be constants. The following can be implemented in $\QAC^0$, the class of constant-depth circuits with arbitrary single-qubit gates and many-qubit \texttt{TOFFOLI} gates:
\begin{itemize}
    \item Standard $n^{-k}$-approximate $t$-designs can be implemented with depth $O((\log k)/\delta)$, size $O(n^{1 + \delta t})$ circuits. 
\end{itemize}
Furthermore, assuming subexponential post-quantum security of the Learning With Errors (LWE) problem, the following can also be implemented in $\QAC^0$:
\begin{itemize}
    \item Standard PRUs can be implemented up to $n^{-k}$ error, by depth $O((\log k)/\delta)$, size $O(n^{1 + \delta})$  circuits. 
\end{itemize}
\end{theorem}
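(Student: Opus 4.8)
The plan is to obtain the $\QAC^0$ construction by starting from the $\QAC^0_f$ construction of \Cref{thm:intro-strong-random-unitaries-in-qac0f} (realized in \Cref{thm:fdesign,thm:fpru}) and then \emph{eliminating the \texttt{FANOUT} gates}. The $\QAC^0_f$ circuits are already constant-depth, and the only resource they use outside $\QAC^0$ is the many-qubit \texttt{FANOUT} gate, invoked to implement the gluing/coupling steps of the Schuster--Haferkamp--Huang-style recursion (and, for PRUs, to spread a pseudorandom key phase across the register). First I would isolate every \texttt{FANOUT} invocation and replace it by a $\QAC^0$ gadget built from many-qubit \texttt{TOFFOLI} and single-qubit gates. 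The same skeleton serves designs and PRUs: for the PRU one substitutes the LWE-based pseudorandom object of \Cref{thm:fpru} for the truly random local unitaries, and the pseudorandomness (against any $\Poly$-query adversary of arbitrary depth) is inherited verbatim, since replacing \texttt{FANOUT} gadgets changes only \emph{how} a fixed circuit is realized, not the unitary ensemble it samples.

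Next I would arrange the recursion so the two depth factors $1/\delta$ and $\log k$ arise for independent reasons. The parameter $\delta$ governs a size/depth tradeoff in the block structure: instead of squaring block size per level (which would cost $\log\log n$ levels), each level grows the block size by a factor $n^{\delta}$, so $O(1/\delta)$ levels cover all $n$ qubits, each contributing a multiplicative $n^{\delta}$-type overhead; for a $t$-design the moment order inflates the per-level interface/ancilla cost, accumulating to size $O(n^{1+\delta t})$, whereas for a PRU the computational (rather than moment-based) security removes the $t$-dependence, giving $O(n^{1+\delta})$. The factor $\log k$ comes from the precision of the replacement gadget, which I would build as a tree whose error, measured as an exponent of $n^{-1}$, roughly doubles per layer: a tree of depth $O(\log k)$ (reaching internal arity $\Poly(k)$) suffices to reach error $n^{-k}$. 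Composing the $O(1/\delta)$ levels keeps errors additive, so the global error telescopes to $n^{-k}$ up to a $\Poly(n)$ factor absorbed into constants, yielding total depth $O((\log k)/\delta)$.

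The heart of the argument, and the step I expect to be the main obstacle, is showing that the coupling can be implemented in $\QAC^0$ at all. One cannot simply simulate \texttt{FANOUT}: in constant depth \texttt{FANOUT} is equivalent to \PARITY, and \PARITY is conjectured \emph{not} to lie in $\QAC^0$, so an exact constant-depth \texttt{FANOUT} gadget would refute that conjecture. The plan is instead to show that the coupling actually needed for gluing is strictly weaker than \texttt{FANOUT} and is \texttt{TOFFOLI}-native. Concretely, each phase-type coupling $\ket{x}\mapsto(-1)^{g(x)}\ket{x}$ factors into commuting multi-controlled-$Z$ gates, and a multi-controlled-$Z$ is precisely a many-qubit \texttt{TOFFOLI} conjugated by Hadamards on its target; crucially this applies the phase \emph{without} ever computing a parity. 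The residual difficulty is scheduling: a qubit controlling many such gates would force large depth, so I would allocate ancilla workspace per block and argue that the coupling needs only \emph{bounded}-arity ($\Poly(k)$) copying of tree-depth $O(\log k)$, never the unbounded \texttt{FANOUT} that $\QAC^0$ lacks, with the truncation to arity $\Poly(k)$ accounting for the $n^{-k}$ error. Establishing that this bounded-arity \texttt{TOFFOLI}-based coupling reproduces the gluing step to error $n^{-k}$, and that the induced error on the global ensemble telescopes correctly through the $O(1/\delta)$ levels, is where the real work lies; the design-versus-PRU distinction and the size bookkeeping are then routine.
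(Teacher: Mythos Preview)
Your plan has a genuine gap at exactly the step you identify as ``the heart of the argument.'' You write that one \emph{cannot} simply simulate \texttt{FANOUT} in $\QAC^0$ because this would refute $\texttt{PARITY}\notin\QAC^0$, and you therefore try to show that the coupling needed for gluing is ``strictly weaker than \texttt{FANOUT}'' and ``\texttt{TOFFOLI}-native.'' This is the wrong move on both counts. First, only an \emph{exact, polynomial-size} \texttt{FANOUT} would refute the conjecture; Rosenthal and later Grier--Morris showed that an $n$-qubit \texttt{FANOUT} (equivalently \texttt{PARITY}) can be $\mu$-\emph{approximately} implemented in $\QAC^0$ with $\exp(\Theta(n))$ ancillae and $\mu=2^{-\Theta(n)}$. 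The paper's construction is precisely to use this approximate gadget. Second, your proposed workaround fails: the $CPFC$ ensemble underlying \Cref{thm:fdesign} requires a random Clifford (implemented via \texttt{FANOUT}-based teleportation) and $\TC^0$ evaluation of $t$-wise independent polynomials over $\mathbb F_{2^s}$ (implemented via \texttt{THRESHOLD}, which reduces to \texttt{FANOUT}). None of these decompose into commuting multi-controlled-$Z$ gates; there is no phase function $g(x)$ whose monomial expansion avoids the parity-like structure you are trying to dodge.

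The actual mechanism is: use the Gluing Lemma to replace one $n$-qubit random unitary by a two-layer brickwork of independent random unitaries on patches of size $\ell=\Theta(\log n)$ (for $1/\Poly(n)$ error) or $\ell=\Poly\log n$ (for PRUs). Each patch is implemented by the $\QAC^0_f$ circuit of \Cref{thm:fdesign} or \Cref{thm:fpru}, so every \texttt{FANOUT} now has width $\Poly(\ell)$. Replace each such \texttt{FANOUT} by the Grier--Morris gadget on $w$ qubits, paying $\exp(\Theta(w))$ ancillae. Because $w=\Poly\log n$, this is polynomial in $n$. The $1/\delta$ and $\log k$ factors do not come from recursion levels of the gluing; the gluing is exactly two layers. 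They come from the downward self-reducibility of \texttt{PARITY} (a depth-$d$ tree of width-$w$ \texttt{PARITY} gates computes width-$w^d$ \texttt{PARITY}) and from constant-sample amplification of the Grier--Morris gadget, which together trade depth against ancilla count and error exponent. Choosing the base width $w\approx (t/c_0)\log n$ and tree depth $c_1$ so that $w^{c_1}$ covers the patch size gives size $n^{1+\delta t}$ and the stated depth.
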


\begin{remark}
We note that the PRUs in \Cref{thm:intro-designs-in-qac0} can only be implemented by $\QAC^0$ circuits up to inverse polynomial error (whereas the $\QAC^0_f$ circuits in \Cref{thm:intro-strong-random-unitaries-in-qac0f} can implement PRUs with \emph{exponentially} small error). This makes the PRU construction less relevant for cryptographic applications, perhaps, but we believe it still has the following use cases:
\begin{enumerate}
    \item The $\QAC^0$ $t$-design constructions only look random up to $t=O(1)$ moments; however the PRU constructions look random (to computationally-bounded adversaries) up to $O(n^k)$ queries, and
    \item The PRU constructions imply lower bounds on the complexity of \emph{learning} $\QAC^0$ circuits, which we elaborate on shortly.
\end{enumerate}
\end{remark}

We find \Cref{thm:intro-designs-in-qac0} interesting for several reasons.\ The first is that there seems to be no classical analogue. Pseudorandom functions cannot be efficiently implemented by classical $\AC^0$ circuits~\cite{10.1145/174130.174138}, which are constant-depth circuits with unbounded fan-in \texttt{AND} and \texttt{NOT} gates (i.e., the classical analogue of $\QAC^0$). Thus \emph{quantum} pseudorandomness appears easier to achieve in a weak quantum circuit model than \emph{classical} pseudorandomness in an analogous classical circuit model. We discuss in more detail in \Cref{sec:intro-discussion}.

A second reason is that, to the best of our knowledge, \Cref{thm:intro-designs-in-qac0} demonstrates the first example of an interesting task implementable in $\QAC^0$. Moreover, for constant $t$-designs and PRUs, the number of ancillae is $n^{1 + 1/O(d)}$, where $d$ is the depth of the $\QAC^0$ circuit. By a result of Anshu et al.\ \cite{anshu2024computationalpowerqac0barely}, the \texttt{PARITY} function cannot be computed by $\QAC^0$ circuits with $n^{1 + 1/\exp(d)}$ ancillae. Therefore, there number of ancillae in our constructions are quite close to the current best-known \texttt{PARITY} lower bounds. 

A third reason is due to a connection with quantum complexity theory. Given \Cref{thm:intro-designs-in-qac0}, one may naturally wonder about whether $\QAC^0$ can implement strong designs or strong PRUs? Fascinatingly, this connects to a long-standing open question in quantum circuit complexity: can $\QAC^0$ efficiently simulate $\QAC^0_f$, or equivalently, is the $\texttt{PARITY}$ function computable in $ \QAC^0$~\cite{moore1999quantumcircuitsfanoutparity}? If $\QAC^0 = \QAC^0_f$, then by \Cref{thm:intro-strong-random-unitaries-in-qac0f} certainly strong designs/PRUs can be implemented in $\QAC^0$. Conversely, if one can show that strong designs or PRUs \emph{cannot} be implemented by $\QAC^0$ circuits, then this proves $\texttt{PARITY} \notin \QAC^0$. 

\paragraph{Hardness of learning $\QAC^0$.}
A central problem in quantum learning theory is reconstructing the description of a quantum circuit given only black-box query access. As alluded to above, \Cref{thm:intro-designs-in-qac0} has direct implications for learning $\QAC^0$ circuits—a topic explored in several prior works \cite{Nadimpalli_2024, bao2025learningjuntadistributionsquantum, vasconcelos2025learningshallowquantumcircuits}. In particular, assuming the hardness of the Learning with Errors (LWE) problem, the following theorem leverages our $\QAC^0$ PRUs to establish super-polynomial time lower bounds for \emph{average-case} learning of polynomial-size $\QAC^0$ circuits.

\begin{theorem}[$\QAC^0$ Learning Lower-Bound]
\label{thm:intro-learning-lower-bound}
    Assuming subexponential post-quantum security of LWE, $n^{\omega(1)}$ time is necessary to learn $\QAC^0$ unitaries with $O(n^\delta)$ ancillae to average-case distance\footnote{Here average-case distance refers to the error in average gate-fidelity according to Haar random inputs, as described in depth in \cite{vasconcelos2025learningshallowquantumcircuits}.}, for any constant $\delta > 0$.
\end{theorem}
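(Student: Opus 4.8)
The plan is to run the standard ``pseudorandomness implies hardness of learning'' argument, now in the unitary setting: an efficient average-case learner for $\QAC^0$ would let us distinguish the $\QAC^0$ PRUs of \Cref{thm:intro-designs-in-qac0} from Haar-random unitaries, contradicting their pseudorandomness (which rests on subexponential LWE). Concretely, fix $\delta > 0$ and instantiate the standard PRU ensemble of \Cref{thm:intro-designs-in-qac0} so that it lies within the class of $\QAC^0$ unitaries using $O(n^\delta)$ ancillae, taking the inverse-polynomial approximation parameter $k$ as large as convenient. Suppose toward a contradiction that there were a learning algorithm $A$ running in time $T = \Poly(n)$ that, given forward query access to a unitary $U$ drawn from this ensemble, outputs with high probability a circuit description $\wh U$ whose average gate fidelity with $U$ is at least $1 - \eta$ for a small fixed $\eta$.

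From $A$ I would build a query distinguisher $D$ for the PRU security game. Given oracle access to an unknown $U$ --- promised to be either a sample from the $\QAC^0$ PRU ensemble or a Haar-random unitary --- the distinguisher first runs $A$ with oracle $U$ to obtain a hypothesis $\wh U$, and then estimates the average gate fidelity $\bar F(U, \wh U)$ using a further $\Poly(n)$ queries to $U$ (for instance, by repeatedly preparing approximate-design input states $\ket{\psi}$, applying $U$ followed by $\wh U^\dagger$, and measuring the overlap with $\ket{\psi}$). It outputs ``pseudorandom'' if the estimate exceeds a threshold $\tau$ chosen between the completeness and soundness regimes described below, and ``Haar'' otherwise. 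Since $A$ makes $\Poly(n)$ queries and runs in time $\Poly(n)$, and the fidelity estimation adds only $\Poly(n)$ more (the distinguisher need not be low-depth, only efficient), $D$ is exactly the kind of efficient, polynomially-many-query adversary that PRU security rules out.

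Completeness is immediate: if $U$ comes from the ensemble then by assumption $\bar F(U, \wh U) \ge 1 - \eta$ with high probability, so $D$ accepts. For soundness I would argue that a Haar-random $U$ is far from every hypothesis the learner can produce. The key point is that a $T$-time algorithm can output at most $2^{O(T)} = 2^{\Poly(n)}$ distinct circuit descriptions, so the hypothesis class $\mathcal{H}$ is finite and small. For a fixed $\wh U$ and Haar-random $U$, writing $V = U^\dagger \wh U$ (also Haar), the average gate fidelity is $\bar F = (\abs{\tr V}^2 + d)/(d(d+1))$ with $d = 2^n$, which concentrates around $1/d$; since $U \mapsto \abs{\tr(U^\dagger \wh U)}$ is $\sqrt{d}$-Lipschitz with mean $O(1)$, Lévy's lemma on $U(d)$ gives $\pr[\bar F(U,\wh U) \ge \tau] \le \exp(-\Omega(\tau d^2))$, doubly exponentially small in $n$. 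Union-bounding over the at most $2^{\Poly(n)}$ elements of $\mathcal{H}$ leaves failure probability $\exp(-\Omega(\tau d^2) + \Poly(n))$, which is negligible, so $D$ rejects a Haar-random $U$ with overwhelming probability. Thus $D$ distinguishes with advantage $1 - o(1)$, contradicting indistinguishability of the $\QAC^0$ PRUs from Haar. Taking $k$ large so that the $n^{-k}$ approximation error of the ensemble sits well below the threshold gap makes the argument robust to the inverse-polynomial error, and the conclusion is that no $\Poly(n)$-time average-case learner exists, i.e.\ $n^{\omega(1)}$ time is necessary.

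I expect the main obstacle to be the soundness analysis: carefully quantifying how close a polynomial-time learner's best hypothesis can be to a Haar-random unitary in the \emph{average-case} (average-gate-fidelity) metric, and verifying that the concentration of $\bar F$ for Haar $U$ beats the $2^{\Poly(n)}$ union bound over all possible learner outputs. A secondary technical point is reconciling the inverse-polynomial approximation error of the $\QAC^0$ PRU with the fidelity threshold, and confirming the fidelity-estimation subroutine is accurate enough with only $\Poly(n)$ queries; these are routine but must be tracked to land at the stated $n^{\omega(1)}$ lower bound.
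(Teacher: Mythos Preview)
Your overall strategy---use a PRU implementable in $\QAC^0$, then argue that an efficient average-case learner would yield an efficient distinguisher---is exactly the paper's. The distinguishing subroutine you propose (estimate $\bar F(U,\wh U)$ via random input states) and your soundness analysis (L\'evy concentration plus a union bound over the $2^{\Poly(n)}$ possible learner outputs) are valid alternatives to the paper's swap test on Choi states and its direct computation of $\ev_U[|\tr(\wh U^\dagger U)|^2]$; if anything, your union-bound treatment of the dependence of $\wh U$ on $U$ is more explicit than the paper's.

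There is, however, a genuine gap in your first step. You assert that the PRU ensemble of \Cref{thm:intro-designs-in-qac0} can be instantiated inside the class of $\QAC^0$ unitaries with $O(n^\delta)$ ancillae. That theorem only gives circuits of \emph{size} $O(n^{1+\delta})$; the glued construction places $\Theta(n/\ell)$ patches side by side, each consuming $\exp(\Theta(w))$ ancillae, so the ancilla count is superlinear in $n$ and does not fit in the $O(n^\delta)$-ancilla subclass you are trying to lower-bound. Your argument as written therefore proves only the weaker statement for $\QAC^0$ with polynomially many ancillae (this is the paper's \Cref{thm:learning_PRU}).

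To reach the stated $O(n^\delta)$-ancilla bound, the paper dispenses with the full glued PRU and uses a single ``shrunk'' pseudorandom object instead: implement the LWE-based PRF on just $m=\log^{d}n$ input qubits (acting as identity on the remaining $n-m$), so there is only one patch and the ancilla cost is the single-patch quantity $\exp(\Theta(\log n / c_0)) = n^{\delta}$ rather than $n$ times that. Subexponential security of the PRF on $m$ bits still yields superpolynomial security in $n$, and the same swap-test/fidelity-estimation distinguisher goes through with $d$ replaced by $2^m$. In short, the missing idea is that the hard instance should be a single polylog-sized pseudorandom primitive padded with identity, not the full $n$-qubit glued PRU.
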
 
\noindent This sublinear-ancillae regime is very close to the logarithmic-ancillae regime for which the \cite{vasconcelos2025learningshallowquantumcircuits} algorithm gives a nearly matching learning upper-bound. By assuming the quantum LMN low-support conjecture \cite[Conjecture 1]{vasconcelos2025learningshallowquantumcircuits}, their learning guarantees extend to the polynomial-ancillae regime, thereby proving near-optimality of the \cite{vasconcelos2025learningshallowquantumcircuits} $O(n^{\Poly\log(n)})$ runtime. Finally, we emphasize that this is the first \emph{average-case} learning lower-bound for $\QAC^0$, with previous work by \cite{10.1145/3618260.3649722,vasconcelos2025learningshallowquantumcircuits} establishing an exponential sample worst-case learning lower-bound.

We refer the reader to \Cref{sec:intro-discussion} for additional context. 

\paragraph{Construction overview.} We provide a brief overview of our approach. To construct (strong) unitary designs and (strong) PRUs in $\QAC^0_f$ (\Cref{thm:intro-strong-random-unitaries-in-qac0f}), we parallelize a variant of the so-called $PFC$ construction of random unitaries, first introduced by~\cite{metger2024simpleconstructionslineardepthtdesigns} and further studied by~\cite{ma2024constructrandomunitaries, cui2025unitarydesignsnearlyoptimal}. Here, $P$, $F$ and $C$ represent a random permutation matrix, a random diagonal matrix, and a random Clifford operator respectively. We show that the ensemble of $PFC$ matrices analyzed by~\cite{cui2025unitarydesignsnearlyoptimal} can all be implemented in $\QAC^0_f$; this exploits the surprising power of the \texttt{FANOUT} gate~\cite{Hoyer_2005,grier2024quantumthresholdpowerful}. 

To get the designs in $\QAC^0$ (without the \texttt{FANOUT} gate), we modify Morris and Grier's \emph{exponential}-size $\QAC^0$ implementation of the \texttt{FANOUT} gate~\cite{grier2024quantumthresholdpowerful} to convert the $\QAC^0_f$ constructions above into exponential-size $\QAC^0$ circuits. We then scale these down to logarithmically-many input qubits (yielding polynomial-size $\QAC^0$ circuits), and then apply the ``Gluing Lemma’’ of~\cite{schuster2025randomunitariesextremelylow} to recover a unitary design on $n$ qubits. 

Random unitaries constructed via the Gluing Lemma are not secure when inverse queries are allowed, so \Cref{thm:intro-designs-in-qac0} only yields the standard notion of designs (rather than strong designs). 
 \subsection{Discussion}
 \label{sec:intro-discussion}

We now elaborate on the implications and consequences of our constructions of random unitaries in $\QAC^0$ and $\QAC^0_f$. 
 
\paragraph{Classical versus Quantum Pseudorandomness.} The quantum circuit models $\QAC^0$ and $\QAC^0_f$ were first introduced by Moore to investigate quantum analogues of the classical circuit class $\AC^0$, which are constant-depth classical circuits with unbounded fan-in \texttt{AND} and \texttt{NOT} gates. The study of $\AC^0$ has been extremely fruitful in theoretical computer science, yielding celebrated lower bounds such as $\texttt{PARITY} \notin \AC^0$~\cite{FSS84,Has86}, powerful techniques such as random restrictions and the polynomial method~\cite{Has86,Razborov1987LowerBO}, and subsequent developments in learning theory and complexity theory that continue to this day~\cite{10.1145/174130.174138,braverman2008polylogarithmic,raz2022oracle}. 

The existence of random unitaries in $\QAC^0$ (\Cref{thm:intro-designs-in-qac0}) is particularly surprising given that an analogous result provably does not hold for $\AC^0$. In particular, Linial, Mansour, and Nisan showed that a simple, 2-query algorithm exists for distinguishing (quasipolynomial size) $\AC^0$ circuits from random functions. In particular, \cite{10.1145/174130.174138} showed that every $\AC^0$ function is approximately a low-degree polynomial, and as a consequence, it has low average sensitivity—for a random input $x$, flipping a random bit of to obtain $x^\prime$ is unlikely to affect the value of the function. Checking if $f(x) = f(x^\prime)$ suffices to distinguish $f$ from a random function, where output bits are independently uniform random. 
\par 
Our results imply that no analogous distinguisher should exist in the quantum setting, when only forward queries to $U$ are allowed. This is because the polynomial-size designs from \Cref{thm:intro-designs-in-qac0} will fool any constant-query distinguishers, and the PRUs will fool polynomial-time distinguishers. 

Perhaps one interpretation is this: quantum \texttt{AND} (i.e., \texttt{TOFFOLI}) gates appear more effective at generating \textit{quantum randomness} than classical \texttt{AND} gates are at generating \textit{classical randomness}.

 \paragraph{Towards Quantum Circuit Lower Bounds.} 
Is the \texttt{PARITY} function computable in $\QAC^0$? In his paper introducing $\QAC^0$ and $\QAC^0_f$~\cite{moore1999quantumcircuitsfanoutparity}, Moore showed that, interestingly, computing the $\texttt{PARITY}$ function is \emph{equivalent} to the \texttt{FANOUT} gate (and thus $\texttt{PARITY} \in \QAC^0_f$), and thus the question about \texttt{PARITY} is equivalent to whether the two models $\QAC^0$ and $\QAC^0_f$ are equal in power. 

The question remains stubbornly open 25 years after Moore initially posed the question. The classical techniques for proving $\texttt{PARITY} \notin \AC^0$ do not seem to carry over to the quantum setting. Lower bounds have been given in some restricted settings, such as when the circuit has depth at most 2 \cite{pade2020depth,rosenthal2020boundsqac0complexityapproximating,fenner2025tightboundsdepth2qaccircuits}, or the number of ancillae is limited \cite{fang2003quantumlowerboundsfanout,Nadimpalli_2024,anshu2024computationalpowerqac0barely}. However, even depth $3$ lower bounds remain open. 

In this work, we show that $\texttt{PARITY} \notin \QAC^0$ is implied by the \emph{non-existence} of strong random unitaries in $\QAC^0$. While a formally a harder task, the connection with random unitaries may unlock new approaches to proving $\QAC^0$ lower bounds. We elaborate on these approaches in the next section. 

To show this connection, we leverage Rosenthal's $\QAC^0$ reduction from implementing \texttt{PARITY} to synthesizing a special state known as a \textit{nekomata} state. If the parity \textit{decision problem} was computable in $\QAC^0$, such states could be constructed, which would give both \texttt{PARITY} and strong random unitaries in $\QAC^0$.  

\paragraph{Implications of Inverse Access.} 
Our constructions of random unitaries in $\QAC^0$ are secure against adversaries with only forward access to $U$. When \texttt{FANOUT} gates are allowed, we obtain random unitaries which are secure even with inverse access. Therefore, if one could show that $\QAC^0$ circuits differ from Haar-random for any property which is easy to estimate with inverse access, then the \texttt{FANOUT} gate, and therefore \texttt{PARITY} gate, cannot be computed in $\QAC^0$.\ As described previously, uncomputation techniques and a $\QAC^0$ reduction due to Rosenthal (\Cref{lem:neko}) would then allow one to obtain lower bounds for \textit{decision problems} in $\QAC^0$.
\begin{table}[ht]
    \centering
    \begin{tabular}{l|p{0.65\linewidth}}
        \textbf{Access Model} & \textbf{Capability} \\
        \hline
        $U$ only & Collision probability, purity of subsystems, average fidelity across inputs, overlap with Pauli operators~\cite{huang2023learningpredictarbitraryquantum} \\
        \hline
        $U$ and $U^\dagger$ & Determining lightcones~\cite{schuster2025randomunitariesextremelylow}, entanglement entropy~\cite{gur2021sublinearquantumalgorithmsestimating,chen2025localtestunitarilyinvariant}, OTOCs~\cite{cotler2022informationtheoretichardnessoutoftimeordercorrelators}, displacement amplitudes~\cite{King_2024} \\
    \end{tabular}
    \caption{Estimating properties with forward queries to just $U$, or queries to both $U$ and $U^\dagger$. In the latter case, superpolynomial queries are required for each task when only forward queries are allowed.}
    \label{tab:access-models}
\end{table}

The existence of forward-secure (pseudo)random unitaries in quasipolynomial size $\QAC^0$ implies that all statistics in the first row of \Cref{tab:access-models} must approximately agree with the Haar-random values.\ The same is true of random unitaries in $\QAC^0_f$ with respect to \textit{all} properties in the table. Therefore, one cannot hope to prove lower bounds against $\QAC^0$ or $\QAC^0_f$ by showing that these properties significantly differ from the expected  Haar-random values. This is reminiscent of the \textit{natural proofs barrier} introduced by Razborov and Rudich~\cite{MR1473047}, which has been considered in other contexts in quantum information~\cite{dowling2006geometryquantumcomputation, chen2025finegrainedcomplexityquantumnatural}.\ Defining quantum natural proofs in the context of $\QAC^0$ may require addressing the subtle distinction between forward and inverse queries—we leave this for future work.

\paragraph{Learning Lower-Bounds.} Classically, \cite{10.1145/174130.174138} leveraged low-degree Fourier concentration to derive the first efficient (quasipolynomial) sample and time complexity algorithm for learning $\AC^0$ functions. Assuming the classical hardness of factoring, \cite{10.1145/167088.167197} proved that quasipolynomial sample complexity of learning $\AC^0$ is optimal. More recent follow-up work of \cite{doi:10.1137/20M1344202} also proved a quasipolynomial time lower-bound for \emph{quantum} learning of $\AC^0$, assuming sub-exponential post-quantum security of Ring Learning With Errors (RLWE). 

On the quantum side, there has been substantial recent progress on learning $\QAC^0$—\cite{Nadimpalli_2024} gave a quasipolynomial time algorithm for learning single-output channel Choi representations generated by $\QAC^0$ circuits with logarithmic ancillae, which was improved by Vasconcelos and Huang \cite{vasconcelos2025learningshallowquantumcircuits} to obtain a quasipolynomial \textit{time} algorithm for \textit{average-case} learning of full $\QAC^0$ (with logarithmic ancillae) unitaries. Furthermore, assuming a quantum analog of the LMN Theorem, they showed that the algorithm can be extended to any polynomial-size $\QAC^0$ circuit. 

Assuming subexponential-security of LWE, we show that quasipolynomial time complexity is near optimal for average-case learning of polynomial-size $\QAC^0$ circuits. We present this as a natural corollary of our PRU constructions, although the same result can also be obtained more directly using a single, scaled-down quantum secure PRF. The latter approach can be viewed as a ``fully quantum'' analog of the technique used in \cite{doi:10.1137/20M1344202}.

\subsection{Conclusion and Further Directions}
\label{sec:intro-directions}
Overall, our results highlight the surprising power of constant-depth circuits with many-qubit gates, suggesting the shallow quantum circuits may be even more powerful than previously thought. Since many quantum technologies are expected to support many-qubit gates \cite{song2024realizationconstantdepthfanoutrealtime, Bluvstein_2022, yu2022multiqubittoffoligatesoptimal}, our results suggest that a staple of quantum information processing, (pseudo)random unitaries, may be realizable in constant time on future quantum hardware. 
Moreover, our constructions reveal new connections between quantum (pseudo)randomness and quantum circuit complexity. 
 
We believe that our results suggest some interesting directions for future investigation, which we list below. 

\paragraph{Improving the Constructions.} Current constructions of the \texttt{FANOUT} gate in $\QAC^0$~\cite{rosenthal2020boundsqac0complexityapproximating, grier2024quantumthresholdpowerful} require size that scales inverse polynomially in implementation error. Improving the $\eps$ dependence could drastically improve the parameters for random unitaries implementable in $\QAC^0$ (see \Cref{sec:random-unitaries-in-qac0}). Moreover, could a fine-grained ancilla-depth tradeoff be established? This would be relevant for experimental considerations.

\paragraph{Inverse Security and $\QAC^0$.} As mentioned in \Cref{sec:intro-discussion}, proving that strong $t$-designs are not in $\QAC^0$ is a path to proving $\PARITY \notin \QAC^0$. In contrast with recent approaches~\cite{Nadimpalli_2024,anshu2024computationalpowerqac0barely} which aim to generalize classical techniques to the quantum setting, this route would not require such an analogy. Could properties in \Cref{tab:access-models} be helpful for ruling out strong $t$-designs in $\QAC^0$, and by extension proving \texttt{PARITY} $\notin \QAC^0$? 

\paragraph{Weakening Assumptions for Random Unitaries.} 
Even though $\AC^0$ does not contain PRFs, there are candidate constructions of \textit{weak} PRFs in $\AC^0$ \cite{BlumFurstKearnsLipton:1994}, which are secure against adversaries who only receive random $(x, f(x))$ samples as opposed to arbitrary query access. Could PRUs be built from post-quantum \textit{weak} PRFs? Currently, it is only known how to build PRUs from standard quantum-secure PRFs, which are secure against adversaries making adaptive queries. If so, then the Gluing Lemma may no longer be necessary, and \textit{strong} random unitaries could be obtained in $\QAC^0$. Conversely, this could serve as a path towards ruling out the existence of quantum-secure weak PRFs in $\AC^0$. 

\paragraph{Stronger Learning Lower-Bounds.} 
Leveraging our efficient $\QAC^0$ $t$-design implementations, it could be interesting to establish information theoretic learning lower-bounds, using arguments like those of \cite{chen2025information}, to potentially rule out efficient low-degree learning of $\QAC^0$. Furthermore, as shown in this work, PRU implementations can yield powerful \emph{average-case} computational learning lower-bounds for $\QAC^0$. More ancilla-efficient $\QAC^0$ PRU/PRF implementations could improve the learning lower-bound on two fronts: 1) extending it to $\QAC^0$ with only logarithmic ancillae, where there is no reliance on the quantum LMN conjecture, and 2) closing the current gap with the quasi-polynomial upper-bound. Achieving both of these would establish optimality of the \cite{vasconcelos2025learningshallowquantumcircuits} algorithm.

\paragraph{The ``Quantum Natural Proofs Barrier''.} \Cref{tab:access-models} lists many properties which, by our results, must agree with Haar-random values for $\QAC^0$ and $\QAC^0_f$. This rules out several approaches for proving lower bounds against these classes. Are there other approaches for proving quantum circuit lower bounds which circumvent these barriers? On the other hand, can we find other barriers for proving lower bounds against low-depth quantum circuits?  

\paragraph{How Easy is Generating Quantum Pseudorandomness?}
A broader takeaway from our paper is the idea that random unitaries can be implemented in a much wider range of constant-time models of quantum computation than previously thought. Could current-day programmable analog quantum simulators, which can perform globally entangling Hamiltonian evolutions~\cite{periwal2021programmable}, be sufficient to generate quantum pseudorandomness in constant time? Recent developments in using optical cavities with neutral atoms may offer further avenues to perform strongly coupled interactions between large numbers of qubits~\cite{li2022collective,Jandura_2024} -- could such interactions be used in place of the \texttt{TOFFOLI} or \texttt{FANOUT} gates to efficiently implement designs and PRUs in neutral atom arrays? Theoretically, can we characterize the weakest possible entangling operation that suffices to generate quantum pseudorandomness in constant time? These are interesting questions that we leave for future work.

\section{Background}
In this paper, $\exp$ and $\log$ denote base-$2$ exponent and logarithm. $\Poly(n) = \bigcup_{c \in \mathbb{N}} O(n^c)$. A function $f(n)$ is \textit{negligible} if $f(n) \in o(1/p(n))$ for any polynomial $p$. A function is \textit{quasipolynomial} if it is of the form $n^{\Poly \log n}$. 
\subsection{Quantum Computation}

We assume basic familiarity with the formalism of quantum computing, including states, unitaries, and measurements. In this section, we briefly recall concepts that will be important for this work. First, the \textit{Pauli group} on $n$ qubit unitaries consists of all $n$-fold tensor products of the single-qubit matrices $\{I, X, Y, Z\}$, multiplied by a global phase in $\{\pm 1, \pm i\}$.\ A \textit{Pauli string} refers to an element of this group with phase $1$.\ The set of Pauli strings is an orthonormal basis for the vector space of $2^n \times 2^n$ matrices, under the Hilbert-Schmidt inner product $\inn{A}{B} = \frac{1}{2^n}\tr(A^\dagger B)$.\ A \textit{Clifford operator} is a unitary that normalizes the Pauli group, i.e.\ $CPC^\dagger$ is in the Pauli group if $P$ is.\ The set of $n$-qubit Clifford operators forms a group generated by the Hadamard ($H$), phase ($S$), and \texttt{CNOT} gates. \par The \textit{trace distance} between two quantum states $\rho$ and $\sigma$ is defined as
\[
\|\rho - \sigma\|_{td} := \frac{1}{2} \|\rho - \sigma\|_{1},
\]
where $\|A\|_1 = \operatorname{Tr}[\sqrt{A^\dagger A}]$ denotes the \text{trace norm}.\ Operationally, the trace distance is equivalent to  the maximum distinguishing advantage between $\rho$ and $\sigma$ over all possible quantum measurements, i.e. $\|\rho - \sigma\|_{td} \le \max_{M_i} \norm{p^\rho, p^{\sigma}}_1$ where $p^\rho, p^{\sigma}$ are classical probability distributions on measurement outcomes in the basis specified by $\{M_i\}_i$.\ The \textit{Fuchs-van-de Graaf inequality} relates the trace distance $\norm{\rho - \sigma}_{td}$ and the \textit{fidelity} $F(\rho, \sigma) := (\tr\sqrt{\sqrt{\rho}\sigma \sqrt{\rho}})^2$ via the bound
\[
\norm{\rho - \sigma}_{td} \le \sqrt{1 - F(\rho, \sigma)}.
\]
If $\rho = \ket{\psi_1}\bra{\psi_1}$ and $\sigma = \ket{\psi_2}\bra{\psi_2}$, then $F(\rho, \sigma) = \abs{\braket{\psi_1|\psi_2}}^2$. \par Given a bipartite state $\rho_{AB}$ on $\mathcal{H}_A \otimes \mathcal{H}_B$, the \textit{partial trace} over subsystem $B$ is the map $\tr_B : \mathcal{L}(\mathcal{H}_A \otimes \mathcal{H}_B) \to \mathcal{L}(\mathcal{H}_A)$ defined by
\[
\tr_B(\rho_{AB}) = \sum_j (\mathbb{I}_A \otimes \bra{j}) \rho_{AB} (\mathbb{I}_A \otimes \ket{j}),
\]
where $\{\ket{j}\}$ is an orthonormal basis for $\mathcal{H}_B$. The result is the \textit{reduced density matrix} on $\mathcal{H}_A$, describing the marginal state when subsystem $B$ is ignored.
\par 
Finally, the \textit{Haar measure}, denoted $\mu_{2^n}$, is the unique left and right unitarily invariant probability measure over the unitary group $\mathcal{U}(2^n)$. 
\subsection{Quantum Circuits}
We assume familiarity with the circuit model of quantum computation, see \cite{Nielsen_Chuang_2010} for an introduction.\ For quantum circuits, we use $n$ to denote the number of input qubits, and $a$ to denote the number of ancilla qubits.\ Quantum circuits, like their classical counterparts, can be used to compute (boolean) functions by measuring a designated output qubit in the standard basis. They can also implement more general functions—namely, unitary transformations mapping $n$ qubits to $n$ qubits. In this paper, we consider both types of computation:
\begin{definition}
    A quantum circuit $C$ computes a function $f: \{0, 1\}^n \mapsto \{0, 1\}^m$ with average-case error $\eps$ if 
    $$\pr_{x \in \{0, 1\}^n}[C(x) = f(x)] \ge 1 -\eps$$
    Here, $C(x)$ is a random variable denoting the measurement of $m$ designated output qubits in the standard basis. The randomness is taken over both the input $\ket{x}$ and the randomness in the measurement. 
\end{definition}
\begin{definition}
    \label{def:approximplement}
    An $n + a$ qubit circuit $C$ \textit{$\eps$-implements} an $n$-qubit unitary $U$ if
    \begin{equation*}
        \abs{\bra{\psi, 0^a}C^\dagger(U\ket{\psi} \otimes \ket{0^a})}^2 \ge 1 - \eps
    \end{equation*}
     for every $n$-qubit state $\ket{\psi}$.
\end{definition}
When $\eps = 0$, we simply say $C$ implements $U$.\ Approximate implementations have a composition property: we defer the proof to \Cref{app:compose}.
\begin{lemma}
    \label{lem:compose}
    Assume that $C$ $\eps_0$-implements $U$ and $D$ $\eps_1$-implements $V$. Then, $CD$ $(\eps_0 + \eps_1)$-implements $UV$. 
\end{lemma}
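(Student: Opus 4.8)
The plan is to reduce \Cref{lem:compose} to a single triangle inequality between three pure states, chained through the natural intermediate state $C\bigl(V\ket{\psi}\otimes\ket{0^a}\bigr)$. Three structural observations drive the argument. First, the quantity in \Cref{def:approximplement} is exactly the fidelity between a circuit's output on a clean ancilla and the ideal output: $\abs{\braket{\psi,0^a|C^\dagger(U\ket{\psi}\otimes\ket{0^a})}}^2$ is $F$ between $C\ket{\psi,0^a}$ and $U\ket{\psi}\otimes\ket{0^a}$. Second, a pure-state fidelity $\ge 1-\eps$ is equivalent to a bound on a genuine \emph{metric} distance (the trace distance between the corresponding pure states, which by the Fuchs--van de Graaf relation equals $\sqrt{1-F}$, and which after optimizing the global phase equals $\min_\theta\norm{\ket{a}-e^{i\theta}\ket{b}}$ up to normalization). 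Third, this distance is invariant under applying any unitary, in particular $C$. These are precisely the ingredients that let the two error budgets combine across the two stages.

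Concretely, I would fix an arbitrary $n$-qubit state $\ket{\psi}$ and set $\ket{\mathrm{out}}=CD\ket{\psi,0^a}$, $\ket{\mathrm{mid}}=C\bigl(V\ket{\psi}\otimes\ket{0^a}\bigr)$, and $\ket{\mathrm{tgt}}=UV\ket{\psi}\otimes\ket{0^a}$. The first step bounds the distance from $\ket{\mathrm{out}}$ to $\ket{\mathrm{mid}}$: since $C$ is unitary and the chosen distance is unitarily invariant, this equals the distance from $D\ket{\psi,0^a}$ to $V\ket{\psi}\otimes\ket{0^a}$, which is controlled by the hypothesis that $D$ $\eps_1$-implements $V$ (applied to the input $\ket{\psi}$). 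The second step bounds the distance from $\ket{\mathrm{mid}}$ to $\ket{\mathrm{tgt}}$ by invoking the hypothesis that $C$ $\eps_0$-implements $U$ \emph{on the specific $n$-qubit input state} $V\ket{\psi}$; the crucial point is that $\ket{\mathrm{mid}}$ is exactly $C$ acting on $V\ket{\psi}$ tensored with a \emph{clean} ancilla $\ket{0^a}$, which is the only form \Cref{def:approximplement} directly controls. The triangle inequality then bounds the distance from $\ket{\mathrm{out}}$ to $\ket{\mathrm{tgt}}$ by the sum of the two stage errors, and translating back to the overlap in \Cref{def:approximplement} gives the claimed $(\eps_0+\eps_1)$ guarantee uniformly over $\ket{\psi}$, i.e.\ that $CD$ $(\eps_0+\eps_1)$-implements $UV$.

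The main obstacle, and essentially the only place requiring genuine care, is the passage between the squared-overlap formulation of \Cref{def:approximplement} and a quantity that actually obeys the triangle inequality, together with the attendant global-phase ambiguity: $\abs{\braket{a|b}}$ is phase-invariant, whereas a naive vector difference $\norm{\ket{a}-\ket{b}}$ is not, so the intermediate and target states must be paired with consistently chosen phases before their deviations are added. Working throughout with the phase-optimized distance $\min_\theta\norm{\ket{a}-e^{i\theta}\ket{b}}$ (equivalently the pure-state trace distance, which is manifestly phase-invariant, is a metric, and enjoys unitary invariance) sidesteps this bookkeeping and is what makes the two stage errors combine additively rather than multiplicatively. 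The remaining verifications are routine: checking that $\ket{\mathrm{mid}}$ has the clean-ancilla form needed to invoke $C$'s guarantee on input $V\ket{\psi}$, and carrying out the conversion between this distance and the fidelity bound demanded by \Cref{def:approximplement}.
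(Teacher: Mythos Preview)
Your overall structure---fixing the intermediate state $\ket{\mathrm{mid}}=C(V\ket{\psi}\otimes\ket{0^a})$, using unitary invariance of a phase-insensitive metric, and invoking each hypothesis on the appropriate input---is exactly the right decomposition and matches the paper's intended argument. The gap is in the last step, where you assert that ``translating back to the overlap in \Cref{def:approximplement} gives the claimed $(\eps_0+\eps_1)$ guarantee'' without carrying out the arithmetic. With either metric you name, fidelity $\geq 1-\eps$ corresponds to distance $\leq\sqrt{\eps}$ (trace distance) or $\leq\sqrt{2-2\sqrt{1-\eps}}$ (phase-optimized Euclidean), so the triangle inequality only controls the \emph{square roots} of the errors; converting back gives $F(\mathrm{out},\mathrm{tgt})\geq 1-(\sqrt{\eps_0}+\sqrt{\eps_1})^2=1-\eps_0-\eps_1-2\sqrt{\eps_0\eps_1}$, which is strictly weaker than the stated $1-\eps_0-\eps_1$.

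In fact no argument can reach the stated bound, because the lemma is false as written. Take $n=1$, $a=0$, $U=V=I$, and $C=D$ the real $2\times2$ rotation by $\pi/6$. Then $\min_\psi\abs{\bra{\psi}C^\dagger\ket{\psi}}^2=\cos^2(\pi/6)=3/4$, so $\eps_0=\eps_1=1/4$; but $CD$ is the rotation by $\pi/3$ and $\abs{\bra{0}(CD)^\dagger\ket{0}}^2=\cos^2(\pi/3)=1/4<1-\eps_0-\eps_1=1/2$. The paper's own chain of inequalities has the same flaw: beyond sign typos, the implicit step $\norm{C^\dagger(U\otimes I)\ket{\phi}-\ket{\phi}}_2\leq 1-\sqrt{1-\eps_0}$ is false (after phase optimization the correct bound is $\sqrt{2-2\sqrt{1-\eps_0}}$, which is larger). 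Your argument, done correctly, actually proves the true statement that $CD$ $\bigl(\sqrt{\eps_0}+\sqrt{\eps_1}\bigr)^2$-implements $UV$; this is within a factor of $2$ of $\eps_0+\eps_1$ and suffices for every downstream use in the paper.
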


In this work, we study quantum analogs of the classical circuit class $\AC^0$, which are constant-depth circuits composed of unbounded fan-in \texttt{AND}/\texttt{OR} gates and \texttt{NOT} gates. Because these circuits are classical, bits can be freely copied and reused throughout the computation, and a single bit may serve as input to multiple gates in the next layer. In other words, the \texttt{FANOUT} operation
\[
(x, 0, \dots, 0) \mapsto (x, x, \dots, x)
\]
is considered free. Since quantum circuits consist of unitary gates-- which map $k$-qubits states to $k$-qubit states in a reversible way-- the output of a gate can be used in at most one subsequent gate. Moreover, due to the no-cloning theorem, quantum information cannot be freely copied. As a result, implementing a ``quantum \texttt{FANOUT}''\footnote{Quantum information can only be copied in a fixed basis due to the no-cloning theorem.} requires explicitly including a \texttt{FANOUT} gate in the gate set.\ This distinction is captured by two different generalizations of $\AC^0$ to the quantum setting: $\QAC^0$ and $\QAC^0_f$.
\begin{definition}[$\QAC^0$]
    A $\QAC^0$ circuit is a constant-depth circuit consisting of arbitrary single-qubit gates and $\mathtt{TOFFOLI}$ gates, where
    \[
    \mathtt{TOFFOLI}\ket{x_1, \dots, x_k, t} = \ket{x_1, \dots, x_k, t \oplus \mathtt{AND}(x_1, \dots, x_k)}.
    \]
    Importantly, we allow the number of ``control'' qubits $k$ to be arbitrary.
\end{definition}
Unless otherwise specified, $\QAC^0$ circuits have polynomial size, which is sometimes taken to be the definition of the class\footnote{Formally, we consider $\QAC$ circuit families $\{C_n\}$ indexed by input size $n$, each of which has $\Poly(n)$ size and $O(1)$ depth.}.
 
\begin{definition}[$\QAC^0_f$]
    A $\QAC^0_f$ circuit is a constant depth circuit consisting of arbitrary single-qubit gates, $\mathtt{TOFFOLI}$ gates, and $\mathtt{FANOUT}$ gates, where
    \[
    \mathtt{FANOUT}\ket{s, x_1, \dots, x_k} = \ket{s, s \oplus x_1, \dots, s \oplus x_k}.
    \]
\end{definition}
Like $\QAC^0$, $\QAC^0_f$ circuits have polynomial size unless specified otherwise.  

Since \texttt{FANOUT} allows classical bits to be copied, $\QAC^0_f$ can simulate polynomial size, constant-depth $\AC$ circuits, i.e. $\AC^0$. However, unlike $\AC^0$, $\QAC^0_f$ circuits can \textit{exactly} compute the $\texttt{PARITY}$ function $$\texttt{PARITY}\ket{x_1, \dots, x_k, t} \mapsto \ket{x_1, \dots, x_k, t \oplus \bigoplus_{i=1}^k x_i}$$ due to the following identity: 
\begin{fact}
    \label{fact:par_fanout}
    $H^{\otimes n}\mathtt{PARITY}_{A_1 \dots, A_{n-1}, B} H^{\otimes n} = \texttt{FANOUT}_{B, A_{n-1}, \dots, A_{1}}$.
\end{fact}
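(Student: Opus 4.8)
The plan is to decompose the PARITY gate into commuting two-qubit CNOT gates, and then apply the standard identity that conjugating a CNOT by Hadamards on both of its qubits reverses the roles of control and target. First I would observe that $\mathtt{PARITY}_{A_1,\dots,A_{n-1},B}$, which sends $\ket{x_1,\dots,x_{n-1},t}$ to $\ket{x_1,\dots,x_{n-1},t\oplus\bigoplus_{i}x_i}$, is exactly the product $\prod_{i=1}^{n-1}\mathtt{CNOT}_{A_i\to B}$ of CNOT gates each controlled on some $A_i$ and targeting the single qubit $B$. These gates pairwise commute, since they share the same target and act on distinct controls, so the product is well defined independent of the order.

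Next I would insert resolutions of the identity $H^{\otimes n}H^{\otimes n}=I$ between consecutive factors to write $H^{\otimes n}\!\left(\prod_i \mathtt{CNOT}_{A_i\to B}\right)\!H^{\otimes n}=\prod_i\!\left(H^{\otimes n}\,\mathtt{CNOT}_{A_i\to B}\,H^{\otimes n}\right)$. In each factor the Hadamards on qubits other than $A_i$ and $B$ commute through (the CNOT acts as identity there) and cancel in pairs, leaving $(H_{A_i}\otimes H_B)\,\mathtt{CNOT}_{A_i\to B}\,(H_{A_i}\otimes H_B)$. The key lemma—verifiable by checking the action on Paulis via $HXH=Z$ and $HZH=X$, or directly on the four computational basis states—is that this equals $\mathtt{CNOT}_{B\to A_i}$, with control and target interchanged.

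Finally I would recognize the resulting product $\prod_{i=1}^{n-1}\mathtt{CNOT}_{B\to A_i}$—CNOTs all controlled on $B$ and targeting the distinct qubits $A_1,\dots,A_{n-1}$—as precisely $\mathtt{FANOUT}_{B,A_{n-1},\dots,A_1}$, since it sends $\ket{s}_B\ket{x_1}\cdots\ket{x_{n-1}}$ to $\ket{s}_B\ket{s\oplus x_1}\cdots\ket{s\oplus x_{n-1}}$, and the order in which the independent target qubits are listed is immaterial. There is essentially no genuine obstacle here; the only point demanding care is the bookkeeping of the control/target conventions and the target ordering in the definitions of PARITY and FANOUT, to ensure the reversed CNOTs land on the correct qubits. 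One could alternatively argue by brute force, expanding $H^{\otimes n}\ket{z}=2^{-n/2}\sum_y(-1)^{z\cdot y}\ket{y}$ and tracking the resulting phases, but the CNOT-reversal route is cleaner and sidesteps that phase computation entirely.
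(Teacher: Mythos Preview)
Your argument is correct. The paper states this identity as a \emph{fact} without proof, so there is no proof to compare against; your decomposition of $\mathtt{PARITY}$ into commuting $\mathtt{CNOT}_{A_i\to B}$ gates together with the standard identity $(H\otimes H)\,\mathtt{CNOT}_{A\to B}\,(H\otimes H)=\mathtt{CNOT}_{B\to A}$ is exactly the canonical justification and would serve perfectly well as a proof.
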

Building off of this identity, a long line of work \cite{cleve2000fastparallelcircuitsquantum, takahashi2012collapsehierarchyconstantdepthexact,  Buhrman_2024, grier2024quantumthresholdpowerful, Smith_2024} has shown that many other important operations can be simulated exactly by constant-depth circuits with $\mathtt{FANOUT}$ gates. Of particular interest to us is the \texttt{THRESHOLD}$_t$ gate, which flips a target bit if and only if the Hamming weight of the input is at least $t$.\ Amazingly, constant-depth \texttt{THRESHOLD} circuits (with arbitrary values of $t$) can be simulated exactly by $\QAC^0_f$ circuits \cite{takahashi2012collapsehierarchyconstantdepthexact}. We can state this formally by introducing the complexity class $\TC^0$: 
\begin{definition}[$\TC^0$]
    $\TC^0$ is the class of polynomial size, constant depth classical circuits with unbounded \texttt{AND}, \texttt{OR}, and $\texttt{THRESHOLD}_t$ gates for arbitrary $t$. 
\end{definition}
\begin{theorem}[\cite{Buhrman_2024, takahashi2012collapsehierarchyconstantdepthexact}]
    \label{thm:tc0_power}
     For any $f \in \TC^0$, the unitary operator $$U_f: \ket{x} \mapsto (-1)^{f(x)}\ket{x}$$ can be implemented by a $\QAC^0_f$ circuit. 
\end{theorem}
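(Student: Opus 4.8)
The plan is to use the standard reversible \emph{compute--phase--uncompute} paradigm, reducing the phase oracle $U_f$ to the purely classical evaluation of a constant-depth threshold circuit, which we then simulate gate-by-gate in $\QAC^0_f$. Since $f \in \TC^0$, fix a constant-depth, polynomial-size circuit $C$ computing $f$ whose gates are unbounded-fan-in \texttt{AND}, \texttt{OR}, \texttt{THRESHOLD}$_t$, and \texttt{NOT} gates, and let $d = O(1)$ be its depth. I would first observe that it suffices to realize, for each gate $g$ of $C$, a $\QAC^0_f$ gadget that writes $g$'s output onto a fresh $\ket{0}$ ancilla, i.e.\ implements $\ket{y}\ket{0} \mapsto \ket{y}\ket{g(y)}$, where $y$ denotes the (already computed) input bits of $g$.

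For the gadgets: a \texttt{NOT} is a single $X$ gate; an \texttt{AND} of $k$ bits is exactly a $k$-controlled \texttt{TOFFOLI}, which is in the gate set; an \texttt{OR} follows from \texttt{AND} by De Morgan (conjugating inputs and output by $X$); and a \texttt{THRESHOLD}$_t$ gate written onto an ancilla is available in $\QAC^0_f$ by the exact simulation of constant-depth threshold circuits of \cite{takahashi2012collapsehierarchyconstantdepthexact} (indeed \texttt{AND} and \texttt{OR} are just the special cases \texttt{THRESHOLD}$_k$ and \texttt{THRESHOLD}$_1$, so \texttt{THRESHOLD} subsumes everything). Each gadget has constant depth and polynomial size.

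Next I would evaluate $C$ layer by layer. A single wire of $C$ may feed many gates in the next layer, and since quantum information cannot be freely duplicated, I copy each bit in the computational basis the required (polynomially many) number of times using \texttt{FANOUT} gates, which perform exactly this copy in a single layer. Processing classical layer $i$ then amounts to first applying \texttt{FANOUT} to produce enough copies of each previously-computed value, then applying the gate gadgets of layer $i$ in parallel onto fresh ancillas. Because there are $d = O(1)$ classical layers and each contributes only a constant number of $\QAC^0_f$ layers, the forward evaluation has depth $O(d) = O(1)$ and polynomial size, and at the end a designated output ancilla holds $f(x)$.

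To finish, apply a single-qubit $Z$ to the output ancilla, imprinting the global phase $(-1)^{f(x)}$, and then run the entire forward evaluation in reverse (each of \texttt{TOFFOLI}, \texttt{FANOUT}, and the \texttt{THRESHOLD} gadget being easily inverted) to disentangle and reset every ancilla, including all \texttt{FANOUT} copies, back to $\ket{0}$. The net action on the input register is precisely $\ket{x} \mapsto (-1)^{f(x)}\ket{x} = U_f\ket{x}$, and the full circuit is constant-depth and polynomial-size, hence $\QAC^0_f$. The one genuinely substantive ingredient is the constant-depth $\QAC^0_f$ realization of \texttt{THRESHOLD}, on which all of $\TC^0$ rests, but this is exactly the cited result we are permitted to invoke; the remaining work --- composing gadgets across layers, handling fan-out via \texttt{FANOUT}, and bookkeeping the uncomputation so that every ancilla resets to $\ket{0}$ --- is routine, with the only real care being to track all the computational-basis copies so the reverse pass cleans them up.
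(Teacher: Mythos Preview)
Your proposal is correct and follows the standard compute--phase--uncompute argument; the paper itself does not supply a proof of this statement but simply cites it as a known result from \cite{Buhrman_2024,takahashi2012collapsehierarchyconstantdepthexact}. Your reconstruction is exactly the intended argument underlying those citations: evaluate the $\TC^0$ circuit layer by layer using \texttt{FANOUT} for fan-out and the Takahashi--Tani $\QAC^0_f$ threshold gadget for each gate, kick back the phase with a $Z$, and uncompute.
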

\Cref{thm:tc0_power} is particularly interesting in the context of random unitaries, since constant-depth circuits with threshold gates can implement candidate quantum-secure one way functions. 
\subsection{Random Unitaries}
\label{sec:random-unitaries-defs}

Haar-random unitary operators play a central role in quantum information \cite{48651, schuster2025randomunitariesextremelylow, onorati2019random, saini2022quantum}. However, even approximating a Haar-random unitary typically requires exponentially many many-qubit gates, making such unitaries impractical for most applications.\ Since real-world use cases require efficient circuits, we must relax the requirement of full randomness to something that is ``good enough'' for the task at hand. \par 
To obtain efficiently implementable approximations of random unitaries, two main relaxations are commonly considered:\ unitary $t$-designs~\cite{Dankert_2009} and pseudorandom unitaries~\cite{2003Sci...302.2098E}.\ The former restricts the number of times the algorithm queries $U$, while the latter restricts the algorithm's runtime. We will also use the notion of a \textit{strong} unitary $t$-design or \textit{strong} PRU, which allows the algorithm to query both $U$ and its inverse $U^\dagger$, while maintaining the same security guarantees. We now give more formal definitions: 
\begin{definition}[(Strong) Approximate Unitary $t$-designs]
\normalfont
    A family of unitaries $\mathcal{U} = \{U_{n, k}\}_{n, k}$ is a \emph{(strong) $\eps$-approximate unitary $t$-design }if for all $n$ and for any quantum algorithm $Q$ making at most $t$ queries to $U$ (and its inverse $U^\dagger)$,
    \begin{equation*}
        \abs{\underset{k \sim \mathcal{K}}{\text{Pr}}[Q^{U_{n, k},\; (U^\dagger_{n, k})}(1^n) = 1] - \underset{U \sim \mu_N}{\text{Pr}}[Q^{U, \; (U^\dagger)}(1^n) = 1]} \leq \eps
    \end{equation*}
\end{definition}
In the literature, this definition coincides with the recently defined notion of \emph{measurable}-error approximate $t$-design \cite{cui2025unitarydesignsnearlyoptimal}, to distinguish it from the weaker notion of an \textit{additive} error approximate $t$-design (which is secure only against adversaries making parallel queries) and the stronger notion of \textit{relative} error (which is indistinguishable from Haar-random even for unphysical measurements).\ We will not need these distinctions, so we do not disambiguate. 
\begin{definition}[(Strong) Pseudorandom Random Unitaries]
\normalfont
    A family of unitaries $\mathcal{U} = \{U_{n, k}\}_{n, k}$ is \emph{(strongly) $\eps$-secure against $t$-time quantum adversaries} if for all $n$ and for any quantum algorithm $Q$ running in time $t$,
    \begin{equation*}
        \abs{\underset{k \sim \mathcal{K}}{\text{Pr}}[Q^{U_{n, k},\; (U^\dagger_{n, k})}(1^n) = 1] - \underset{U \sim \mu_N}{\text{Pr}}[Q^{U, \; (U^\dagger)}(1^n) = 1]} \leq \eps
    \end{equation*}
    When $\eps$ is a negligible in $n$ and $t$ is superpolynomial in $n$, we say that $\mathcal{U}$ is a (strong) \textit{pseudorandom unitary}~(PRU).
\end{definition}
In the paper, we take care to distinguish the error of a $t$-design or PRU, from the error incurred by a circuit implementation of such unitaries. For example, we will refer to ``$\eps_0$-approximate $t$-designs, which are $\eps_1$-implemented by a circuit $C$''.  

In this paper, we make use of a particular PRU construction, the $\mathit{CPFC}$ ensemble \cite{metger2024simpleconstructionslineardepthtdesigns, ma2024constructrandomunitaries}, which is the product distribution over the following ensembles of $n$-qubit unitaries: $\mathit{P}$, a uniformly random permutation $p$ on computational basis states, mapping $\ket{x} \mapsto \ket{p(x)}$; $\mathit{F}$, a random Boolean function implemented as a phase oracle $\ket{x} \mapsto (-1)^{f(x)}\ket{x}$; and $\mathit{C}$, the uniform distribution over $n$-qubit Clifford operators\footnote{Or any unitary $2$-design.}. The $C$ on the left and right are independent copies. \ These ensembles are statistically close to the Haar distribution, but can be efficiently instantiated as either $t$-designs or PRUs by replacing the $P$ and $F$ with pseudorandom components. 

\begin{theorem}[\cite{ma2024constructrandomunitaries}]
    \label{thm:cpfc}
   The CPFC ensemble is a \textbf{strong} $\eps(n)$-approximate $t$-design, with $\eps(n) = O({t^2/2^{n/8}})$.
\end{theorem}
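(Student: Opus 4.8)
The plan is to bound, for every quantum algorithm $\mathcal{A}$ making at most $t$ adaptive queries to $U$ and $U^\dagger$, its advantage in distinguishing $U \sim \text{CPFC}$ from $U \sim \mu_N$. Since inverse queries are permitted, the natural object is the mixed $2t$-fold moment operator $M_{\mathcal{E}} = \mathbb{E}_{U \sim \mathcal{E}}\big[U^{\otimes t} \otimes \bar U^{\otimes t}\big]$, so I would first reduce the distinguishing advantage to a norm distance $\|M_{\text{CPFC}} - M_{\mu_N}\|$ between the CPFC and Haar moment operators. This directly controls only \emph{parallel} (non-adaptive) queries, so the strategy is to carry out the moment computation to pin down the error, and then separately lift it to the adaptive, measurable-error setting using the path-recording framework of \cite{ma2024constructrandomunitaries}.

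For the moment computation I would factor $\mathbb{E}_{\text{CPFC}} = \mathbb{E}_{C_L}\, \mathbb{E}_P\, \mathbb{E}_F\, \mathbb{E}_{C_R}$ and evaluate each layer in turn. The random phase layer is cleanest: because every phase is $\pm 1$ we have $\bar F = F$, and a Gaussian-sum argument shows that $\mathbb{E}_F\big[F^{\otimes t} \otimes F^{\otimes t}\big]$ is the diagonal projector onto those configurations of the $2t$ computational-basis indices in which every distinct value appears an even number of times. The random permutation layer $\mathbb{E}_P\big[P^{\otimes t} \otimes P^{\otimes t}\big]$ then averages the $S_{2^n}$-action over these configurations; combined with the even-multiplicity constraint, it forces the surviving weight onto configurations that \emph{pair} a forward index with an inverse index, reproducing the permutation-operator basis that spans the Haar commutant of $U^{\otimes t}\otimes \bar U^{\otimes t}$, up to corrections arising from collisions among the $2t$ indices. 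Finally, because the Clifford group is a $3$-design, the twirls $\mathbb{E}_{C_L}, \mathbb{E}_{C_R}$ wash out the basis-dependence of the $PF$ core and project onto the unitary commutant, with the spurious stabilizer-commutant directions receiving only exponentially small weight once intersected with the $F$-projector.

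The deviation from Haar is therefore accounted for entirely by collision events — configurations in which two of the $2t$ indices coincide in a way that does not correspond to a clean pairing. A birthday-type count bounds the total weight of such configurations: with $2t$ indices spread over the $2^{\Omega(n)}$-dimensional support produced by the Clifford mixing, the collision probability is $O(t^2/2^{cn})$, and tracking the dimension of the effective subspace through the twirl yields the constant $c = 1/8$. Summing these contributions gives $\|M_{\text{CPFC}} - M_{\mu_N}\| = O(t^2/2^{n/8})$.

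The hard part will be the \emph{strong} and adaptive requirement, which is what makes the theorem nontrivial: a moment bound alone only rules out parallel distinguishers, whereas an adaptive adversary with inverse access might try to steer its queries toward inputs that expose the non-Haar structure of $P$ and $F$. To close this gap I would purify both experiments, replacing the Haar oracle by the path-recording isometry and giving $F$ and $P$ their own recording registers, and then argue that the two purified oracles produce identical reduced states on the adversary's registers conditioned on no collision among the recorded input--output relations. The two-sided Clifford sandwich is precisely what guarantees that each query, forward or inverse, presents an effectively random input to the $PF$ core (this is where the Clifford $2$-design property enters), so that an adaptive strategy gains no collision advantage over random queries, and the symmetry of the dressing makes forward and inverse queries interchangeable in this accounting. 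The same $O(t^2/2^{n/8})$ birthday bound then controls the adaptive distinguishing advantage, completing the proof.
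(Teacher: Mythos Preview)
The paper does not prove this theorem at all: \Cref{thm:cpfc} is stated with the citation \cite{ma2024constructrandomunitaries} and is used as a black-box input to the constructions in Section~3. There is no proof in the present paper to compare your proposal against.

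That said, a brief comment on your sketch relative to what the cited reference actually does. Your high-level architecture is right --- the strong (inverse-query) security genuinely requires the path-recording framework of \cite{ma2024constructrandomunitaries}, and the two-sided Clifford dressing is what makes forward and inverse queries symmetric enough for that analysis to go through. However, your first half (the moment-operator computation $\|M_{\mathrm{CPFC}} - M_{\mu_N}\|$) is largely beside the point: the path-recording proof does not proceed by first bounding a parallel moment distance and then ``lifting'' it to the adaptive setting, but rather works directly with the purified adaptive interaction from the start. The moment computation you outline would at best recover the weaker additive-error, parallel-query result of \cite{metger2024simpleconstructionslineardepthtdesigns}, and there is no generic lifting lemma that converts such a bound into adaptive strong security. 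If you were to actually write this proof, you would want to drop the moment-operator detour and go straight to the coupling between the purified CPFC oracle and the path-recording oracle, bounding the bad event (a collision in the recorded relations) directly in the adaptive experiment.
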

Finally, we will need the notion of a $t$-wise independent family of functions:
\begin{definition}[$t$-wise Independent Functions]
    A distribution $\mathcal{D}$ over functions of $\{0,1\}^n$ is \emph{$\eps$-approximate $t$-wise independent} if, for every set of $t$ inputs $x_1, \dots, x_t \in \{0,1\}^n$, the joint distribution of $(f(x_1), \dots, f(x_t))$ for $f \sim \mathcal{D}$ is close to uniform over all $t$-tuples of elements in $\{0,1\}^n$; that is,
    \[
    \norm{(f(x_1), \dots, f(x_t)) - U_D(t))}_1 \le \eps,
    \]
    where $U_D(t)$ denotes the uniform distribution over $t$-tuples with elements in $\{0, 1\}^n$.
\end{definition}
$t$-wise independent functions are useful for constructing random unitaries, as one may imagine from \Cref{thm:cpfc}. 

\section{Random Unitaries in Constant (Quantum) Time}
\label{sec:random-unitaries-in-qac0}
In this section, we analyze explicit constructions of the random unitary ensemble in \Cref{thm:cpfc}. We show that each component of these constructions---the Clifford operator $C$, the phase oracle $F$, and the permutation $P$---can be instantiated by $\QAC^0_f$ circuits, which can then be bootstrapped into $\QAC^0$ circuits. The latter step uses a recent technique for constructing random unitaries known as the \textit{Gluing Lemma} \cite{schuster2025randomunitariesextremelylow}.\ To our knowledge, these are the first constructions of constant-depth unitary $t$-designs and PRUs with standard many-qubit gates \footnote{As this work was finished, Zhang, Vijay, Gu, and Bao \cite{zhang2025designsmagicaugmentedcliffordcircuits} gave a construction of unitary $t$-designs using depth using a Clifford operator followed by a depth $2^{O(t \log t)}$ circuit. In our constructions, we maintain constant depth, at the cost of increasing the number of ancillae.}. 
\subsection{Random Unitaries in \texorpdfstring{$\QAC^0_f$}{}}
First, we observe that both $t$-designs and pseudorandom unitaries can be implemented in $\QAC^0_f$.\ To do so, it suffices to show that suitable derandomizations of $\mathit{P}$ and $\mathit{F}$, as well as $\mathit{C}$, have $\QAC^0_f$ circuits. 
\subsubsection{Unitary Designs}
To show that unitary designs can be implemented in $\QAC^0_f$, we first consider implementing the Clifford operator $\mathit{C}$.\ A folklore construction from measurement-based quantum computing implements any Clifford circuit in constant depth, using mid-circuit measurements and classical feedforward. This construction can be easily translated into a $\QAC^0_f$ circuit. For completeness, we provide a proof in \Cref{app:Clifford}.
\begin{lemma}
    \label{lem:C}
    Let $C$ be any $n$-qubit Clifford Circuit.\ $C$ is implemented by a $\QAC^0_f$ circuit with $O(n^3)$ ancillae.
\end{lemma}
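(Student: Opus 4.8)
The plan is to follow the measurement-based quantum computing (MBQC) route hinted at above, and then invoke the principle of deferred measurement to obtain a genuinely unitary $\QAC^0_f$ circuit. First I would recall that any $n$-qubit Clifford $C$ can be realized by the standard pattern: prepare a graph (cluster) state on $\Poly(n)$ qubits by applying $H$ to each qubit and a network of \texttt{CZ} gates along the graph edges; measure a designated set of qubits in single-qubit Pauli bases; and apply Pauli byproduct corrections to the $n$ output qubits. The crucial structural feature---special to Clifford circuits---is that all the measurements are in \emph{fixed}, non-adaptive bases, and the Pauli correction applied to each output qubit is a fixed $\mathbb{F}_2$-linear function (a parity) of all the measurement outcomes. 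This holds because Pauli operators propagate through Clifford gates to Pauli operators, so the entire byproduct ``frame'' is governed by the symplectic (hence $\mathbb{F}_2$-linear) action of $C$; consequently no measurement angle ever needs to be updated based on an earlier outcome, and every correction is determined in one shot.

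Next I would translate this pattern into a unitary circuit. By the principle of deferred measurement, I keep the would-be-measured qubits coherent and replace each ``measure, then classically control a Pauli correction'' by a coherent controlled-Pauli whose control is the relevant parity of the coherent outcome qubits. The graph state is prepared in constant depth: the layer of $H$ gates is depth $1$, and the commuting \texttt{CZ} network is scheduled into $O(1)$ layers of disjoint \texttt{CZ} gates (for a bounded-degree cluster graph a constant edge-coloring suffices), each \texttt{CZ} being a \texttt{CNOT} (a single-control \texttt{TOFFOLI}) conjugated by single-qubit Hadamards, hence available in $\QAC^0_f$. The single-qubit basis rotations preceding the virtual measurements form another depth-$1$ layer.

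The corrections are then implemented entirely with parities. For each output qubit I compute the controlling parity of the coherent outcome qubits into a fresh ancilla using \texttt{FANOUT}/\texttt{PARITY}---which is constant depth in $\QAC^0_f$ by \Cref{fact:par_fanout} and \Cref{thm:tc0_power}---fanning out each outcome qubit to as many copies as the number of parities in which it participates, and then apply the controlled-$X$ and/or controlled-$Z$ from the parity ancilla onto the output. Because the pattern is purely Clifford, after the (coherent) corrections the non-output qubits decouple into a fixed product state, which a final single-qubit layer plus a \texttt{FANOUT} uncomputation returns to $\ket{0}$, so that the circuit genuinely implements $C$ in the sense of \Cref{def:approximplement}. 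Counting resources: the cluster and measured qubits number $\Poly(n)$ (at most $O(n^2)$ for a size-$O(n^2)$ Clifford), and computing the $O(n)$ output corrections, each a parity over $O(n^2)$ outcome bits, together with the fan-out copies, uses $O(n^3)$ ancillae; the whole circuit has constant depth.

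The main obstacle---and the step that genuinely requires care---is justifying that the classical feedforward collapses to a \emph{single} round of parity computations, rather than a chain of adaptive corrections whose depth scales with the depth of $C$. This is exactly the Clifford-specific phenomenon described above: because byproduct Paulis commute through Clifford gates and remain Paulis, the net correction on each output wire is a fixed linear function of all measurement outcomes. Making this precise amounts to tracking the stabilizer frame through the MBQC pattern and reading off the $O(n)$ linear correction functions explicitly; once that is done, everything else (graph-state preparation, single-qubit rotations, \texttt{FANOUT}-based parities, and the ancilla count) is routine constant-depth $\QAC^0_f$ bookkeeping. As an alternative that sidesteps explicit cluster-state patterns, one could instead invoke the Aaronson--Gottesman canonical form, which writes $C$ as an $O(1)$-length alternation of Hadamard, phase, \texttt{CZ}, and \texttt{CNOT} stages: each \texttt{CNOT} stage is a linear reversible map $\ket{x}\mapsto\ket{Mx}$ implemented by computing $Mx$ into ancillae and uncomputing $\ket{x}$ via $M^{-1}$ (two constant-depth parity layers), and each \texttt{CZ} stage is a degree-$2$ phase polynomial implemented with \texttt{TOFFOLI} and \texttt{PARITY}, again in constant depth with $\Poly(n)$ ancillae.
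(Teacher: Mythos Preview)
Your main proposal is correct and follows essentially the same measurement-based route as the paper: the paper uses Bell-pair gate teleportation (after first flattening $C$ to an $O(n)$-depth 1D nearest-neighbor Clifford circuit via~\cite{Bravyi_2021}) rather than a cluster-state pattern, but the structure is identical---coherent Pauli corrections that are fixed $\mathbb{F}_2$-linear functions of $O(n^2)$ ``outcome'' bits, computed by fanning those bits out to $2n$ copies, taking a \texttt{PARITY} in each, applying the controlled Paulis, and uncomputing, for $O(n^3)$ ancillae in total. Your alternative Aaronson--Gottesman canonical-form argument is a genuinely different (and also valid) route that the paper does not take.
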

Next, we consider the ensembles of phase oracles $\mathit{F}$.\ Just like random unitaries, a random boolean function requires exponentially many gates to implement, so suitable derandomizations must be chosen in order to obtain efficient circuits.\ For $t$-query adversaries, the following theorem from Zhandry \cite{doi:10.1142/S0219749915500148} allows us to substitute a $2t$-wise independent function family for the oracle in $\mathit{F}$ in place of a truly random function:  
\begin{lemma}[\cite{doi:10.1142/S0219749915500148}]
    \label{lem:zhandry}
    Let $Q$ be any quantum algorithm making at most $t$ queries to $\mathit{F}$. The behavior of $Q$ is unchanged if we replace $\mathit{F}$ with a $2t$-wise independent function family (of phase oracles) $\overline{\mathit{F}}$. 
\end{lemma}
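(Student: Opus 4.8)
The plan is to follow Zhandry's polynomial-method argument \cite{doi:10.1142/S0219749915500148}, which shows that the expected output of any $t$-query algorithm is a low-degree polynomial in the ``truth-table indicators'' of the oracle. First I would model $Q$ as an alternating sequence of oracle-independent unitaries $U_0, U_1, \dots, U_t$ interleaved with $t$ phase-oracle queries $O_f : \ket{x} \mapsto (-1)^{f(x)}\ket{x}$, where $f : \zo^n \to \zo$. The key rewriting is to express the action of a single query on an amplitude supported on $\ket{x}$ as
\[
(-1)^{f(x)} = \sum_{b \in \zo} (-1)^b \, \mathbb{1}[f(x) = b],
\]
so that each query multiplies the amplitude of each basis component by a degree-one polynomial in the indicator variables $\{\mathbb{1}[f(x)=b]\}_{x,b}$.

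Next I would prove by induction on the number of queries that, after the $j$-th query, the amplitude of every basis state in $Q$'s global state is a polynomial of degree at most $j$ in these indicator variables; the oracle-independent unitaries $U_i$ do not raise the degree, since they only form fixed linear combinations of amplitudes. Hence after all $t$ queries the final amplitudes have degree at most $t$. The probability that $Q$ outputs $1$ is obtained by projecting onto the accepting subspace and taking a squared magnitude, i.e.\ multiplying an amplitude polynomial by its complex conjugate; this yields a real polynomial $p_Q$ of degree at most $2t$ in the indicator variables.

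Finally I would take the expectation of $p_Q$ over the choice of $f$. Expanding $p_Q$ into monomials, each monomial of degree at most $2t$ involves at most $2t$ distinct indicator variables, and therefore references the values $f(x)$ at at most $2t$ distinct inputs $x$. The expectation of such a monomial depends only on the joint distribution of $f$ restricted to those inputs. Because a $2t$-wise independent family matches the uniform distribution on every tuple of at most $2t$ inputs, each monomial has the same expectation under $\overline{\mathit{F}}$ as under a truly random function, and summing over monomials gives $\ev_{f \sim \overline{\mathit{F}}}[p_Q(f)] = \ev_{f \text{ random}}[p_Q(f)]$, which is exactly the claim.

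The main obstacle---and really the only delicate point---is the degree bookkeeping: one must verify carefully that a query contributes degree exactly one (not more) to the amplitude polynomial, accounting for the idempotence relation $\mathbb{1}[f(x)=b]^2 = \mathbb{1}[f(x)=b]$ and the orthogonality relation $\mathbb{1}[f(x)=0]\,\mathbb{1}[f(x)=1] = 0$, both of which can only lower the degree, so that the final degree-$2t$ bound genuinely holds. Once the degree bound is established, the reduction to $2t$-wise independence is immediate. I would also stress that ``behavior unchanged'' should be read as equality of the \emph{expected} output distribution over the random draw of the oracle, since for any single fixed $f$ the two families are of course distinct.
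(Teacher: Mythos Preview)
Your argument is correct and is precisely Zhandry's original polynomial-method proof. Note, however, that the paper does not give its own proof of this lemma: it is stated with a citation to \cite{doi:10.1142/S0219749915500148} and used as a black box. So there is nothing in the paper to compare against; you have simply reconstructed the cited result, and your reconstruction is faithful to the source.
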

Moreover, $\overline{\mathit{F}}$ can be instantiated in $\TC^0$ due to the following Theorem of Healy and Viola \cite{10.1007/11672142_55}, who gave $\TC^0$ circuits for a well-known family of $t$-wise independent functions:
\begin{lemma}[\cite{10.1007/11672142_55}]
    \label{lem:finite_field}
    Let $\mathbb{F}_{2^s}$ be the finite field with $2^s$ elements. For any $t <2^s$, the following is a family of $t$-wise independent functions: 
    $$\overline{\mathit{F}} = \left\{f(x) = \left(\sum_{i=0}^{t-1} a_ix^i\right): (a_0, \dots, a_{t-1}) \in \mathbb{F}_{2^s}^t \right\} $$
    Furthermore, any function $f: \{0, 1\}^s \rightarrow \{0, 1\}$ from this family can be implemented by a $\TC^0$ circuit with $\Poly(s, t)$ gates.
\end{lemma}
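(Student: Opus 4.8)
The plan is to prove the two assertions separately. For $t$-wise independence I would fix any $t$ distinct points $x_1,\dots,x_t \in \mathbb{F}_{2^s}$ (possible since $t < 2^s$) and observe that the map sending a coefficient vector $(a_0,\dots,a_{t-1})$ to the value vector $(f(x_1),\dots,f(x_t))$ is $\mathbb{F}_{2^s}$-linear, given by the $t\times t$ Vandermonde matrix $V$ with $V_{ij}=x_i^{\,j}$ for $0\le j \le t-1$. Since the $x_i$ are distinct, $\det V = \prod_{i<j}(x_j-x_i)\neq 0$, so $V$ is invertible and the map is a bijection of $\mathbb{F}_{2^s}^t$. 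Hence a uniformly random coefficient vector produces a uniformly random value vector, giving \emph{exact} $t$-wise independence (error $\eps = 0$), and restricting to any single output coordinate preserves this.

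For the circuit claim I would fix the representation $\mathbb{F}_{2^s}\cong \mathbb{F}_2[z]/(p(z))$ for a fixed degree-$s$ irreducible $p$, so that field elements are $s$-bit strings. Field addition is then bitwise XOR, and iterated addition of $m$ elements computes, in each coordinate, the parity of $m$ bits, which is in $\TC^0$. A single field multiplication factors as polynomial multiplication over $\mathbb{F}_2$---whose $k$-th coefficient $\bigoplus_{i+j=k} a_i b_j$ is a parity of at most $s$ \texttt{AND}s, hence $\TC^0$---followed by reduction modulo $p(z)$, a fixed $\mathbb{F}_2$-linear map on the $2s-1$ product coefficients and therefore also $\TC^0$. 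The remaining step is computing the powers $x^0,x^1,\dots,x^{t-1}$: naive repeated squaring costs $\Theta(\log t)$ depth, so to stay constant-depth I would invoke the central technical result of Healy and Viola~\cite{10.1007/11672142_55}, that iterated multiplication of $\Poly$-many elements of $\mathbb{F}_{2^s}$ lies in (uniform) $\TC^0$; evaluating $x^i$ in parallel for every $i$ then yields all powers in one constant-depth block. With the powers available, the products $a_i x^i$ are computed by parallel single multiplications and summed by one iterated addition, so the entire evaluation is a composition of $O(1)$ many $\TC^0$ blocks of total size $\Poly(s,t)$, and $\TC^0$ is closed under such composition.

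The main obstacle is precisely the iterated-multiplication step. Addition, iterated addition, and a single multiplication all reduce transparently to parity and fixed $\mathbb{F}_2$-linear maps, and are immediately seen to be constant-depth; the difficulty is entirely in evaluating high powers (equivalently, products of many field elements) without incurring logarithmic depth. This is the heart of the Healy--Viola construction, which handles it by embedding field arithmetic into modular integer arithmetic and appealing to the classical fact that iterated \emph{integer} multiplication is in $\TC^0$ (Hesse--Allender--Barrington). I would treat that ingredient as a black box and focus the remaining effort on checking that the reductions above are uniform and have the claimed $\Poly(s,t)$ size.
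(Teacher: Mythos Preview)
The paper itself gives no proof of this lemma; it is imported wholesale from Healy and Viola as a black box. Your proposal is a correct reconstruction of their argument: the Vandermonde bijection yields exact $t$-wise independence, and the $\TC^0$ claim decomposes into field addition (bitwise parity), single multiplication (parities of \texttt{AND}s followed by a fixed $\mathbb{F}_2$-linear reduction), and iterated multiplication to obtain the powers $x^i$. You correctly isolate iterated field multiplication as the only non-elementary step and treat it as the black box supplied by the cited reference; this is exactly how Healy and Viola organize the construction, so there is nothing to compare against in the present paper and no gap to flag.
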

Finally, we consider the random permutation $P$. To our knowledge, there are no known constructions of (even approximate) $t$-wise independent permutations in $\TC^0$ for $t \in \omega(1)$ (for $t = O(1)$, the $2$-round substitution-permutation network of Liu et al.\ \cite{cryptoeprint:2024/083} suffices).\ Fortunately, we can use the Luby-Rackoff inspired construction of Cui, Schuster, Brandão, and Huang \cite{cui2025unitarydesignsnearlyoptimal}, which proves that the permutation $\overline{P}= S_LS_R$, is statistically indistinguishable from random up to inverse exponential error, even when inverse queries are allowed\footnote{As noted in \cite{cui2025unitarydesignsnearlyoptimal}, this will be further analyzed in an upcoming follow-up work.}) when applied together with the $F$ and $C$ ensembles above. $S_L$ and $S_R$ are defined as follows:  
\begin{equation*}
    S_L\ket{x_1 \, \| \, x_2} = \ket{x_1 \oplus f_L(x_2) \, \| \,x_2},\;\; S_R\ket{x_1 \, \| \,x_2} = \ket{x_1 \, \| \, x_2\oplus f_R(x_1)}
\end{equation*}
where $f_L$ and $f_R$ are random functions on $n/2$ bits. Again using \Cref{lem:zhandry}, we can replace $f_L$ and $f_R$ with $2t$-wise independent functions using \Cref{lem:finite_field}. Computing the required \texttt{XOR} of the first half of the input into the second can be implemented in $\AC^0 \subset \TC^0$, so $\overline{P}$ is implementable in $\TC^0$ as desired. Putting together the previous claims in this section with \Cref{thm:tc0_power}, we conclude that any unitary from $C\overline{P} \overline{F}C$ can be implemented in $\QAC^0_f$. Therefore,
\begin{theorem}
    \label{thm:fdesign}
   Strong $\eps$-approximate unitary $t$-designs can be implemented by $\Poly(n, t)$-size $\QAC^0_f$ circuits, with $\eps = \exp(-{\Omega(n)})$.
\end{theorem}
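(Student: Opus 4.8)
The plan is to realize the strong design as a single fully derandomized instance of the $CPFC$ ensemble, namely $C\,\overline{P}\,\overline{F}\,C$, and to verify two largely independent things: that this ensemble remains a strong $\exp(-\Omega(n))$-approximate $t$-design, and that each of its four factors admits a constant-depth, $\Poly(n,t)$-size $\QAC^0_f$ circuit. The key indexing $\{U_{n,k}\}$ is just the classical randomness needed to sample the two independent Cliffords, the coefficients of the degree-$(2t-1)$ polynomial defining $\overline{F}$, and the coefficients of the Feistel functions $f_L,f_R$ defining $\overline{P}$; each of these uses only $\Poly(n,t)$ random bits, so the ensemble is efficiently sampleable.

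For the design quality I would argue by a short hybrid chain, keeping every step secure against inverse queries so that the final conclusion is \emph{strong}. First, \Cref{thm:cpfc} gives that the ideal $CPFC$ ensemble is a strong $O(t^2/2^{n/8})$-approximate $t$-design, which is $\exp(-\Omega(n))$ for any $t=\exp(o(n))$, in particular $t=\Poly(n)$. Next, replacing the uniform permutation $P$ by the two-round Luby--Rackoff network $\overline{P}=S_L S_R$ costs only inverse-exponential error, even under inverse queries, by the analysis of Cui, Schuster, Brand\~{a}o, and Huang; since each of the $t$ queries to $U$ or $U^\dagger$ applies $\overline{P}$ or its inverse once, $2t$-wise independent $f_L,f_R$ suffice here by \Cref{lem:zhandry}. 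Finally, replacing the truly random phase oracle $F$ by a $2t$-wise independent $\overline{F}$ incurs \emph{no} additional error: $F$ is diagonal, hence self-inverse, so $t$ queries to $\{U,U^\dagger\}$ induce at most $t$ calls to the single oracle $F$, and \Cref{lem:zhandry} then makes the two views identical. Summing the (at most two) inverse-exponential contributions keeps $\eps=\exp(-\Omega(n))$.

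It remains to place each factor in $\QAC^0_f$ and compose. The Cliffords are handled directly by \Cref{lem:C}, at the cost of $O(n^3)$ ancillae. For $\overline{F}$, each output bit of the Healy--Viola polynomial is a $\TC^0$ function by \Cref{lem:finite_field}, so the phase oracle $\ket{x}\mapsto(-1)^{\overline{f}(x)}\ket{x}$ is in $\QAC^0_f$ by \Cref{thm:tc0_power}. For $\overline{P}$, I would implement each Feistel round $S_L:\ket{x_1\,\|\,x_2}\mapsto\ket{x_1\oplus f_L(x_2)\,\|\,x_2}$ (and symmetrically $S_R$) by observing that every bit of $f_L$ is again $\TC^0$-computable; since $\mathtt{AND}$ composes with $\TC^0$, the controlled phase oracle $\ket{x}\ket{b}\mapsto(-1)^{\,b\cdot (f_L(x))_i}\ket{x}\ket{b}$ lies in $\QAC^0_f$ by \Cref{thm:tc0_power}, and conjugating the target qubit by Hadamards converts it into the in-place bit-flip oracle that $\mathtt{XOR}$s $(f_L(x_2))_i$ into $(x_1)_i$. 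Running these over all bits of $x_1$ in parallel, using \texttt{FANOUT} to distribute copies of $x_2$ and to uncompute scratch registers, realizes $S_L$ in constant $\QAC^0_f$ depth with $\Poly(n,t)$ gates. \Cref{lem:compose} then composes the four constant-depth factors into a single $\QAC^0_f$ circuit of constant depth and $\Poly(n,t)$ size.

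The main obstacle I anticipate is the $\QAC^0_f$ realization of $\overline{P}$ rather than the error bookkeeping: one must turn the phase-oracle guarantee of \Cref{thm:tc0_power} into an in-place register-$\mathtt{XOR}$ operation and verify that performing this for all output bits, together with the \texttt{FANOUT}-based copying and uncomputation of the $f_L,f_R$ scratch registers, stays within constant depth and leaves the untouched half of the register intact. The remaining subtlety is purely conceptual---tracking that every derandomization step survives \emph{inverse} queries, which is exactly what upgrades the conclusion from a standard to a strong $t$-design.
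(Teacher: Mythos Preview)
Your proposal is correct and follows essentially the same approach as the paper: instantiate the strong $CPFC$ design of \Cref{thm:cpfc}, derandomize $F$ via \Cref{lem:zhandry} and \Cref{lem:finite_field}, derandomize $P$ via the two-round Luby--Rackoff network of Cui et al., and place each factor in $\QAC^0_f$ using \Cref{lem:C} and \Cref{thm:tc0_power}. Your write-up is in fact more careful than the paper's on the one point you flag as the main obstacle---the paper simply asserts that ``$\overline{P}$ is implementable in $\TC^0$'' and appeals to \Cref{thm:tc0_power}, whereas you spell out the Hadamard-conjugation step that turns the phase oracle guaranteed by \Cref{thm:tc0_power} into the in-place $\mathtt{XOR}$ needed for $S_L,S_R$, together with the \texttt{FANOUT}-based copying of $x_2$ to run all output bits in parallel.
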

\subsubsection{Pseudorandom Unitaries} \label{sec:pseudorandom_units}
To instantiate pseudorandom unitaries in $\QAC^0_f$, we can repeat the analysis for unitary $t$-designs, using a quantum-secure pseudorandom function $\mathit{F^*}$ used instead of a $2t$-wise independent function.\ A natural candidate $F^*$ is the Ring-LWE pseudorandom function proposed by Banerjee, Peikert, and Rosen~\cite{cryptoeprint:2011/401} and further analyzed by Zhandry \cite{10.1145/3450745}. This (candidate) quantum-secure pseudorandom function is keyed by a matrix $A \in \mathbb{Z}_{p}^{m \times n}$ and $l$ different $n \times n$ matrices $\{S_i\}_{i=1}^l$ over $\mathbb{Z}_q$, and acts as follows:
\begin{equation}
    \label{eq:lweprf}
    \text{PRF}_{(\mathbf{A}, \{S_i\}_{i=1}^l)}(x) = \left[ \mathbf{A}^t \prod_{i=1}^{\ell} S_i^{x_i} \right]_p
\end{equation}
For certain reasonable choices of $p,q,m,l$,~\cite{cryptoeprint:2011/401} show that this PRF is secure and assuming that the Learning With Errors (LWE)~\cite{regev2024latticeslearningerrorsrandom} problem is post-quantum secure, a standard assumption in post-quantum cryptography. Moreover, ~\cite{cryptoeprint:2011/401} also show that \Cref{eq:lweprf} can be typecast into a boolean function by using standard rounding techniques, while maintaining precision guarantees. Moreover, they also prove that this function can be implemented in $\TC^0$:
\begin{lemma}[\cite{cryptoeprint:2011/401, 10.1145/3450745}]
      Assuming post-quantum security of LWE The function in \Cref{eq:lweprf} is $\eps_0(n)$-secure against $t(n)$-time for quantum adversaries, the above function family is $\eps_0(n)/\Poly(n)$-secure against $t(n)$-time adversaries.\ Here, $\eps_0(n)$ is a negligible function which depends the strength of the LWE assumption, and $t(n)$ is superpolynomial in $n$. 
\end{lemma}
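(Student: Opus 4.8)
The plan is to split the statement into its two logically independent halves and prove each with a different toolkit: (i) the cryptographic claim that the rounded version of \Cref{eq:lweprf} is a quantum-secure pseudorandom function under post-quantum LWE, and (ii) the circuit-complexity claim that this Boolean function lies in $\TC^0$. Only the second half is genuinely needed for the surrounding construction (it feeds into \Cref{thm:tc0_power} to produce a $\QAC^0_f$ phase oracle), while the first is a direct appeal to prior work; nevertheless I would record both.

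For the security half I would follow the Banerjee--Peikert--Rosen analysis~\cite{cryptoeprint:2011/401}. The crucial structural fact is that $\prod_{i=1}^{\ell} S_i^{x_i}$ is a \emph{subset product} of the secret matrices, so one argues security by a hybrid walk over the $\ell$ input coordinates: at the $j$-th hybrid one replaces the partial product $\mathbf{A}^t\prod_{i\le j}S_i^{x_i}$ by a fresh uniform matrix, and indistinguishability of consecutive hybrids is precisely a decisional (Ring-)LWE instance, costing at most the LWE advantage $\eps_0(n)$ per step. To lift this from classical security to security against adversaries making \emph{quantum} (superposition) queries I would invoke Zhandry's framework~\cite{10.1145/3450745}; this is the point at which post-quantum hardness of LWE, rather than merely classical hardness, is used. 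Accounting for the conversion of the $\mathbb{Z}_p$-valued output into a Boolean function (reading it out coordinate by coordinate) and the telescoping over the $\ell=\Poly(n)$ coordinates then yields a bound of the stated $\eps_0(n)/\Poly(n)$ form, with the polynomial factor tracking both the number of hybrid steps and the read-out cost.

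For the $\TC^0$ half the evaluation reduces to two constant-depth threshold subcircuits. First, the iterated matrix product $\prod_{i=1}^{\ell}S_i^{x_i}$ over $\mathbb{Z}_q$: each factor is either $I$ or $S_i$, selected by $x_i$ through a single multiplexing layer, and the iterated product of $\Poly(n)$ matrices of fixed dimension over $\mathbb{Z}_q$ sits in $\TC^0$ because every entry of the product is a bounded sum of iterated integer products, and both iterated integer addition and iterated integer multiplication are known to be in $\TC^0$. Second, the final rounding $[\,\cdot\,]_p$ is a coordinatewise comparison against evenly spaced thresholds, which is manifestly a constant-depth threshold operation. Composing these two $\TC^0$ blocks gives the advertised $\Poly(s,t)$-size $\TC^0$ circuit, which \Cref{thm:tc0_power} then promotes to a $\QAC^0_f$ phase oracle.

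The main obstacle I anticipate is the iterated matrix product: unlike a product of a constant number of matrices, a product of $\Theta(n)$ matrices over $\mathbb{Z}_q$ is not trivially in $\TC^0$, so one must appeal to the nontrivial result that iterated integer multiplication is in $\TC^0$ and, crucially, verify that the polynomial bit-lengths of the entries and of the modulus $q$ are preserved throughout, so that no super-polynomial intermediate values ever arise. A secondary subtlety is the precise error bookkeeping in the Boolean read-out, where one must confirm that the negligibility of $\eps_0$ survives the $\Poly(n)$ factor so that the reduction still certifies a genuine PRF.
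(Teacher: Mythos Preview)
The paper does not give its own proof of this lemma: it is stated purely as a citation of Banerjee--Peikert--Rosen and Zhandry, and the text immediately moves on to defining $P^*$. There is therefore nothing in the paper to compare your proposal against; the authors simply import the result as a black box.

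That said, your reconstruction of what those references establish is essentially accurate. The security half via a hybrid walk over the $\ell$ input coordinates, each step being a decisional LWE instance, together with Zhandry's framework for lifting to superposition-query adversaries, is exactly the route taken in the cited works. Your $\TC^0$ sketch---iterated matrix product over $\mathbb{Z}_q$ via the $\TC^0$ iterated-multiplication result, followed by a threshold rounding layer---is also the standard argument. Note, though, that the lemma as stated in the paper only concerns security; the $\TC^0$ implementability is asserted in the preceding sentence of prose, not inside the lemma, so strictly speaking your part (ii) addresses a separate claim.

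One small internal inconsistency: a hybrid argument with $\ell=\Poly(n)$ steps \emph{multiplies} the per-step LWE advantage by $\Poly(n)$, so the resulting PRF advantage is bounded by $\Poly(n)\cdot\eps_0(n)$, not $\eps_0(n)/\Poly(n)$. Your sentence ``the polynomial factor tracks the number of hybrid steps and the read-out cost'' is therefore describing a bound in the opposite direction from the one you claim to have derived. This appears to be inherited from the lemma's own garbled phrasing rather than a conceptual error on your part, and it is immaterial downstream since $\Poly(n)\cdot\eps_0(n)$ remains negligible.
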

Let $P^*$ be defined analogously to $\overline{P}$ except with $F^*$ in place of $\overline{F}$. 
Invoking the triangle inequality, the distinguishing advantage of any adversary that queries $CP^*F^*C$ in place of a Haar-random unitary is at most the sum of the distinguishing advantage of $F^*$,\ $P^*$,\ and the error from \Cref{thm:cpfc}:
\begin{lemma}
    Suppose that $\mathit{F^*}$ is $\eps_0(n)$ secure against $t(n)$-time adversaries. Then, the ensemble $\mathit{CP^*F^*C}$ is strongly $O(\eps_0(n)) + \eps(n))$ secure against $t(n)$-time adversaries, for $\eps(n) = \Poly(t(n))/\exp(-\Omega(n))$. 
\end{lemma}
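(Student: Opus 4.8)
\section*{Proof Proposal}

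The plan is to bound the distinguishing advantage by a hybrid argument that interpolates between $\mathit{CP^*F^*C}$ and the Haar measure, swapping out the pseudorandom components one at a time. Concretely, I would pass from $\mathit{CP^*F^*C}$ to $C\overline{P}\,\overline{F}C$ by replacing every PRF instance---the phase oracle and the two internal functions $f_L, f_R$ of $P^* = S_L S_R$---with a truly random function, and then from $C\overline{P}\,\overline{F}C$ to the Haar measure. The first stretch is computational and charged to PRF security; the second is statistical and charged to the design guarantee of \Cref{thm:cpfc} together with the Cui--Schuster--Brand\~ao--Huang analysis of $\overline{P} = S_L S_R$. Summing the per-step advantages by the triangle inequality gives $O(\eps_0(n)) + \eps(n)$, and since each of the statistical bounds is quoted against inverse-query adversaries, the resulting guarantee is \emph{strong}.

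For the statistical stretch, note that a $t(n)$-time adversary issues at most $t(n)$ forward or inverse queries. The Cui--Schuster--Brand\~ao--Huang result states that $C\overline{P}\,\overline{F}C$ with truly random internal functions is indistinguishable from Haar up to inverse-exponential error even under inverse access, and \Cref{thm:cpfc} supplies the $\exp(-\Omega(n))$ bound for the underlying $CPFC$ ensemble. Instantiating both with $t = t(n)$ contributes $\eps(n) = \Poly(t(n))\cdot\exp(-\Omega(n))$.

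The computational stretch is where the real work lies. I would split it into three sub-hybrids, one per PRF instance, and for each give a reduction converting a successful $t(n)$-time distinguisher $Q$ of the unitary ensembles into a distinguisher $B$ for $F^*$ versus a random function. Holding an oracle $\mathcal{O} \in \{F^*, \text{random}\}$, the reduction $B$ answers each of $Q$'s queries to $U$ or $U^\dagger$ by coherently applying the layers of $\mathit{CP^*F^*C}$: it samples the Clifford keys itself, realizes the tested component through calls to $\mathcal{O}$, and realizes the remaining (already-swapped) components with a $2t$-wise independent function---which, by \Cref{lem:zhandry}, is indistinguishable from truly random to $Q$ and, unlike a genuinely random function, is efficiently sampleable. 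Since each of $S_L$, $S_R$, and the phase oracle is self-inverse, $U^\dagger$ is built from the same ingredients in reverse order, so inverse queries cost $B$ nothing extra.

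The main obstacle is making this reduction go through \emph{quantumly}: $B$ must evaluate $\mathcal{O}$ in superposition to realize the phase and XOR layers inside each $U$-query, so one must invoke the \emph{quantum} security of $F^*$ (against superposition-query adversaries), and one must verify that $B$'s overhead is only polynomial---each unitary query induces $O(1)$ oracle calls, so $B$ runs in $\Poly(t(n))$ time, which remains within the superpolynomial regime where $F^*$ is $\eps_0(n)$-secure. With each of the three PRF instances contributing at most $\eps_0(n)$, the computational stretch contributes $O(\eps_0(n))$, completing the bound.
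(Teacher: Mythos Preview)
Your proposal is correct and is precisely the hybrid argument the paper gestures at with its one-sentence ``invoking the triangle inequality'' justification; you have simply spelled out the standard details---the per-PRF reductions, the use of $2t$-wise independence (via \Cref{lem:zhandry}) to keep those reductions efficient, and the self-inverse structure of $S_L$, $S_R$, and the phase oracle for handling inverse queries---that the paper leaves implicit. One minor imprecision: the reduction $B$ runs in time $t(n) + \Poly(n)$ rather than $\Poly(t(n))$, so strictly speaking the conclusion is security against $(t(n) - \Poly(n))$-time adversaries, but this slack is conventionally absorbed and the paper does not track it either.
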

Putting together the above claims again with \Cref{thm:tc0_power}, we conclude the following: 
\begin{theorem}
    \label{thm:fpru}
    Suppose the Learning With Errors (LWE) problem is $\eps_0(n)$-secure against $t(n)$-time for quantum adversaries. Then, there exists an ensemble of unitaries implementable by $\Poly(n)$-size $\QAC^0_f$ circuits, which is strongly $\Theta(\eps_0(n)/\Poly(n) + \Poly(t(n))/\exp(-\Omega(n)))$-secure against $t(n)$-time adversaries. 
\end{theorem}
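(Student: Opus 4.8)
The plan is to take the exact-design construction behind \Cref{thm:fdesign} and replace its information-theoretic randomness sources with computationally pseudorandom ones. Concretely, I would start from the $CPFC$ ensemble, whose statistical closeness to Haar is guaranteed by \Cref{thm:cpfc}, and substitute the random phase oracle $F$ by the Ring-LWE phase oracle $F^*$ of \Cref{eq:lweprf} and the random permutation $P$ by its Luby--Rackoff surrogate $P^* = S_L S_R$ with pseudorandom round functions, yielding the ensemble $CP^*F^*C$. Since each of $C$, $P^*$, and $F^*$ will be shown to lie in $\QAC^0_f$, the product circuit is a constant-depth, polynomial-size $\QAC^0_f$ circuit, so the only real content is (i) bounding the distinguishing advantage against a time-$t(n)$ adversary and (ii) verifying the three $\QAC^0_f$ implementations.

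For the security bound I would run a short hybrid argument via the triangle inequality, interpolating $\text{Haar} \to CPFC \to CP^*F^*C$. The first hop is purely statistical: a time-$t(n)$ adversary makes at most $\Poly(t(n))$ forward or inverse queries, so \Cref{thm:cpfc} contributes error $\eps(n) = \Poly(t(n))/\exp(\Omega(n))$. The second hop is where pseudorandomness enters: any efficient distinguisher between $CPFC$ and $CP^*F^*C$ can be converted, by a standard reduction, into an efficient distinguisher for $F^*$ (and the round functions of $P^*$) against a truly random function. The reduction simulates the Clifford layers $C$ and the permutation and phase structure using its own polynomial-time gates while forwarding each evaluation of the unknown function to its oracle; an inverse query to the ensemble is answered by running this simulation in reverse. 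Because only a constant number of pseudorandom instances are swapped in, and the LWE PRF loses only a $\Poly(n)$ factor in its reduction to LWE, this hop contributes $O(\eps_0(n)/\Poly(n))$. Summing the two hops yields the claimed $\Theta(\eps_0(n)/\Poly(n) + \Poly(t(n))/\exp(\Omega(n)))$ bound.

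For the circuit, I would assemble the three pieces already available in the excerpt. The Clifford factors are placed in $\QAC^0_f$ by \Cref{lem:C}, using $O(n^3)$ ancillae. The phase oracle $U_{F^*} \colon \ket{x} \mapsto (-1)^{F^*(x)}\ket{x}$ lies in $\QAC^0_f$ because $F^*$ is a $\TC^0$ function (by the cited implementation of \Cref{eq:lweprf}) and $U_f$ is realizable in $\QAC^0_f$ for every $f \in \TC^0$ by \Cref{thm:tc0_power}. For $P^*$, each Luby--Rackoff layer $S_L, S_R$ computes an in-place XOR of a pseudorandom-function value; the XOR is already in $\AC^0 \subset \TC^0$ and the round function is a $\TC^0$ PRF, so \Cref{thm:tc0_power} again places each layer in $\QAC^0_f$. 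Composing all factors keeps the depth constant and the total size $\Poly(n)$.

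The main obstacle is the second hybrid hop: making the reduction from the black-box unitary ensemble to the PRF distinguisher rigorous when the adversary is allowed \emph{adaptive, two-sided} (forward and inverse) queries. One must check that the reduction faithfully simulates each inverse query $(CP^*F^*C)^\dagger$ using only oracle access to the function---reversing $S_L, S_R$ and the phase is clean, but one has to confirm that the pseudorandomness of $F^*$ survives this adjoint usage and that the Luby--Rackoff surrogate $P^*$ stays indistinguishable from a random permutation under inverse queries (the latter leaning on the permutation analysis underlying \Cref{thm:cpfc}). Once the reduction is arranged so that the query count and runtime grow by only polynomial factors, the remaining error accounting is routine.
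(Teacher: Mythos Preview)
Your proposal is correct and follows essentially the same approach as the paper. The paper's proof is just the one-line observation that ``the distinguishing advantage of any adversary that queries $CP^*F^*C$ in place of a Haar-random unitary is at most the sum of the distinguishing advantage of $F^*$, $P^*$, and the error from \Cref{thm:cpfc}'', followed by invoking \Cref{thm:tc0_power} and \Cref{lem:C} for the $\QAC^0_f$ implementation; your hybrid $\text{Haar} \to CPFC \to CP^*F^*C$ and your three circuit-implementation pieces are exactly these ingredients, and the inverse-query obstacle you flag is precisely what the paper handles by appealing to the Cui--Schuster--Brand\~ao--Huang analysis of the two-round Luby--Rackoff surrogate.
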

Taking $t$ to be both subexponential and superpolynomial in $n$, and $\eps_0$ to be any negligible function of $n$, we obtain an implementation of strong PRUs by $\QAC^0_f$ circuits. Moreover, assuming subexponential post-quantum security of LWE, i.e. that there exists some $c >0$ such that $\eps_0(n) = 1/2^{n^c}$-security holds all $t(n) = 2^{n^c}$-time algorithms, we obtain subexponentially secure strong PRUs by plugging in the relevant parameters. 
\begin{corollary}
    \label{cor:subexp_pru}
    Suppose that LWE has subexponential post-quantum security\footnote{This was the assumption in \cite{schuster2025randomunitariesextremelylow}.}. Then, there exists an ensemble of unitaries implementable by $\Poly(n)$-size $\QAC^0_f$ circuits, which is strongly $\Theta(1/2^{n^{c^\prime}})$-secure against $2^{n^{c^\prime}}$-time adversaries, for some $c^\prime \in (0, 1)$.
\end{corollary}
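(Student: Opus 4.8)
The plan is to obtain this corollary purely by instantiating \Cref{thm:fpru} with a carefully chosen pair of security parameters; essentially no new analysis is required beyond bookkeeping on the exponents. First I would unpack the subexponential post-quantum security assumption, which supplies a constant $c > 0$ such that LWE is $\exp(-n^c)$-secure against every $\exp(n^c)$-time quantum adversary. \Cref{thm:fpru} then guarantees that the ensemble $CP^*F^*C$, implementable by $\Poly(n)$-size $\QAC^0_f$ circuits, is strongly $\Theta\!\left(\eps_0(n)/\Poly(n) + \Poly(t(n)) \cdot \exp(-\Omega(n))\right)$-secure against $t(n)$-time adversaries whenever LWE is $\eps_0(n)$-secure against $t(n)$-time adversaries, where the second error term is the statistical error of the $CPFC$ ensemble from \Cref{thm:cpfc}.

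The one genuine subtlety is that I cannot simply plug in $t(n) = \exp(n^c)$ and $\eps_0(n) = \exp(-n^c)$: the second error term carries a factor $\Poly(t(n)) = \exp(O(n^c))$, and if $c \geq 1$ this blowup overwhelms the $\exp(-\Omega(n))$ decay. To avoid this I would throttle the adversary's running time. Set $c' = \tfrac{1}{2}\min(c,1) \in (0,1)$ and instantiate \Cref{thm:fpru} with $t(n) = \exp(n^{c'})$ and $\eps_0(n) = \exp(-n^c)$. Since $c' \leq c$, security against the weaker class of $\exp(n^{c'})$-time adversaries is implied by the assumed security against the stronger $\exp(n^c)$-time class, so this instantiation is legitimate.

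It then remains to bound the two error terms against the target $\exp(-n^{c'})$. For the first, $\eps_0(n)/\Poly(n) = \exp(-n^c)/\Poly(n) \leq \exp(-n^c) \leq \exp(-n^{c'})$, using $c' \leq c$. For the second, $\Poly(t(n)) \cdot \exp(-\Omega(n)) = \exp(O(n^{c'}) - \Omega(n))$; because $c' < 1$ strictly, the linear term dominates $n^{c'}$, so for all sufficiently large $n$ this is at most $\exp(-\Omega(n)) \leq \exp(-n^{c'})$. Summing the two contributions gives total distinguishing advantage $\Theta(\exp(-n^{c'}))$ against $\exp(n^{c'})$-time adversaries, with $c' \in (0,1)$ and the circuit size unchanged at $\Poly(n)$, which is exactly the claim.

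The main obstacle, such as it is, will be recognizing the constraint $c' < 1$ forced by the statistical-error term and choosing $t(n)$ accordingly; everything else is a routine substitution into \Cref{thm:fpru} followed by comparing exponents. I would emphasize that the slack in the first term (dividing $\eps_0$ by a polynomial only helps) and the strict inequality $c' < 1$ in the second both land comfortably on the correct side, so no sharp estimates are needed.
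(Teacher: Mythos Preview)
Your proposal is correct and follows exactly the paper's approach: the paper simply states that one obtains the corollary ``by plugging in the relevant parameters'' into \Cref{thm:fpru}, while you carry out that substitution explicitly and correctly identify the constraint $c' < 1$ needed so that the $\Poly(t(n))\cdot\exp(-\Omega(n))$ term does not blow up. Your bookkeeping is in fact more careful than the paper's one-line justification.
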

We use $c^\prime$ to emphasize that this constant may differ from the constant $c$ in the preceding paragraph. 

Similar to the case of unitary designs, directly implementing (quantum-secure) pseudorandom permutations in $\QAC^0_f$ is challenging, as typical classical constructions for pseudorandom permutations fail in the quantum setting \cite{zhandry2016notequantumsecureprps, kuwakado}, so we rely on the Luby-Rackoff inspired construction to implement the permutation.

\subsection{Random Unitaries in \texorpdfstring{$\QAC^0$}{}}
To bootstrap constructions of (pseudo)random unitaries in $\QAC^0_f$ to $\QAC^0$, we will leverage the recently discovered ``Gluing Lemma", proved by Schuster, Haferkamp, and Huang~\cite{schuster2025randomunitariesextremelylow}.\ The Gluing Lemma is an extremely powerful statement, which allows random unitaries on $n$ qubits to be ``glued'' from several copies of smaller $O(\text{poly}\log(n))$ sized unitaries\footnote{A similar ``gluing'' result was proved by LaRacuente and Leditzky \cite{laracuente2024approximateunitarykdesignsshallow} to obtain low-depth unitary $t$-designs. For our purposes, it will be more convenient to use the constructions in \cite{schuster2025randomunitariesextremelylow}.}—see \Cref{fig:gluing} for an illustration. In particular, we have the two following statements, taken from and proved in~\cite{schuster2025randomunitariesextremelylow}:
\begin{theorem}[Gluing Unitary $t$-designs, \cite{schuster2025randomunitariesextremelylow} Appendix B.3.]
\label{thm:gluingdesigns}
    Consider a two layer, $n$-qubit brickwork circuit with local patch size $\ell$. Suppose each gate in the brickwork is independently drawn from an $\eps/n$-approximate unitary $t$-design. Then, as long as $\ell \ge \Theta(\log(nt^2/\eps))$, the overall ensemble forms an $\eps$-approximate unitary $t$-design on $n$ qubits. 
\end{theorem}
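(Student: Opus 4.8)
The plan is to pass to the $t$-th moment (super)operator and show it is $\eps$-close, in the relevant completely-bounded (``measurable-error'') norm, to the global Haar moment operator. Writing $\Phi^{(t)}$ for the channel $X \mapsto \ev_{U \sim \mathcal{E}}[U^{\otimes t} X (U^\dagger)^{\otimes t}]$ on the $t$-replica space, the approximate-design condition is equivalent to a bound on $\norm{\Phi^{(t)}_{\mathrm{circuit}} - \Phi^{(t)}_{\mathrm{Haar}}}$ in this norm. My first move is a hybrid (telescoping) argument: replace the gates of the brickwork one at a time by exactly Haar-random gates on the same patch. Since the relevant norm is monotone under composition and tensoring, and each gate is drawn from an $\eps/n$-approximate design, each swap costs at most $\eps/n$; with only $2n/\ell$ gates in the two layers this accumulates to $O(\eps/\ell) \le \eps$. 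This reduces everything to the \emph{exact} statement: the two-layer Haar brickwork is an $O(\eps)$-approximate $t$-design once $\ell \ge \Theta(\log(nt^2/\eps))$.

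For the exact statement I would use Schur--Weyl duality. Averaging a patch gate over Haar projects the $t$-replica operator onto that patch's commutant, spanned by the permutation operators $\{R(\pi)\}_{\pi \in S_t}$; so the circuit's moment operator factorizes as $M = L_2 L_1$, where $L_i = \bigotimes_{g \in \mathrm{layer}\,i} \Pi_g$ is the (commuting, hence projector-valued) product of local Haar projectors in layer $i$. The key structural observation is that the global Haar moment operator $\Pi$ is exactly the projector onto $\operatorname{range}(L_1) \cap \operatorname{range}(L_2)$: an operator lying in every patch-local commutant commutes with all patch-supported unitaries, and since the two offset layers of patches connect the whole register, it commutes with all of $U(2^n)^{\otimes t}$, i.e.\ it lies in the global commutant. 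Because $\Pi \preceq L_1, L_2$, one has $M - \Pi = L_2 L_1 (I - \Pi)$, so it suffices to bound $\norm{L_2 L_1 (I - \Pi)}$, i.e.\ to show that every replica operator supported on the local commutants but orthogonal to the \emph{global} commutant is strongly contracted by $L_2$.

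The mechanism for this contraction --- and the quantitative heart of the argument --- is a \emph{domain-wall suppression}. A vector in $\operatorname{range}(L_1)$ assigns a permutation $\pi_i \in S_t$ to each layer-1 patch $A_i$; orthogonality to the global commutant forces the assignment to be non-constant, so there is at least one boundary where $\pi_i \neq \pi_{i+1}$. Each layer-2 patch $B_i$ straddles such a boundary, and projecting onto its commutant multiplies the amplitude by an overlap of permutation operators of the form $\inn{R_{B_i}(\sigma)}{R(\pi_i)\otimes R(\pi_{i+1})}$, which is suppressed by at least $2^{-\ell/2}$ whenever $\pi_i \neq \pi_{i+1}$ (distinct permutation operators on $2^{\ell/2}$ dimensions have normalized overlap at most $2^{-\ell/2}$). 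I would bound $\norm{L_2 L_1(I-\Pi)}^2$ by summing the squared contributions over all non-constant configurations, organizing the sum by the number and locations of domain walls: there are $O(n/\ell)$ boundary positions and at most $t!$ permutation labels per patch, and each wall contributes a factor $2^{-\ell}$. Choosing $2^\ell \ge nt^2/\eps$ makes this sum at most $\eps$.

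The step I expect to be the main obstacle is exactly this last counting/summation, for two intertwined reasons. First, the permutation operators $\{R(\pi)\}$ are not orthonormal --- their Gram matrix deviates from the identity by $O(t^2/2^\ell)$ --- so one must replace the naive ``amplitude per configuration'' picture with a genuine bound using the inverse Gram (Weingarten) matrix, and show these non-orthogonality corrections are themselves absorbed by the condition $2^\ell \gtrsim t^2$. Second, the contributions from configurations with many domain walls, and from patches carrying permutations at large Cayley distance, must be shown not to conspire: a clean way is to set up a transfer-matrix / one-wall-at-a-time bound along the chain so that each additional wall genuinely multiplies in a $2^{-\ell}$ factor rather than merely adding. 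Finally, I would verify that the operator-norm bound on $M - \Pi$ indeed controls the measurable/diamond design error without losing uncontrolled dimension factors --- most cleanly by carrying the entire argument in that norm from the start rather than converting at the end.
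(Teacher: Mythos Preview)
The paper does not prove this theorem at all: it is quoted verbatim from Schuster, Haferkamp, and Huang and explicitly labeled as ``taken from and proved in~\cite{schuster2025randomunitariesextremelylow}''. So there is no in-paper proof to compare against; the only ``proof'' here is the citation.

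That said, your sketch is a faithful outline of the argument in the cited work. The reduction by hybrid to exact local Haar gates, the Schur--Weyl description of the layer projectors $L_1, L_2$ onto tensor products of patch commutants, the identification of the global Haar projector as the intersection of their ranges, and the domain-wall suppression at each layer-2 patch straddling a mismatch are exactly the ingredients SHH use. You also correctly isolate the two places where the analysis is delicate: (i) the permutation vectors are not orthonormal, so the naive per-configuration amplitude picture must be replaced by a Weingarten/Gram-inverse computation, with the $O(t^2/2^{\ell})$ Gram defect absorbed into the same threshold $2^{\ell}\gtrsim t^2$; and (ii) the summation over domain-wall configurations is most cleanly handled as a transfer-matrix bound along the 1D chain of patches, so that each wall genuinely contributes a multiplicative $O(t^2/2^{\ell})$ factor and the single-wall term $(n/\ell)\cdot t^2/2^{\ell}$ dominates, giving the stated threshold $\ell \ge \Theta(\log(nt^2/\eps))$. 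One point to be careful about, which you flag only obliquely at the end: the design notion in this paper is the adaptive (``measurable-error'') one, not the parallel/additive one, so the equivalence you assert in your first sentence between the design condition and a norm bound on the parallel $t$-fold twirl $\Phi^{(t)}$ is not automatic; SHH handle the adaptive case directly, and you would need to do the same rather than convert at the end.
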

\begin{theorem}[Gluing Pseudorandom Unitaries, \cite{schuster2025randomunitariesextremelylow} Appendix C.4.]
\label{thm:gluingprus}
    Consider a two layer, $n$-qubit brickwork circuit with local patch size $\ell$. Suppose each $\ell$-qubit gate in the brickwork is independently drawn from an ensemble which is $\eps(\ell)$-secure against $t(\ell)$-time adversaries.\ Then, as long as $\ell \ge \omega(\log n)$, the overall ensemble is $O(n) \cdot \eps(\ell)$-secure against $t(\ell)$-time adversaries. 
\end{theorem}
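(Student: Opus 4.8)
The plan is to prove \Cref{thm:gluingprus} by a hybrid argument that first transitions from the glued circuit with pseudorandom local patches to an idealized circuit with Haar-random local patches, and then invokes the statistical gluing lemma for designs (\Cref{thm:gluingdesigns}) to compare the all-Haar glued circuit against a genuinely Haar-random $n$-qubit unitary. Concretely, the two-layer brickwork on $n$ qubits with patch size $\ell$ contains $m = O(n/\ell)$ local gates, each acting on $\ell$ qubits. I would fix an arbitrary ordering $g_1, \dots, g_m$ of these gates and define hybrids $H_0, H_1, \dots, H_m$, where in $H_j$ the first $j$ gates are drawn Haar-randomly on $\ell$ qubits and the remaining $m-j$ gates are drawn from the local pseudorandom ensemble. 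Then $H_0$ is the glued pseudorandom ensemble of interest, and $H_m$ is the all-Haar glued ensemble.

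First I would bound the per-step distinguishing advantage. Suppose a $t(\ell)$-time adversary $\mathcal{A}$ distinguishes $H_{j-1}$ from $H_j$ with advantage $\gamma_j$. I would build a distinguisher $\mathcal{B}$ for the single local ensemble sitting at position $g_j$: the machine $\mathcal{B}$ receives oracle access to a challenge unitary that is either a sample from the local pseudorandom ensemble or an $\ell$-qubit Haar-random unitary, plants it in position $j$, samples the remaining gates according to the hybrid (namely $j-1$ Haar patches before it and $m-j$ pseudorandom patches after it), and answers each of $\mathcal{A}$'s queries by running the full brickwork with its challenge inserted. By construction $\mathcal{B}$ inherits the advantage $\gamma_j$, so local $\eps(\ell)$-security forces $\gamma_j \le \eps(\ell)$. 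Summing over the $m = O(n/\ell)$ steps with the triangle inequality gives a total advantage of at most $O(n/\ell)\cdot \eps(\ell) \le O(n)\cdot\eps(\ell)$ between $H_0$ and $H_m$.

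The hard part will be ensuring that the reduction $\mathcal{B}$ genuinely runs in time $t(\ell)$, since it must itself implement the ``environment'' patches — and the $j-1$ Haar-random patches appearing in the hybrid cannot be sampled or applied exactly in bounded time. I would resolve this by replacing each environment Haar patch with an efficiently implementable surrogate: because $\mathcal{A}$ issues at most $t(\ell)$ queries, the entire experiment touches each environment patch at most $t(\ell)$ times, so an $\ell$-qubit $t(\ell)$-design (or the local pseudorandom ensemble itself) is indistinguishable from Haar within this experiment while being efficiently implementable on $\ell$ qubits; the resulting change in advantage is negligible and can be folded into the $O(n)\cdot\eps(\ell)$ budget. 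Finally, to pass from the all-Haar glued ensemble $H_m$ to a truly Haar-random $n$-qubit unitary, I would apply \Cref{thm:gluingdesigns} with $t = t(\ell)$: the glued-Haar ensemble is an approximate design whose error is negligible once the patch size is large enough relative to the query budget, which in the relevant parameter regime is exactly what the hypothesis $\ell \ge \omega(\log n)$ encodes. Combining the computational hybrid bound with this statistical bound yields the claimed $O(n)\cdot \eps(\ell)$-security, and I expect the efficiency of the environment simulation to be the principal obstacle requiring care.
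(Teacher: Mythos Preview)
The paper does not itself prove \Cref{thm:gluingprus}; it is quoted from \cite{schuster2025randomunitariesextremelylow} (Appendix~C.4) and used as a black box, so there is no in-paper argument to compare against. Your two-phase plan---a computational hybrid swapping each of the $O(n/\ell)$ local patches from pseudorandom to Haar, followed by the statistical gluing of \Cref{thm:gluingdesigns} to pass from the all-Haar brickwork to genuine $n$-qubit Haar---is the standard shape for this kind of result and is essentially how the cited proof proceeds.

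The step you flag as ``the hard part'' is genuinely the crux, but neither of your proposed fixes closes cleanly. Replacing the Haar environment patches by $t(\ell)$-designs is not within budget: constructions like \Cref{thm:fdesign} have size $\Poly(\ell,t(\ell))$, and since $t(\ell)$ must be superpolynomial in $\ell$ for the local patches to qualify as PRUs, even one application can already exceed the reduction's $t(\ell)$ allowance. Using ``the local pseudorandom ensemble itself'' for the environment is efficient, but then the reduction no longer simulates $H_{j-1}$ versus $H_j$ (which requires $j-1$ genuinely Haar positions), so the telescoping collapses unless you insert a further layer of hybrids---and that squares the loss. The usual resolution is to hand the reduction the fixed Haar environment unitaries as additional oracles or non-uniform advice (legitimate, since they are independent of the challenge) and to accept the resulting $\Poly(n,\ell)$ loss in the time parameter; the theorem statement in the paper is slightly informal on exactly this point. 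Your final appeal to \Cref{thm:gluingdesigns} with $t = t(\ell)$ carries a parallel tension: it needs $\ell \ge \Theta(\log(n\,t(\ell)^2/\eps))$, which is not implied by $\ell \ge \omega(\log n)$ alone once $t(\ell)$ is large, and in the cited source this is handled by working in the subexponential regime with exponent $c' < 1$ (cf.\ \Cref{cor:subexp_pru}) so that the statistical term is dominated by $O(n)\cdot\eps(\ell)$.
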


\begin{figure}[ht]
    \includegraphics[width=0.9\linewidth]{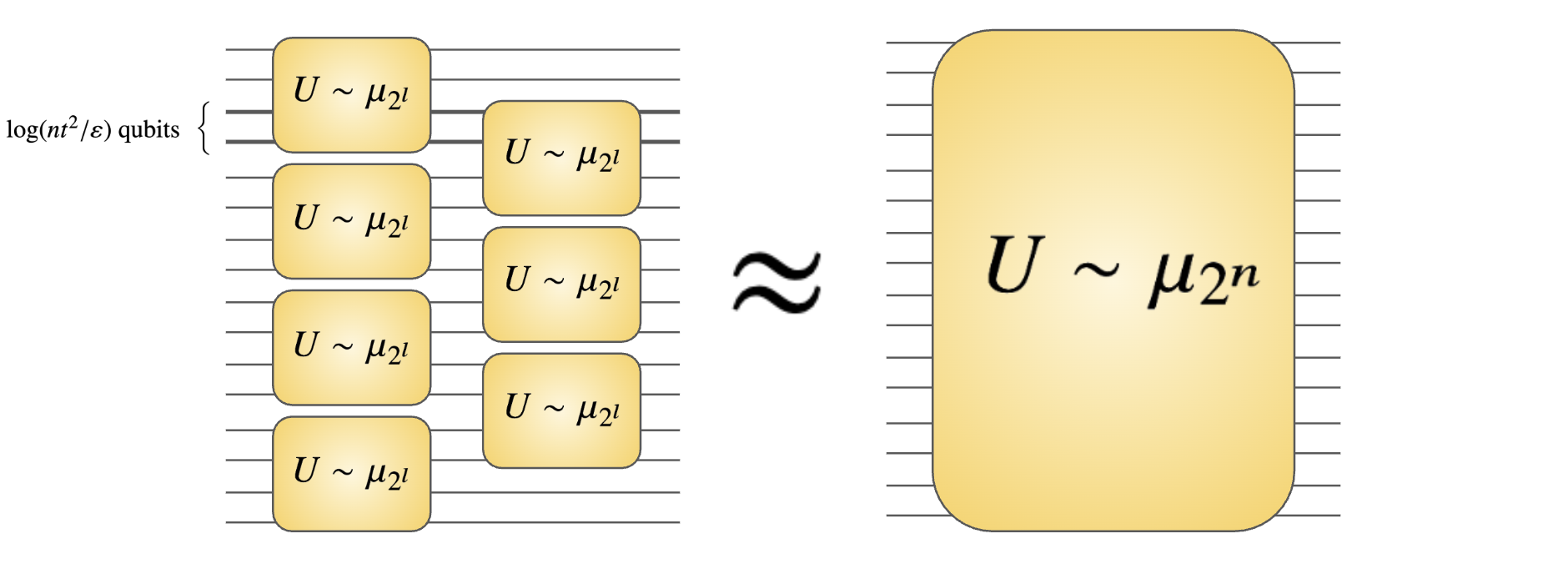}
    \caption{The Gluing Lemma of Schuster, Haferkamp, and Huang \cite{schuster2025randomunitariesextremelylow}} gives a recipe for constructing $n$-qubit random unitaries by ``gluing'' together copies of $l << n$ qubit random unitaries.\ The size $\ell$ of the local patch depends on the desired parameters of the random unitary to be constructed. 
    \label{fig:gluing}
\end{figure}
Since the brickwork circuit is only two layers, if the individual unitaries are implementable by $\QAC^0_f$ circuits, then the whole circuit is as well. Combining these with \Cref{thm:fdesign} and \Cref{thm:fpru}, we can exponentially reduce the width of the many-qubit gates required to form random unitaries (in our $\QAC^0_f$ constructions, the \texttt{FANOUT} gates act on $\Poly(n)$ qubits):
\begin{lemma}
    \label{lem:glueddesigns}
    $\eps$-approximate unitary $t$-designs can be implemented in $\QAC^0_f$, where every $\texttt{FANOUT}$ or $\texttt{TOFFOLI}$ gate acts on at most $\Poly(\log(nt^2/\eps)) \cdot \Poly(t)$ qubits.  
\end{lemma}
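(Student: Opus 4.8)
The plan is to combine the base $\QAC^0_f$ design construction of \Cref{thm:fdesign} with the Gluing Lemma for $t$-designs (\Cref{thm:gluingdesigns}). Fix the target number of qubits $n$, the design order $t$, and the target error $\eps$. Following \Cref{thm:gluingdesigns}, I set the local patch size to $\ell = \Theta(\log(nt^2/\eps))$, the minimum size for which gluing $\eps/n$-approximate $t$-designs on $\ell$-qubit patches yields an $\eps$-approximate $t$-design on $n$ qubits.

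Next I instantiate each patch gate. By \Cref{thm:fdesign}, a strong $\exp(-\Omega(\ell))$-approximate $t$-design on $\ell$ qubits is implementable by a $\QAC^0_f$ circuit of size $\Poly(\ell, t)$. Since strong designs are in particular standard designs, these are exactly the ensembles required by the Gluing Lemma. The key bookkeeping step is to check that the per-patch error is small enough: choosing the hidden constant in $\ell = \Theta(\log(nt^2/\eps))$ large enough guarantees $\exp(-\Omega(\ell)) \le \eps/n$, so every patch gate is an $\eps/n$-approximate $t$-design, as the hypothesis of \Cref{thm:gluingdesigns} demands.

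I then assemble the two-layer brickwork. Each of the two layers applies the $\ell$-qubit patch circuits to disjoint sets of qubits, so they act in parallel, and each patch circuit has constant depth by \Cref{thm:fdesign}. Two constant-depth layers compose to a constant-depth circuit, and since every gate is either a single-qubit gate, a $\texttt{TOFFOLI}$, or a $\texttt{FANOUT}$, the overall circuit lies in $\QAC^0_f$. To bound the gate width, note that within a single patch circuit every $\texttt{FANOUT}$ or $\texttt{TOFFOLI}$ gate acts on at most the total number of (input plus ancilla) qubits of that circuit, which is $\Poly(\ell, t) = \Poly(\log(nt^2/\eps), t) \le \Poly(\log(nt^2/\eps)) \cdot \Poly(t)$. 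Applying \Cref{thm:gluingdesigns} then certifies that the glued ensemble is an $\eps$-approximate $t$-design on $n$ qubits.

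There is no serious obstacle here---the argument is a clean composition---but the one point requiring care is the simultaneous satisfaction of the two lower bounds on $\ell$: the Gluing Lemma's patch-size requirement, and the condition $\exp(-\Omega(\ell)) \le \eps/n$ needed to drive the base construction's inverse-exponential error below the $\eps/n$ threshold. Both are met by a single choice $\ell = \Theta(\log(nt^2/\eps))$, and one should also confirm that the constant depth promised by \Cref{thm:fdesign} is independent of $\ell$, so that the final circuit depth remains constant as $n$ grows.
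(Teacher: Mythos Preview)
Your proposal is correct and follows essentially the same approach as the paper: the paper's own justification for \Cref{lem:glueddesigns} is simply the sentence ``Combining these with \Cref{thm:fdesign} \ldots, we can exponentially reduce the width of the many-qubit gates,'' and you have spelled out precisely that combination of \Cref{thm:fdesign} with \Cref{thm:gluingdesigns}, including the parameter check $\exp(-\Omega(\ell)) \le \eps/n$ and the gate-width bound via the $\Poly(\ell,t)$ patch size.
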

We have an analogous theorem for PRUs, combining \Cref{cor:subexp_pru} and \Cref{thm:gluingprus}:
\begin{lemma}
    \label{lem:glueprus}
    Let $k$ be any constant. Suppose that LWE has subexponential post-quantum security. Taking a local patch size of $\ell = \log^{k/c}n$ in the statement of \Cref{thm:gluingprus}, we obtain an ensemble of unitaries which is $O(n) \cdot 1/2^{\log^{1/c}n}$-secure against $2^{\log^{k}n}$-time adversaries. Furthermore, every $\texttt{FANOUT}$ or $\texttt{TOFFOLI}$ gate acts on at most $\Poly \log^{k}n$ qubits. 
\end{lemma}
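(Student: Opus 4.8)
The plan is to run the subexponentially-secure $\QAC^0_f$ pseudorandom unitary construction of \Cref{cor:subexp_pru} on small $\ell$-qubit patches, and then stitch these patches into an $n$-qubit ensemble using the two-layer brickwork of the PRU Gluing Lemma (\Cref{thm:gluingprus}); the only real decision is the patch size $\ell$, which I choose to hit the advertised time and security parameters. Concretely, I would first instantiate \Cref{cor:subexp_pru} at the patch scale: under subexponential post-quantum LWE---with $c$ the exponent of the assumption---the $\ell$-qubit ensemble $CP^*F^*C$ is $\Theta(2^{-\ell^{c}})$-secure against $2^{\ell^{c}}$-time adversaries and, crucially, is realized by a $\Poly(\ell)$-size $\QAC^0_f$ circuit whose every $\texttt{FANOUT}$/$\texttt{TOFFOLI}$ gate touches at most $\Poly(\ell)$ qubits. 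I then set $\ell = \log^{k/c} n$, so that $\ell^{c} = \log^{k} n$ and each patch is secure against $2^{\log^{k} n}$-time adversaries. Since $c \in (0,1)$ and $k$ is a positive constant, we have $k/c > 1$, hence $\ell = \log^{k/c} n = \omega(\log n)$, which is exactly the hypothesis $\ell \ge \omega(\log n)$ required by \Cref{thm:gluingprus}.

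Next I would feed this patch ensemble into \Cref{thm:gluingprus}. That theorem immediately produces an $n$-qubit ensemble that is $O(n) \cdot \eps(\ell)$-secure against $2^{\log^{k} n}$-time adversaries, where $\eps(\ell) = \Theta(2^{-\ell^{c}})$ is the patch error furnished by \Cref{cor:subexp_pru} (equivalently, read off from \Cref{thm:fpru}); substituting $\ell = \log^{k/c} n$ and folding the accumulated loss into the exponent yields the claimed negligible, $O(n)$-degraded security bound $O(n)\cdot 2^{-\log^{1/c} n}$. For the gate-width claim, I observe that the glued object is a \emph{two-layer} brickwork: it consists of only two rounds of the $\ell$-qubit $\QAC^0_f$ patch circuits, separated by a fixed qubit relabeling that re-aligns the patches (pure wiring, and hence free). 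The composition therefore remains constant depth, and no gate is widened beyond a single patch, so every $\texttt{FANOUT}$/$\texttt{TOFFOLI}$ gate still acts on at most $\Poly(\ell) = \Poly\log^{k} n$ qubits.

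There is no genuine obstacle here---the statement is a parameter instantiation of two black boxes---so what remains is purely bookkeeping, and I expect this to be the only delicate part. Two points require care: (i) confirming that $\ell = \log^{k/c} n$ lands in the regime $\ell = \omega(\log n)$ demanded by the Gluing Lemma, which holds precisely because $c < 1$ forces $k/c > 1$; and (ii) tracking how the polynomial security loss recorded in \Cref{thm:fpru} together with the extra $O(n)$ factor from gluing combine into the final exponent. Both are routine, since subexponential LWE leaves ample slack to absorb polynomial factors inside the exponent; in particular the dominant contribution is the LWE term $2^{-\ell^{c}}$, with the statistical $\Poly(t)\cdot 2^{-\Omega(\ell)}$ error from \Cref{thm:fpru} being strictly smaller for $c<1$, so it can be safely neglected when reporting the bound.
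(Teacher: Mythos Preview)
Your approach is exactly the one the paper takes: the lemma is stated there as an immediate consequence of combining \Cref{cor:subexp_pru} (instantiated on $\ell$-qubit patches) with the PRU Gluing Lemma (\Cref{thm:gluingprus}), and the paper gives no further argument beyond that sentence. Your write-up spells out precisely this combination, including the verification that $\ell = \log^{k/c} n = \omega(\log n)$ (using $c<1$) and the observation that the two-layer brickwork keeps every many-qubit gate confined to a single $\Poly(\ell)$-size patch, so there is nothing to add.
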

Now that the width of the many-qubit gates has been reduced, our general strategy will be to substitute every $\texttt{FANOUT}$ gate with an approximating $\QAC^0$ circuit.

Rosenthal proved that \texttt{PARITY} (and therefore also \texttt{FANOUT}) is equivalent (up to a $\QAC^0$ reduction to preparing a $n$-\textit{nekomata} state, or a state of the form $\frac{\ket{0^n, \psi_0} + \ket{1^n, \psi_1}}{\sqrt{2}}$. A state is called an \textit{$\eps$-approximate $n$-nekomata} if it has fidelity at least $1 - \eps$ with some $n$-nekomata. In particular, Rosenthal showed the following:
\begin{lemma}
    \label{lem:neko}
    Let $C$ be a circuit which constructs an $\eps$-approximate nekomata from the all-zero state. Then, there exists a $\QAC^0$ circuit, which makes queries to $C$ and $C^\dagger$, which $1 - O(\eps)$-implements the \texttt{PARITY} \textit{unitary} operation.
\end{lemma}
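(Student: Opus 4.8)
The plan is to implement the \PARITY unitary directly, using the nekomata produced by $C$ as a coherent resource while employing $C$ and $C^\dagger$ to prepare and uncompute the garbage. (By \Cref{fact:par_fanout} this is equivalent up to single-qubit Hadamard conjugation to \texttt{FANOUT}, so either target would do.) Write the exact nekomata as $\ket{\nu}=\tfrac{1}{\sqrt2}\big(\ket{0^n}_a\ket{\psi_0}_j+\ket{1^n}_a\ket{\psi_1}_j\big)$, where $a$ is the $n$-qubit ``cat register'', $j$ holds the garbage, and $C\ket{0}=\ket{\nu}$ in the exact case. The goal is the map $\ket{d_1,\dots,d_n}\ket{t}\mapsto\ket{d}\ket{t\oplus p}$ with $p=\bigoplus_i d_i$, using only single-qubit gates, \texttt{TOFFOLI}/\texttt{CNOT}/$\cz$ gates, and $O(1)$ queries to $C,C^\dagger$.

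The core gadget encodes $p$ into the relative phase of the cat register in a \emph{garbage-independent} way. After applying $C$, I would apply a transversal $\cz$ between each data qubit $d_i$ and cat qubit $a_i$; this is depth one and produces $\ket{d}\ket{t}\cdot\tfrac{1}{\sqrt2}\big(\ket{0^n}_a\ket{\psi_0}+(-1)^p\ket{1^n}_a\ket{\psi_1}\big)$, since the phase $(-1)^{\sum_i d_i}=(-1)^p$ is acquired only on the $\ket{1^n}$ branch. Crucially this phase multiplies the \emph{entire} $\ket{1^n}\ket{\psi_1}$ branch, so it is insensitive to the garbage; equivalently the ancilla is $Z_{a_1}^{\,p}\,C\ket{0}$. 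I would then apply $C^\dagger$, which ``aligns'' the two branches and converts the relative phase into a computational flag: the ancilla becomes $W^p\ket{0}$ with $W:=C^\dagger Z_{a_1}C$. In the idealized perfect-cat case ($\ket{\psi_0}=\ket{\psi_1}$, $C$ the standard cat preparation) one checks $W=X_{a_1}$, so the ancilla is exactly $\ket{p}_{a_1}\ket{0}$; a single \texttt{CNOT} from $a_1$ into $t$ writes $p$ into the target, after which re-applying $C$, the transversal $\cz$, and a final $C^\dagger$ restores the ancilla to $\ket{0}$ and removes the phase, leaving precisely the \PARITY action on $(d,t)$.

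The main obstacle is the garbage. For a general nekomata with $\ket{\psi_0}\neq\ket{\psi_1}$ the operator $W=C^\dagger Z_{a_1}C$ is no longer a clean single-qubit flip, and $W\ket{0}$ is merely some state orthogonal to $\ket{0}$. Reading the parity out of this flag while uncomputing the garbage is the delicate part, and it is exactly here that the coherence of the cat register and the \emph{matched} pair $C,C^\dagger$ are essential: one cannot act ``branch by branch,'' since any operation distinguishing $\ket{0^n}\ket{\psi_0}$ from $\ket{1^n}\ket{\psi_1}$ would collapse the superposition, and if it succeeded it would place \texttt{FANOUT} in $\QAC^0$ outright. I would follow Rosenthal's argument to show that the phase-to-flag conversion together with $C^\dagger$ coherently disentangles $\ket{\psi_0},\ket{\psi_1}$ so that, up to the single parity phase, the garbage is returned to $\ket{0}$; this is the step I expect to require the most care.

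Finally, for the approximate case I would track fidelities. The same circuit with an exact nekomata oracle implements \PARITY\ exactly, so it remains to bound the error from replacing each of the $O(1)$ oracle calls by the $\eps$-approximate $C$. Writing $\ket{\eta}=C\ket{0}=\sqrt{1-\eps}\,\ket{\nu}+\sqrt{\eps}\,\ket{\nu^\perp}$, the key point is that the oracle calls occur in matched $C/C^\dagger$ pairs: since all intermediate gates are unitary and $C^\dagger$ exactly inverts $C$, the leading $O(\sqrt\eps)$ deviation enters the overlap with the ideal output only through a cross term of the form $\sqrt{\eps}\,\bra{\mathrm{ideal}}(\cdots)\ket{\nu^\perp}$, which carries a \emph{second} factor of $\sqrt{\eps}$ because the ideal evolution preserves orthogonality to $\ket{\nu^\perp}$. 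Hence the overlap with $\texttt{PARITY}\ket{d,t}\otimes\ket{0}$ is $1-O(\eps)$, and by the Fuchs--van de Graaf inequality the circuit $1-O(\eps)$-implements \PARITY\ in the sense of \Cref{def:approximplement}; conjugating by Hadamards transfers the same guarantee to \texttt{FANOUT}.
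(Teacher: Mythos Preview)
The paper does not prove this lemma; it is stated as a result of Rosenthal and cited without proof. Your outline---prepare the nekomata with $C$, encode the parity as a relative phase via transversal $\cz$, apply $C^\dagger$, copy the parity to the target, then uncompute with $C,\cz,C^\dagger$---is the correct skeleton, and your observation that after $C^\dagger$ the ancilla is either $\ket{0}$ (when $p=0$) or a state orthogonal to $\ket{0}$ (when $p=1$) is exactly the crux.

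The gap is that you stop there and defer to ``Rosenthal's argument,'' which is circular since that \emph{is} the result being proved. The missing step is not subtle: since the two possible ancilla states are $\ket{0}$ versus something supported entirely on nonzero strings, a single many-qubit \texttt{OR} of \emph{all} ancilla qubits into the target suffices---it flips $t$ on every nonzero-ancilla branch, i.e.\ exactly when $p=1$, regardless of the garbage $\ket{\psi_0},\ket{\psi_1}$. The \texttt{OR} gate is in $\QAC^0$ (it is $X$-conjugated \texttt{TOFFOLI} plus an $X$ on the target). No further ``disentangling'' is needed: because $V:=C^\dagger\cdot\cz\cdot C$ is self-inverse, $V\cdot\texttt{OR}\cdot V$ returns the ancilla to $\ket{0}$ automatically in the exact case. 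Your error heuristic is also not the right explanation. The matching $C/C^\dagger$ is what makes the exact case work cleanly, but the reason the approximate case yields $O(\eps)$ rather than $O(\sqrt\eps)$ error is that in the expansion of $\alpha_d:=\bra{\eta}\prod_iZ_{a_i}^{d_i}\ket{\eta}$ the $O(\sqrt\eps)$ cross terms \emph{vanish} when $p=0$ (since $\prod_iZ_{a_i}^{d_i}$ fixes $\ket{\nu}$, which is orthogonal to $\ket{\nu^\perp}$), giving $\alpha_d=1-O(\eps)$ directly; when $p=1$ the cross terms survive, but there one only needs $|\alpha_d|^2=O(\eps)$, which they give.
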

Rosenthal also showed that exponential-sized $\QAC^0$ circuits exist, giving the first $\mu$-implementation\footnote{We use $\mu$ to distinguish from the $t$-design approximation error $\eps$.} of \texttt{PARITY} in $\QAC^0$ with exponentially many ancillae. Rosenthal's original construction was further analyzed by Grier and Morris~\cite{grier2024quantumthresholdpowerful}, whose result we will use in what follows.
\begin{fact}[\cite{grier2024quantumthresholdpowerful}]
    \label{fact:grier}
    A $w$-qubit nekomata gate can be $\mu$-approximated, with $\mu = 1/2^{\Theta(w)}$, by a constant-depth $\QAC^0$ circuit $C$ that uses $\exp(\Theta(w))$ many ancillae. Therefore, a width-$w$ \texttt{PARITY} unitary operator can be $ 1/2^{\Theta(w)}$-implemented by a $\QAC^0$ circuit with $\exp(\Theta(w))$ ancillae. 
\end{fact}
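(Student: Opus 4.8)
The plan is to prove the two assertions in order, since the second follows cheaply from the first: once we have a constant-depth $\QAC^0$ circuit $C$ preparing an $\mu$-approximate $w$-nekomata with $\mu = 1/2^{\Theta(w)}$, \Cref{lem:neko} immediately upgrades $C$ (together with $C^\dagger$, which is another $\QAC^0$ circuit of the same size and depth) into a $\QAC^0$ circuit that $(1-O(\mu))$-implements the width-$w$ \texttt{PARITY} unitary, incurring only a constant-factor blowup in depth and ancillae. So essentially all the work lives in the first assertion: a constant-depth, exponentially wide $\QAC^0$ circuit producing a clean approximate cat correlation.

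Next I would isolate the central difficulty. A genuine $w$-nekomata $\frac{1}{\sqrt 2}\big(\ket{0^w}\ket{\psi_0} + \ket{1^w}\ket{\psi_1}\big)$ requires the first $w$ qubits to sit in an \emph{exact} cat state, entangled with a single control degree of freedom, while all approximation residue is confined to a disjoint garbage register. Building this correlation is essentially an approximate \texttt{FANOUT} of one control qubit onto $w$ targets, and this is exactly the operation $\QAC^0$ is conjectured unable to perform exactly (by \Cref{fact:par_fanout}, exact \texttt{FANOUT} is equivalent to \texttt{PARITY}). Indeed both the naive forward route (Hadamard the targets, then try to select the all-$0$/all-$1$ branches) and the backward route (uncompute a cat state) reduce to performing \texttt{FANOUT}/\texttt{PARITY} in constant depth without the gate, since copying a control onto $w$ targets with two-qubit \texttt{CNOT}s costs depth $w$. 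This circularity is what forces the construction to be (i) only approximate and (ii) exponentially wide.

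The construction I would then give follows Rosenthal's cat-state factory as refined by Grier and Morris. The idea is to replace exact broadcasting of the control by an approximate broadcast through a constant-depth tree of generalized \texttt{TOFFOLI} gates (and \texttt{OR}, built from \texttt{TOFFOLI} and \texttt{NOT}) whose total width is $N = \exp(\Theta(w)) \approx 1/\mu$ ancillae. Concretely, one spreads the control into a large superposition over the $N$ ancillae using only single-qubit gates (depth $1$), then applies a constant number of layers of many-control Toffolis that imprint the control value onto each of the $w$ output qubits in parallel, and finally uncomputes enough of the ancilla register that the output register collapses, up to garbage, onto $\{\ket{0^w},\ket{1^w}\}$. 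The per-qubit error, the probability that the imprinting ``misses'' or ``collides,'' should be $O(1/N)$ by a counting/variance argument, so the global overlap with an exact nekomata is $1 - O(w/N) = 1 - 1/2^{\Theta(w)}$; this is the source of the inverse-linear size-versus-error tradeoff (size $\propto 1/\mu$).

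The step I expect to be the main obstacle is the quantitative fidelity analysis of this gadget: showing that the residual amplitude outside the cat subspace is genuinely $2^{-\Theta(w)}$ rather than merely $o(1)$, and that the leftover correlations factor cleanly into a disjoint garbage register so the output is a bona fide nekomata with a clean first register. Keeping the depth constant while keeping the $w$ target qubits clean, so that the ``missed/collided'' events are pushed entirely into the ancillae and never corrupt the first register, is the delicate part, and it is exactly where I would lean on the careful amplitude bookkeeping of \cite{grier2024quantumthresholdpowerful}. Given that analysis, the $\mu = 1/2^{\Theta(w)}$, $\exp(\Theta(w))$-ancilla bound follows, and the \texttt{PARITY} corollary then drops out via \Cref{lem:neko}.
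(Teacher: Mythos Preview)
The paper does not actually prove this statement: it is stated as a \emph{Fact} with a citation to \cite{grier2024quantumthresholdpowerful} and is used as a black box. The only surrounding text is the sentence ``Rosenthal's original construction was further analyzed by Grier and Morris~\cite{grier2024quantumthresholdpowerful}, whose result we will use in what follows,'' after which the fact is quoted. So there is no in-paper proof to compare your proposal against.

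That said, your plan is a reasonable sketch of what the cited construction does, and your reduction of the second assertion to the first via \Cref{lem:neko} is exactly right (and is the only part of the fact that the paper could be said to justify internally). Your identification of the key obstacle---that the construction must be approximate and exponentially wide precisely because exact constant-depth \texttt{FANOUT} would already give \texttt{PARITY}---is also correct in spirit. If you intend to actually reconstruct the Grier--Morris analysis, be aware that your description of the gadget (``spread the control into a large superposition \ldots\ apply layers of many-control Toffolis that imprint the control value onto each of the $w$ output qubits'') is somewhat impressionistic; the actual construction and its fidelity bound are more involved than a per-qubit $O(1/N)$ miss probability argument, and the $2^{-\Theta(w)}$ scaling comes from a specific analysis in \cite{grier2024quantumthresholdpowerful} rather than the heuristic you give. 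But since the paper simply imports this as a known result, a detailed reproof is not expected here.
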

This construction can be slightly improved in two ways while remaining in $\QAC^0$. First, we observe that constant-sample amplification can be performed in $\QAC^0$:
We observe that this construction can be slightly improved in $\QAC^0$:
\begin{corollary}
    \label{cor:first_par_improvement}
    For any constant $d$, a width-$w$ \texttt{PARITY} unitary operator can be $ 1/2^{\Theta(dw)}$-implemented by a $O(\log d)$-depth $\QAC^0$ circuit with $d \cdot \exp(\Theta(w))$ ancillae. 
\end{corollary}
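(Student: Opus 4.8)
The plan is to boost the single-shot circuit of \Cref{fact:grier} by a coherent majority vote over $d$ independent copies. I would instantiate $d$ disjoint copies $C_1,\dots,C_d$ of the nekomata-based \PARITY circuit, where copy $C_i$ writes the parity into a fresh target bit $t_i$ and draws on its own block of $\exp(\Theta(w))$ ancillae $a_i$, and I would insist that the input qubits enter every copy \emph{only} as \texttt{TOFFOLI} controls. This last requirement is the crucial structural point: since controls are left invariant in the computational basis, on a fixed basis input $\ket{x}$ each $C_i$ returns the data register exactly to $\ket{x}$ and acts nontrivially only on the disjoint registers $(t_i,a_i)$. Consequently the post-copy state factorizes across $i$, so the errors of the $d$ copies are genuinely \emph{independent} rather than identically correlated, which is precisely what makes amplification possible from a worst-case fidelity bound.

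Granting this independence, I would analyze the amplification on computational-basis inputs. Writing each copy's output as $\ket{\phi_i}=\ket{p}_{t_i}\ket{A_i}+\ket{\bar p}_{t_i}\ket{B_i}$, where $p=\bigoplus_j x_j$ and $\norm{B_i}^2\le\mu$ is the weight on the wrong parity, the product state $\bigotimes_i\ket{\phi_i}$ carries total weight at most $\sum_{\abs{S}\ge\lceil d/2\rceil}\mu^{\abs{S}}=O\!\left(\mu^{\lceil d/2\rceil}\right)$ on the branches where a majority of the $t_i$ disagree with $p$. I would then compute $\mathrm{MAJ}(t_1,\dots,t_d)$ into the genuine target; because $d$ is a constant, $\mathrm{MAJ}$ is a fixed finite Boolean function, computable exactly by a depth-$O(\log d)$ circuit of \texttt{TOFFOLI} and single-qubit gates and hence inside $\QAC^0$. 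After XORing the majority bit into the target I would uncompute the majority scratch and apply $C_1^\dagger,\dots,C_d^\dagger$ to restore every $t_i$ and $a_i$ to $\ket{0}$. Counting resources gives $d\cdot\exp(\Theta(w))$ ancillae and, since $2^d\,\mu^{\lceil d/2\rceil}=1/2^{\Theta(dw)}$ once $w$ dominates the constant $d$, implementation error $1/2^{\Theta(dw)}$.

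To upgrade the per-basis estimate to the worst-case-over-states guarantee of \Cref{def:approximplement}, I would use that \PARITY merely permutes the computational basis and that the ancilla registers are prepared input-independently, so the per-basis fidelity bound lifts to a uniform bound on $\abs{\bra{\psi,0}C^\dagger(\texttt{PARITY}\ket{\psi}\ot\ket{0})}^2$. The main obstacle is the coherent analysis of the uncomputation: naively invoking the composition bound (\Cref{lem:compose}) on the forward copies, the majority, and the inverse copies would only yield additive error $O(d\mu)$ and wash out the amplification entirely. The delicate step is to show that the $O(\mu^{\lceil d/2\rceil})$ weight residing on minority-error branches cannot be coherently re-amplified by the inverse copies $C_i^\dagger$; I expect to control this by tracking the overlap $\bra{f}U_{\mathrm{amp}}\ket{\mathrm{in}}$ directly---commuting the inverse copies onto the target vector $\ket{f}$ so that the estimate collapses back to the forward, factorized bound already in hand---rather than by black-box composition. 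A secondary subtlety is realizing the control-only access to the inputs, on which the factorization (and thus independence) rests, while keeping the $d$ reads within depth $O(\log d)$; I anticipate sidestepping the serial-read issue by amplifying at the input-free nekomata-preparation stage, so that the inputs are read only once by the reduction of \Cref{lem:neko}.
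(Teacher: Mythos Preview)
Your amplification-by-majority strategy matches the paper's, but you work harder than necessary by having all $d$ copies read the \emph{same} physical input qubits. You already note that $\mathrm{MAJ}$ on $d$ bits sits in $\QAC^0$ because $d$ is constant; the identical reasoning applies to $d$-target \texttt{FANOUT}. The paper simply fans each of the $w$ input bits out to $d$ fresh copies (a single layer of width-$d$ \texttt{FANOUT} gates, depth $O(\log d)$), runs the $d$ Grier--Morris circuits on genuinely disjoint registers, takes \texttt{MAJORITY} of the $d$ targets, and uncomputes. With disjoint registers, independence of the copies is automatic---there is no need for the ``controls-only'' structural hypothesis on the Grier--Morris circuit, no serial-read depth obstacle, and no fallback to amplifying at the nekomata-preparation stage (which you leave unspecified in any case).

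Your concern that the inverse copies in the uncomputation could wash out the amplification is legitimate, and the paper does gloss over it (it simply invokes a ``standard Chernoff bound''). But once the copies act on disjoint registers the worry dissolves cleanly: the full amplified circuit has the symmetric form $W^\dagger G W$, where $W=(\bigotimes_i C_i)\cdot\texttt{FANOUT}$ does not touch the true target $t$, and $G=\texttt{MAJ}^{-1}\cdot\texttt{XOR}\cdot\texttt{MAJ}$ is diagonal in the copy-targets $t_1,\dots,t_d$. Hence for a basis input,
\[
\bra{x,t\oplus p,0}\,W^\dagger G W\,\ket{x,t,0}
=\bra{\phi,t\oplus p}\,G\,\ket{\phi,t}
=\norm{\Pi_{\mathrm{maj}(t_1,\dots,t_d)=p}\,\ket{\phi}}^2,
\]
where $\ket{\phi}=\bigotimes_i C_i\ket{x,0,0}$ is the \emph{same} product state on both sides. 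Your own tensor expansion then gives the $1-2^d\mu^{\lceil d/2\rceil}=1-1/2^{\Theta(dw)}$ bound directly---no delicate commutation of $C_i^\dagger$ onto the target vector is required, because the symmetry already placed it there.
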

\begin{proof}
    Since $\QAC^0$ contains reversible \texttt{AND} and \texttt{OR} gates with constant fan-in, $d$-qubit \texttt{FANOUT} and \texttt{MAJORITY} gates can be exactly implemented by $\QAC^0$ circuits with $O(\log(d))$ ancillae\footnote{\url{https://cstheory.stackexchange.com/questions/21386/circuit-complexity-of-majority-function}}. Therefore, as a preprocessing step to the Morris-Grier constructions, we first apply a width-$O(d)$ \texttt{FANOUT} gate to the input state. On each copy, we apply the Grier-Morris construction. Thus, we obtain $O(d)$ independent \texttt{PARITY} computations, each correct with probability $1/2^{\Theta(w)}$. Finally, we can take a width-$O(d)$ \texttt{MAJORITY} of the outputs to obtain the final output of the circuit, then uncompute the intermediate steps as necessary. By a standard Chernoff bound, this decreases the error in the \texttt{PARITY} computation from $1/2^{\Theta(w)}$ to $1/2^{\Theta(dw)}$.
\end{proof}
Second, we use the downward self-reducibility of \texttt{PARITY}: 
\begin{figure}[ht]
\centering
\begin{tikzpicture}[scale=0.8, transform shape]
\node (tree1) at (0,0) {
\begin{forest}
for tree={
  if n children=0{
    content={$\vdots$},
    no edge
  }{
    mynode,
    edge={very thick, -{Stealth}},
    inner sep=1pt,
    s sep=12mm,
    l=6mm,
    content={$\texttt{PARITY}_{w}$}
  },
  grow'=south,
  anchor=center,
  parent anchor=south,
  child anchor=north
}
[$\texttt{PARITY}_{w}$
  [$\texttt{PARITY}_{w}$
    [$\texttt{PARITY}_{w}$
      [$\vdots$]
      [$\vdots$]
    ]
    [$\texttt{PARITY}_{w}$
      [$\vdots$]
      [$\vdots$]
    ]
  ]
  [$\texttt{PARITY}_{w}$
    [$\texttt{PARITY}_{w}$
      [$\vdots$]
      [$\vdots$]
    ]
    [$\texttt{PARITY}_{w}$
      [$\vdots$]
      [$\vdots$]
    ]
  ]
]
\end{forest}
};

\node (equals) at (6,0) {\huge $=$};

\node[mynode] (tree2) at (9,0) {$\texttt{PARITY}_{w^d}$};
\end{tikzpicture}

\caption{\texttt{PARITY} is downward self-reducible: a depth-$d$ tree with fan-in \( w \) at each layer computes \texttt{PARITY} on $w^d$ qubits.}
\label{fig:par_tree}
\end{figure}
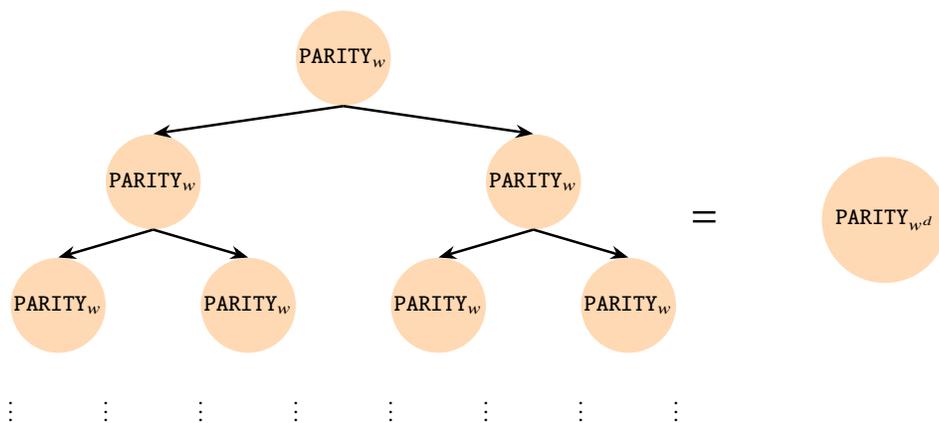
\begin{corollary}
    \label{cor:parity_improved}
    For any constant $d$, a width-$w^d$ \texttt{PARITY} unitary operator can be $ w^d/2^{\Theta(dw)}$-implemented by a $O(d\log d)$-depth $\QAC^0$ circuit with $dw^d \cdot \exp(\Theta(w))$ ancillae. 
\end{corollary}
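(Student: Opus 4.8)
The plan is to prove \Cref{cor:parity_improved} by implementing the width-$w^d$ \texttt{PARITY} unitary as a depth-$d$ tree of width-$w$ \texttt{PARITY} gates, using the downward self-reducibility depicted in \Cref{fig:par_tree} and implementing each internal node via the amplified construction of \Cref{cor:first_par_improvement}. Concretely, I would partition the $w^d$ input bits into $w$ blocks of $w^{d-1}$ bits, recurse to obtain the $w$ block parities, and then take $\texttt{PARITY}_w$ of these values; unrolling the recursion gives a tree with $d$ layers in which layer $\ell$ holds $w^{d-\ell}$ independent $\texttt{PARITY}_w$ gates acting on pairwise-disjoint registers, for a total of $\sum_{\ell=1}^{d} w^{d-\ell} = O(w^{d-1})$ gates.

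First I would lay out the circuit in three stages. In the \emph{forward} stage I compute the tree level by level from the leaves to the root, writing each intermediate parity into a fresh ancilla and retaining all intermediate values; since the $w^{d-\ell}$ gates in layer $\ell$ act on disjoint qubits, each layer is one parallel block of depth $O(\log d)$ (the depth of a single $\texttt{PARITY}_w$ gate from \Cref{cor:first_par_improvement}), so the forward stage has depth $O(d\log d)$. Next, a single \texttt{CNOT} XORs the root ancilla, which holds $\bigoplus_i x_i$, into the target qubit. Finally, in the \emph{uncompute} stage I apply the inverses of the forward gates in reverse layer order, restoring every ancilla to $\ket{0}$ in a further $O(d\log d)$ depth, so that the circuit implements the width-$w^d$ \texttt{PARITY} unitary on the input and target registers.

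The error and ancilla bounds then follow by accounting. By \Cref{cor:first_par_improvement}, each $\texttt{PARITY}_w$ gate is $1/2^{\Theta(dw)}$-implemented and uses $d\cdot\exp(\Theta(w))$ ancillae; since there are $O(w^{d-1})$ gates, each applied once forward and once in uncomputation, \Cref{lem:compose} bounds the total error by $O(w^{d-1})\cdot 2^{-\Theta(dw)} \le w^d/2^{\Theta(dw)}$, and the total ancilla count by $O(w^{d-1})\cdot d\exp(\Theta(w)) \le dw^d\exp(\Theta(w))$, with the $O(w^{d-1})$ intermediate-parity ancillae being a lower-order contribution.

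The main obstacle I anticipate is controlling the depth through uncomputation. A naive recursive implementation, in which each block parity is uncomputed immediately after it is consumed, obeys a depth recursion of the form $T(d) = 2\,T(d-1) + O(\log d)$ and therefore blows up to depth $2^{\Theta(d)}$. The key idea is thus to decouple computation from uncomputation: build the entire tree once, use it, and uncompute it once at the end, which forces us to retain all $O(w^{d-1})$ intermediate parities simultaneously (trading ancillae for depth) and to uncompute in reverse layer order so that each gate's inputs are still present when its inverse is applied. A secondary point to check is that \Cref{lem:compose}, stated for serial composition, also governs the parallel layers; this holds because the gates within a layer commute and act on disjoint registers, so they may be serialized without changing the implemented unitary.
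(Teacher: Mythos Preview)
Your proposal is correct and follows essentially the same approach as the paper's proof: build the depth-$d$ tree of width-$w$ \texttt{PARITY} gates (each implemented via \Cref{cor:first_par_improvement}), compute the root, then uncompute, and tally the error via \Cref{lem:compose} and the ancillae by summing over the nodes. Your write-up is in fact more careful than the paper's on two points---you count $O(w^{d-1})$ internal nodes rather than the looser $O(w^d)$, and you explicitly address the depth subtlety (deferring all uncomputation to the end to avoid the $2^{\Theta(d)}$ blow-up of naive recursion) and the applicability of \Cref{lem:compose} to parallel layers---but these are refinements of the same argument, not a different route.
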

\begin{proof}
    \texttt{PARITY} is downward self-reducible: a width-$w^d$ \texttt{PARITY} gate can be implemented by a depth-$d$ tree where each leaf node corresponds to $w^d$ qubits, and internal nodes which compute the \texttt{PARITY} of their $w$ children (\Cref{fig:par_tree}). This tree has $O(w^d)$ nodes. If each internal node implements \texttt{PARITY} with error $\mu$, then by \Cref{lem:compose} and \Cref{fact:grier} the total error is at most $O(w^d\mu)$. 

Since all internal nodes can be uncomputed after the root node computes the overall parity, the total depth increases by a multiplicative factor of $2d$, and each of the nodes uses at most $d \cdot \exp(O(w))$ ancillae by \Cref{cor:first_par_improvement}. Thus, we obtain a smaller circuit for implementing \texttt{PARITY} at the cost of increasing the depth and error parameters.
\end{proof}
Since \texttt{PARITY} and \texttt{FANOUT} are Hadamard conjugates (\Cref{fact:par_fanout}), \Cref{cor:parity_improved} also holds for \texttt{FANOUT}. 

\subsubsection{Unitary Designs}
To obtain unitary $t$-designs in $\QAC^0$, we replace each \texttt{FANOUT} gate in the ``glued unitary'' from \Cref{lem:glueddesigns} by an appropriate $\mu$-implementation:
\begin{theorem}
    \label{thm:qac0designs}
    For any $\eps > 0$ and $c_0 \in \mathbb{N}$, $\eps$-approximate $n$-qubit unitary $t$-designs can be $\mu$-approximately implemented by size $s$, depth $O(c_1 \log c_1)$ $\QAC^0$ circuits, with  
    \begin{equation*}
        \mu = 1/\exp\left({\Theta\left(\frac{c_1t}{c_0}\log(nt^2/\eps)\right)}\right) \;\; s = \exp\left(\frac{t}{c_0}\log(nt^2/\eps)\right)
    \end{equation*}
    Here, $c_1$ is any sufficiently large constant. 
\end{theorem}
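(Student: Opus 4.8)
The plan is to start from the glued $\QAC^0_f$ design of \Cref{lem:glueddesigns} and replace each \texttt{FANOUT} gate with an approximate $\QAC^0$ subcircuit, leaving the single-qubit gates and \texttt{TOFFOLI} gates untouched since these are already native to $\QAC^0$. Write $L := \log(nt^2/\eps)$. By \Cref{lem:glueddesigns} the glued ensemble is already an $\eps$-approximate (standard, forward-query) $t$-design, and every \texttt{FANOUT} in it acts on at most $W = \Poly(L)\cdot\Poly(t)$ qubits; the whole glued circuit has $\Poly(n,t)$ gates. Because \texttt{FANOUT} is a Hadamard conjugate of \texttt{PARITY} (\Cref{fact:par_fanout}), each width-$W$ \texttt{FANOUT} can be realized by the downward-self-reducible construction of \Cref{cor:parity_improved}, which implements a width-$w^d$ gate with error $w^d/2^{\Theta(dw)}$, ancilla count $dw^d\exp(\Theta(w))$, and depth $O(d\log d)$. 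I stress that the $\eps$-design property is a property of the \emph{ensemble} and is inherited verbatim from \Cref{lem:glueddesigns}; the \texttt{FANOUT} replacements affect only the \emph{implementation} error $\mu$, so the two budgets are accounted for separately.

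I would then fix the free parameters. Set the tree depth to $d = c_1$, which gives the claimed depth $O(c_1\log c_1)$. Using \Cref{lem:compose} to telescope the single-gate errors over all $\Poly(n,t)$ replaced \texttt{FANOUT} gates bounds the total implementation error by $\mu \le \Poly(n,t)\cdot W/2^{\Theta(c_1 w)}$. Requiring $\mu = 1/\exp(\Theta(\tfrac{c_1 t}{c_0}L))$ then forces the per-node fan-in width to be $w = \Theta(\tfrac{t}{c_0}L)$: with this choice $2^{\Theta(c_1 w)} = \exp(\Theta(\tfrac{c_1 t}{c_0}L))$, and since $\log(\Poly(n,t)\cdot W) = O(L)$ is dominated by the exponent $\Theta(\tfrac{c_1 t}{c_0}L)$, the polynomial prefactors are swallowed. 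The same $w$ gives per-gate ancilla count $\exp(\Theta(w)) = \exp(\Theta(\tfrac{t}{c_0}L))$ up to polynomial factors, so the overall size is $s = \Poly(n,t)\exp(\Theta(\tfrac{t}{c_0}L)) = \exp(\Theta(\tfrac{t}{c_0}L))$, as stated.

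The crux — and what I expect to be the main obstacle — is that one choice $w = \Theta(\tfrac{t}{c_0}L)$ together with a single \emph{constant} $c_1$ must simultaneously achieve (i) coverage of the required width, $w^{c_1}\ge W$, (ii) the target size $\exp(\Theta(w))$, and (iii) error suppression below $\mu$. Conditions (ii) and (iii) pin down $w$; condition (i) is what forces $c_1$ to be a sufficiently large constant, since $w$ grows at least linearly in both $L$ and $t$ while $W$ is a fixed polynomial in $L$ and $t$, so $w^{c_1}\ge W$ holds once $c_1$ exceeds the degree of $W$ and $n$ is large (any slack $w^{c_1}>W$ is harmless, as the top tree levels may take smaller fan-in, and $c_1$ may have to grow with $c_0$).

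Finally, the remaining bookkeeping is routine but necessary: each replaced \texttt{FANOUT} must be uncomputed so its ancillae return to $\ket{0}$ (exactly as in the proof of \Cref{cor:parity_improved}), so that the subcircuits compose cleanly and the per-gate errors genuinely add under \Cref{lem:compose}. Assembling these pieces yields an $\eps$-approximate $t$-design that is $\mu$-implemented by a depth-$O(c_1\log c_1)$, size-$s$ $\QAC^0$ circuit with the claimed $\mu$ and $s$.
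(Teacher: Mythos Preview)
Your proposal is correct and follows essentially the same approach as the paper: start from the glued $\QAC^0_f$ design of \Cref{lem:glueddesigns}, replace each \texttt{FANOUT} by the tree construction of \Cref{cor:parity_improved} with node width $w = \Theta(\tfrac{t}{c_0}\log(nt^2/\eps))$ and tree depth $d=c_1$ chosen large enough that $w^{c_1}$ covers the required patch width, and then sum the per-gate errors via \Cref{lem:compose}. Your explicit remarks on why the $\Poly(n,t)$ prefactors are absorbed into the exponent, on the necessary uncomputation of ancillae, and on the separation of the design error $\eps$ from the implementation error $\mu$ are all correct and match (or slightly expand on) the paper's treatment.
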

\begin{proof}
    \Cref{lem:glueddesigns} allows us to construct an $\eps$-approximate $t$-design using by implementing patches of smaller $\QAC^0_f$ unitaries each acting on $\ell = \Poly(\log(nt^2/\eps), t)$ qubits. Therefore, each \texttt{FANOUT} gate acts on at most $\ell$ qubits as well. 
    
    Let $c_0 \in \mathbb{N}$, and $w = \frac{t}{c_0}\log(nt^2/\eps)$. Let $c_1 \in \mathbb{N}$ be such that $w^{c_1} \ge \ell$—such a $c_1$ exists by \Cref{lem:glueddesigns}). \Cref{cor:parity_improved} implies that any \texttt{FANOUT} gate with width at most $\ell$ can be 
    \begin{equation*}
        w^{c_1}/2^{\Theta(c_1w)} = 1/\exp\left({\Theta\left(\frac{c_1t}{c_0}\log(nt^2/\eps)\right)}\right)
    \end{equation*}
    implemented by a $\QAC^0$ circuit with 
    \begin{equation}
        \label{eq:design_one}
        c_1w^{c_1} \cdot \exp(w) = \exp\left(\frac{t}{c_0}\log(nt^2/\eps)\right)
    \end{equation}
     ancillae. Note that we have absorbed lower-order terms into $\Theta(\cdot)$ and $\exp(\cdot)$.
    
    Since the original $\QAC^0_f$ circuit from \Cref{lem:glueddesigns} has at most $n \ell$ \texttt{FANOUT} gates,  \Cref{lem:compose} implies that the overall implementation error is at most
    \begin{equation}
        \label{eq:design_two}
        \frac{n\ell}{\exp\left({\Theta\left(\frac{c_1t}{c_0}\log(nt^2/\eps)\right)}\right)} = \frac{1}{\exp\left({\Theta\left(\frac{c_1t}{c_0}\log(nt^2/\eps)\right)}\right)}
    \end{equation}
     where we again absorbed lower-order terms into $\exp(\cdot)$. Similarly, the total number of ancillae required is at most $n\ell$ times the expression in \Cref{eq:design_one}.   
\end{proof}

  A particularly important parameter regime is when $t$ is polynomial in $n$, and $\eps$ is inverse polynomial in $n$. In this regime, we can $\mu$-implement an $\eps$-approximate $t$-design where $\mu$ is any inverse polynomial in $n$ (by choosing $c$ sufficiently large in \Cref{eq:design_two}). This construction requires $n \Poly\log n \cdot n^{\delta t}$ ancillae, for any constant $\delta > 0$ (by choosing $c_0$ sufficiently large in \Cref{eq:design_one}). Hence, we obtain the following corollary:
  \begin{corollary}
        \label{cor:good_designs}
      For any $\delta > 0$, and sufficiently large $n$ depending on $\delta$, there exists a $\QAC^0$ circuit with $o(n^{1 + \delta})$ ancillas which $\mu$-implements an $\eps$-approximate $O(1)$-design, where $\mu$ and $\eps$ are inverse polynomial in $n$.
  \end{corollary}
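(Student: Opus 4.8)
The plan is to obtain this corollary as a direct specialization of \Cref{thm:qac0designs} to the regime $t = O(1)$ and $\eps = n^{-k}$ for a constant $k$, choosing the free constants $c_0$ and $c_1$ so as to balance the ancilla count against the implementation error $\mu$.

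First I would substitute $\eps = n^{-k}$ and constant $t$ into the quantity $w = \frac{t}{c_0}\log(nt^2/\eps)$ that controls all the parameters in \Cref{thm:qac0designs}. Since $\log(nt^2/\eps) = (k+1)\log n + O(1) = \Theta(\log n)$, this gives $w = \Theta(\log n)$ (with the hidden constant depending on $c_0$), and hence $\exp(w) = (nt^2/\eps)^{t/c_0} = \Theta(n^{(k+1)t/c_0})$. The patch size is $\ell = \Poly(\log(nt^2/\eps), t) = \Poly\log n$, so the total ancilla count $n\ell \cdot c_1 w^{c_1}\exp(w)$ from the theorem becomes $n \cdot \Poly\log(n)\cdot n^{(k+1)t/c_0}$, where the $w^{c_1} = (\log n)^{c_1}$ factor is absorbed into the $\Poly\log n$ term because $c_1$ is constant.

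Next I would select the constants. Choosing $c_0 > (k+1)t/\delta$ forces the leading exponent $(k+1)t/c_0$ to be strictly below $\delta$; since any $\Poly\log n$ factor is dominated by an arbitrarily small power of $n$, the ancilla count is then $o(n^{1+\delta})$, as required. For the error, the theorem gives $\mu = 1/\exp(\Theta(c_1 w)) = n^{-\Theta(c_1 t(k+1)/c_0)}$, which is inverse polynomial in $n$; if a specific inverse-polynomial target is desired, I would increase $c_1$, which shrinks $\mu$ while only enlarging the $w^{c_1} = \Poly\log n$ factor (and the constant depth $O(c_1\log c_1)$), leaving the leading power of $n$ in the ancilla count untouched.

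The only point requiring care---rather than a genuine obstacle---is confirming that the two constant choices are mutually compatible: \Cref{thm:qac0designs} requires $c_1$ large enough that $w^{c_1}\ge \ell$, i.e.\ $(\log n)^{c_1}$ must dominate the fixed polylogarithmic patch size, and this holds for all sufficiently large $n$ once $c_1$ exceeds the constant degree of that polylog. Because both $c_0$ and $c_1$ are constants and all the $\ell$- and $w^{c_1}$-dependencies contribute only $\Poly\log n$ factors, neither constraint interferes with the leading $n^{1+(k+1)t/c_0}$ behavior, and the bookkeeping closes cleanly.
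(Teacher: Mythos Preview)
Your proposal is correct and follows essentially the same approach as the paper: specialize \Cref{thm:qac0designs} to constant $t$ and $\eps = n^{-k}$, take $c_0$ large enough to push the exponent $(k+1)t/c_0$ below $\delta$, and increase $c_1$ to hit any desired inverse-polynomial $\mu$ while only incurring $\Poly\log n$ overhead. The paper's own argument is exactly this parameter substitution, stated more tersely in the paragraph preceding the corollary.
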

 Note that the circuit depth scales as $O(1/\delta)$. Given that \texttt{PARITY} cannot be computed by depth-$d$ $\QAC^0$ circuits with $n^{1 + 1/3^d}$ ancillae~\cite{anshu2024computationalpowerqac0barely}, \Cref{cor:good_designs} implies that with slightly more ancillae ($n^{1 + 1/O(d)}$), we already obtain good approximations of constant $t$-designs (as long as $n$ is sufficiently large). This observation will also hold for the pseudorandom unitary constructions in the next section. 
  
 Moreover, $\eps$ and $\mu$ can be both be made inverse-\textit{quasipolynomial} in $n$, although the resulting constructions will require quasipolynomial ancillae. More generally, obtaining any construction with $o(1/\eps)$ size would likely require improving the tradeoff between size and implementation error in either the Rosenthal or Morris-Grier constructions, which we leave as an interesting open problem. 
\subsubsection{Pseudorandom Unitaries}
We can now implement PRUs in $\QAC^0$ in an analogous fashion. 
\begin{theorem} 
    \label{thm:qac0prus}
   For any $k, c_0 \in \mathbb{N}$, assuming subexponential post-quantum security of LWE, PRUs secure against $n^{\Poly \log n}$-time adversaries can be $\mu$-implemented by size $s$, depth $O(c \log c)$ $\QAC^0$ circuits, with  
    \begin{equation*}
        \mu = 1/\exp(\Theta(c\log^k n/c_0)) \;\; s = \exp(\log^k n/c_0)
    \end{equation*}
    Again, $c$ is any sufficiently large constant (where ``sufficiently large'' depends on $c_0$).
\end{theorem}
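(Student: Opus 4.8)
The plan is to mirror the proof of \Cref{thm:qac0designs}, replacing the design-gluing ingredient with its PRU counterpart and tracking the analogous parameters. First I would invoke \Cref{lem:glueprus}: assuming subexponential post-quantum security of LWE, this yields a $\QAC^0_f$ circuit implementing an ensemble secure against $2^{\log^k n}$-time adversaries, where every many-qubit gate acts on at most $\ell = \Poly\log^k n$ qubits. Since $2^{\log^k n} = n^{\log^{k-1} n}$, this is exactly security against $n^{\Poly\log n}$-time adversaries. Crucially, the pseudorandomness is a property of the \emph{ideal} ensemble, so the $\QAC^0$ circuit we build will inherit this security and only incur an additional \emph{implementation} error $\mu$, which is the quantity we must control. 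Because $\texttt{TOFFOLI}$ gates are native to $\QAC^0$, only the $\texttt{FANOUT}$ gates require replacement.

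Next I would replace each $\texttt{FANOUT}$ gate by a $\QAC^0$ subcircuit via \Cref{cor:parity_improved}, using that $\texttt{FANOUT}$ and $\texttt{PARITY}$ are Hadamard-conjugate (\Cref{fact:par_fanout}). Setting the base width $w = \log^k n / c_0$, I would choose the depth constant $c$ large enough (depending on $c_0$) that $w^{c} \ge \ell$; this is possible because $w^c = \log^{kc} n / c_0^c$ while $\ell = \log^{O(k)} n$, so any $c$ exceeding the degree of $\ell$ divided by $k$ suffices for large $n$. \Cref{cor:parity_improved} then $\mu_0$-implements each $\texttt{FANOUT}$ gate of width at most $\ell$ with
\begin{equation*}
\mu_0 = w^{c}/2^{\Theta(cw)} = 1/\exp\!\pbra{\Theta\pbra{c\log^k n / c_0}},
\end{equation*}
using a depth-$O(c\log c)$ circuit with $c\, w^{c}\exp(\Theta(w)) = \exp(\Theta(\log^k n / c_0))$ ancillae, the polylogarithmic prefactors being absorbed into the exponent.

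Finally I would assemble the pieces. The glued $\QAC^0_f$ circuit is a two-layer brickwork of $\Poly(\ell)$-size patches, so it contains at most $n\cdot\Poly(\ell)$ $\texttt{FANOUT}$ gates; by the composition bound \Cref{lem:compose} the total implementation error is at most $n\,\Poly(\ell)\cdot\mu_0 = 1/\exp(\Theta(c\log^k n/c_0)) = \mu$, the polynomial-in-$n$ prefactor being swallowed by the exponential since $c\log^k n/c_0 = \omega(\log n)$. The total ancilla count is $n\,\Poly(\ell)$ times the per-gate count, giving $s = \exp(\log^k n/c_0)$ after the same absorption, and since the brickwork has only two layers each replaced subcircuit contributes $O(c\log c)$ to the depth. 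I expect the only real care needed is bookkeeping: keeping the LWE-security constant inside \Cref{lem:glueprus} notationally separate from the depth constant $c$, and verifying that the $n\,\Poly(\ell)$ factors are genuinely dominated by $\exp(\Theta(c\log^k n/c_0))$ for $k\ge 1$ so that both $\mu$ and $s$ collapse to the stated closed forms. The conceptual content is already supplied by \Cref{lem:glueprus} and \Cref{cor:parity_improved}, so no new idea beyond parameter matching should be required.
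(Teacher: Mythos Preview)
Your proposal is correct and follows essentially the same approach as the paper: the paper's proof simply says to reuse the analysis of \Cref{thm:qac0designs} with $w = \log^k n/c_0$ and local patch size $\ell = w^c$ drawn from \Cref{lem:glueprus}, which is exactly the parameter substitution you carry out in detail. Your explicit bookkeeping (including the caution about separating the LWE-security constant from the depth constant $c$, and the absorption of the $n\,\Poly(\ell)$ prefactors) fills in what the paper leaves implicit, but there is no substantive difference in strategy.
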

\begin{proof}
    We reuse the analysis of \Cref{thm:qac0prus}, with $w = \log^k n/c_0$ and local patch size $\ell = w^c$ for some constant $c$ (the $\ell$ from \Cref{lem:glueddesigns}), obtaining a $1/\exp(\Theta(c\log n/c_0))$-implementation with $\exp(\log n/c_0)$ ancillae. 
\end{proof}
Taking $a = 1$, we obtain implementations of PRUs by polynomial-size $\QAC^0$ circuits. Like \Cref{cor:good_designs}, the polynomial can be chosen to be $o(n^{1 + \delta})$ for any $\delta > 0$, almost linear when $\log n \gg c_0$. When $k > 1$, we implement PRUs to inverse \textit{quasipolynomial} error, although the resulting circuits have quasipolynomial size. Ideally, one would hope to implement a PRU to negligible error in polynomial size $\QAC^0$. Unfortunately, we are again bottlenecked by the $\Theta(1/\eps)$ scaling in \Cref{fact:grier}. \\ \\
One important caveat implicit in both $\QAC^0$ constructions is that the desired random unitary on the $n$ input qubits is only approximately implemented, and there is no guarantee that the action when tracing out the ancilla qubits is unitary\footnote{In fact, it is an open problem to exactly implement \texttt{FANOUT} using $\QAC^0$ circuits of any size.}. In many applications, this scenario is addressed by giving the distinguishing algorithm access to the \textit{channel} $\Phi_C(\rho) \coloneqq \tr_{anc}(C(\rho \otimes \ket{0^a}\bra{0^a})C^\dagger)$, which maps the state of the $n$ input qubits to the reduced state on the same $n$ qubits after applying $C$.

If $C$ $\mu$-implements a unitary $U$ (on $n$ qubits), then replacing queries to $U$ with queries to $\Phi_C$ incurs an additive error:
\begin{lemma}
    \label{lem:approximator}
    Assume $C$ $\eps$-approximates a unitary $U$.\ Let $Q$ be any quantum algorithm which makes at most $t$ queries to either $\Phi_C/\Phi_{C^\dagger}$ or $U/U^\dagger$. For any input state $\rho$, $$\norm{Q^{\Phi_C, \Phi_{C^\dagger}}\rho(Q^{\Phi_C, \Phi_{C^\dagger}})^\dagger - Q^{U, U^\dagger}\rho(Q^{U, U^\dagger})^\dagger}_{td} \le t{\mu}$$
\end{lemma}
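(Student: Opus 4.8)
The plan is to run a standard hybrid (telescoping) argument over the $t$ queries, reducing everything to a single per-query bound: the statement that the channel $\Phi_C$ is close, as a channel, to the ideal unitary channel $\mathcal{U}(\rho) := U\rho U^\dagger$. Since the non-query portions of $Q$ (its internal unitaries and its workspace) act as a fixed isometry that is applied identically in both the $\Phi_C$-world and the $U$-world, and since trace distance is non-increasing under any CPTP map (in particular under applying $Q$'s intermediate operations), the total error telescopes: replacing the $i$-th query to $U/U^\dagger$ by a query to $\Phi_C/\Phi_{C^\dagger}$ changes the running state by at most the per-query error, and the triangle inequality sums these $t$ contributions. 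Thus it suffices to prove a per-query bound of the claimed magnitude and multiply by $t$.

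To get the per-query bound I first record the pure-state consequence of \Cref{def:approximplement}. For any $\ket{\psi}$, write $\ket{\phi_\psi} := C(\ket{\psi}\ot\ket{0^a})$; the $\eps$-implementation hypothesis says $\abs{\braket{\psi,0^a | U^\dagger\ot I \,|\phi_\psi}}^2 \ge 1-\eps$, so by Fuchs--van de Graaf the states $\ket{\phi_\psi}$ and $U\ket{\psi}\ot\ket{0^a}$ are within trace distance $\sqrt{\eps}$. Because the partial trace $\tr_{anc}$ is CPTP and hence contractive, it follows that $\norm{\Phi_C(\ketbra{\psi}{\psi}) - U\ketbra{\psi}{\psi}U^\dagger}_{td}\le \sqrt{\eps}$ for every pure input on the system alone. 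The remaining work is to upgrade this pointwise closeness to a bound that is valid when the query register is entangled with $Q$'s ancillary memory, i.e.\ to a diamond-type bound $\norm{\Phi_C - \mathcal{U}}_\diamond \le O(\sqrt{\eps})$.

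This upgrade is the main obstacle, and it is exactly where the ``for every $\ket{\psi}$'' quantifier in \Cref{def:approximplement} is essential. The plan is to consider the system operator $B := (\bra{0^a}\ot I)(U^\dagger\ot I)\,C\,(\ket{0^a}\ot I)$, a contraction satisfying $\abs{\braket{\psi|B|\psi}}\ge\sqrt{1-\eps}$ for all unit $\ket{\psi}$. First I would use $\braket{\psi|B^\dagger B|\psi}\ge 1-\eps$ together with $\norm{B}_{op}\le 1$ to conclude $\norm{I - B^\dagger B}_{op}\le\eps$, so $B$ is $O(\eps)$-close to its polar unitary $W$. Then the uniform diagonal bound $\abs{\braket{\psi|W|\psi}}\ge 1-O(\eps)$, holding in \emph{every} basis and in particular on all superpositions, forces $W = e^{i\theta}(I + O(\sqrt{\eps}))$ in operator norm for a single global phase $\theta$ (a phase that varied with $\ket{\psi}$ would violate the bound on an appropriate two-dimensional superposition). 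This makes $C(\ket{\Psi}_{SR}\ot\ket{0^a})$ uniformly $O(\sqrt{\eps})$-close to $e^{i\theta}(U\ot I_R)\ket{\Psi}_{SR}\ot\ket{0^a}$ even for inputs $\ket{\Psi}_{SR}$ entangled with a reference; the global phase is irrelevant after $\tr_{anc}$, and contractivity of the partial trace then yields the desired per-query diamond bound. Feeding this into the hybrid argument and summing over the $t$ queries gives the stated accumulation of $t$ times the per-query error, establishing \Cref{lem:approximator}; the same reasoning applies verbatim to $\Phi_{C^\dagger}$ versus $U^\dagger$.
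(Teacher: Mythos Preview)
Your hybrid/telescoping argument and use of CPTP contractivity match the paper's approach exactly. The divergence is in the per-query step: the paper writes the pre-query state $\rho_i$ as a mixture of pure states $\ket{\psi_j}$ and invokes \Cref{def:approximplement} directly on each, whereas you observe that $\rho_i$ lives on the query register \emph{together with} the algorithm's workspace, so those $\ket{\psi_j}$ are in general entangled across $S\otimes W$ and the hypothesis of \Cref{def:approximplement} only speaks of states on $S$. Your analysis of $B=(\bra{0^a}\otimes I)(U^\dagger\otimes I)C(\ket{0^a}\otimes I)$---polar-decomposing and using the numerical-range bound $|\braket{\psi|B|\psi}|\ge\sqrt{1-\eps}$ on \emph{all} $\ket{\psi}$ (including superpositions of eigenvectors) to pin $B$ within $O(\sqrt{\eps})$ of a scalar in operator norm---is exactly what is needed to upgrade the per-state hypothesis to the diamond-type bound that the hybrid argument actually requires, and it is sound; the paper leaves this step implicit. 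As a consequence you obtain $O(t\sqrt{\eps})$ rather than the stated $t\mu$. That is the honest bound: for pure states with fidelity at least $1-\mu$, Fuchs--van de Graaf gives trace distance at most $\sqrt{\mu}$, not the $1-\sqrt{1-\mu}$ the paper records, and the $\sqrt{\eps}$ is tight (e.g.\ the one-qubit $C=e^{i\sqrt{\eps}\,Z}$ is an $\eps$-implementation of $U=I$ yet moves $\ket{+}$ by trace distance $\sin\sqrt{\eps}\approx\sqrt{\eps}$). The square-root loss is harmless for every downstream use of \Cref{lem:approximator} in the paper.
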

\begin{proof}
    We assume without loss of generality that all queries are to $U$: the proof is identical if queries to $U^\dagger$ are replaced by queries to $\Phi_{C^\dagger}$. Define $Q_0 = Q^{\Phi_C}, Q_t = Q^U$, and $Q_i$ the algorithm whose first $i$ queries are to $U$ and whose last $t - i$ queries are to $\Phi_C$. By the triangle inequality, this is at most 
    \begin{align*}
        t \cdot \sum_{i=0}^{t-1} \norm{Q_{i + 1}\rho(Q_{i + 1})^\dagger - Q_i\rho(Q_i)^\dagger}_{td} \le t \cdot \sum_{i=0}^{t-1} \norm{\mathcal{E}_i(U\rho_i U^\dagger) -  \mathcal{E}_i(\tr_{anc}(C(\rho_i \otimes \ket{0^a}\bra{0^a})C^\dagger))}_{td}  \\
        \le t \cdot \sum_{i=0}^{t-1} \norm{U\rho_i U^\dagger - \tr_{anc}(C(\rho_i \otimes \ket{0^a}\bra{0^a})C^\dagger)}_{td} 
    \end{align*}
    for some channels $\mathcal{E}_i$ and states $\rho_i$, where in the second line we use that trace distance can only decrease under an application of $\mathcal{E}_i$. Writing $\rho_i$ as a linear combination $\rho_i = \sum_j \alpha_{ij} \ket{\psi_j}\bra{\psi_j}$,
    \begin{align*}
        t \cdot \sum_{i=0}^{t-1} \sum_j \alpha_{ij} \norm{U \ket{\psi_j}\bra{\psi_j}U^\dagger - \tr_{anc}(C(\ket{\psi_j}\bra{\psi_j} \otimes \ket{0^a}\bra{0^a})C^\dagger)}_{td} 
        \\ \le t \cdot \sum_{i=0}^{t-1} \sum_j \alpha_{ij} \norm{U \ket{\psi_j}\bra{\psi_j}U^\dagger \otimes \ket{0^a}\bra{0^a} - C(\ket{\psi_j}\bra{\psi_j} \otimes \ket{0^a}\bra{0^a})C^\dagger}_{td} \\
        \le t \cdot \sum_{i=0}^{t-1} \sum_j \alpha_{ij}  (1 -\sqrt{1 - \mu}) = t \cdot (1 -\sqrt{1 - \mu}) \le t\mu
    \end{align*}
    from the first to the second line, we used that trace distance is non-increasing when taking a partial trace. From the second to third line, we use the Fuchs-van-de Graff inequality and the definition of $\mu$-implementation.
\end{proof}

Since the maximum distinguishing advantage between two quantum states is equal to the trace distance, combining \Cref{thm:qac0prus} and \Cref{lem:approximator} implies that any $n^{\Poly \log n}$-time algorithm which makes at most $t$ queries to a polynomial size $\QAC^0$ circuit $C$ cannot, in the worst case distinguish $C$ from Haar-random except with at most $t$ times that probability. In particular, we obtain the following distinguishing lower bound: 
\begin{theorem}
    \label{thm:lowerbound}
    Assuming subexponential post-quantum security of LWE, no fixed polynomial-time algorithm can distinguish the following two cases with non-negligible probability:
    \begin{itemize}
        \item $C$ is a polynomial size $\QAC^0$ circuit.
        \item $\Phi_C = \Phi_U = U\rho U^\dagger$ for a Haar-random unitary $U$. 
    \end{itemize}
\end{theorem}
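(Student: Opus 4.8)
The plan is to prove \Cref{thm:lowerbound} by a hybrid argument that interpolates between the physical channel $\Phi_C$, the ideal PRU channel, and the Haar channel, controlling each gap with one of the two results already established. Fix any polynomial-time distinguisher $Q$; say it runs in time $n^a$ and hence makes at most $t \le n^a$ forward queries to its oracle channel. First I would invoke \Cref{thm:qac0prus} (with $k=1$, and $c_0,c$ to be chosen) to obtain a keyed family $\{C_\theta\}$ of polynomial-size $\QAC^0$ circuits such that $C_\theta$ $\mu$-implements a unitary $U_\theta$, where the ensemble $\{U_\theta\}$ is a PRU secure against $n^{\Poly\log n}$-time adversaries. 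Because $Q$ is fixed, I may take $c/c_0$ as large as desired while keeping the size polynomial, so that $\mu$ is an arbitrarily small inverse polynomial.

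Next, recalling from the background that the distinguishing advantage of $Q$ equals a trace distance between its output states, it suffices to bound $\norm{Q^{\Phi_{C_\theta}}\rho (Q^{\Phi_{C_\theta}})^\dagger - Q^{\Phi_{U_{\mathrm{Haar}}}}\rho (Q^{\Phi_{U_{\mathrm{Haar}}}})^\dagger}_{td}$, averaged over the key $\theta$ and over $U_{\mathrm{Haar}}\sim\mu_N$. I would insert the intermediate hybrid in which $Q$ instead queries the \emph{exact} unitary $U_\theta$. For each fixed key, the gap between querying $\Phi_{C_\theta}$ and querying $U_\theta$ is at most $t\mu$ by \Cref{lem:approximator} (using only its forward-query half); this bound survives averaging over $\theta$. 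The remaining gap, between querying $U_\theta$ for a random key and querying $U_{\mathrm{Haar}}$, is precisely the PRU distinguishing advantage, and since $Q$ is a forward-query $n^{\Poly\log n}$-time adversary it is bounded by the security error $\eps_{\mathrm{PRU}}$ of the ensemble from \Cref{thm:qac0prus}. By the triangle inequality the total advantage is at most $t\mu + \eps_{\mathrm{PRU}}$.

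To conclude, I would argue both terms are below any prescribed threshold for the fixed $Q$: choosing $\mu \ll n^{-a}$ forces $t\mu$ arbitrarily small, and $\eps_{\mathrm{PRU}}$ is controlled directly by \Cref{thm:qac0prus}. Since $Q$ and its query budget $t$ are fixed in advance, the construction parameters $(c,c_0)$ may be tuned per adversary rather than demanding a single family that simultaneously fools all polynomial-time distinguishers; making this order of quantifiers explicit is what lets the argument read as ``no fixed polynomial-time algorithm can distinguish the two cases with noticeable advantage.''

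The step I expect to be the main obstacle is reconciling the demand for small advantage with the requirement that $C_\theta$ be \emph{polynomial size}. The only known $\QAC^0$ realization of \texttt{FANOUT} (\Cref{fact:grier}) costs ancillae scaling as $\exp(\Theta(w))$ in the gate width, which couples the implementation error $\mu$, the glued ensemble's security error, and the circuit size together through the patch size $\ell$; pushing both error terms down while holding the size polynomial is exactly where this tradeoff bites. The cleanest way around it is to exploit that the adversary is fixed, so $t$ is a fixed polynomial and $\mu$ need only beat $1/t$, rather than to seek a uniformly negligible bound. Verifying that the forward-only security of the glued PRU genuinely covers every polynomial-time $Q$ (the glued ensemble is not secure under inverse queries, so the lemma must be applied to forward queries alone), and tracking the interaction of this with the averaging over keys, is the delicate part of the argument.
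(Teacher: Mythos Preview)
Your proposal is correct and takes essentially the same approach as the paper's proof, which is a terse contradiction argument combining \Cref{thm:qac0prus} (tuning the implementation error $\mu$ to beat the fixed adversary's polynomial query budget) with \Cref{lem:approximator} to bound the distinguishing advantage, the PRU security handling the remaining gap to Haar. Your hybrid is just the explicit unrolling of this, and your remarks on the order of quantifiers and the forward-only restriction are accurate.
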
 
\begin{proof}
    Assume by contradiction that a $O(n^{c_0})$-time algorithm $A$ could distinguish any polynomial-size $\QAC^0$ circuit from random, with probability $\Omega(1/n^{c_1})$. Use \Cref{thm:qac0prus} to implement a PRU to error $O(1/n^{c_2})$, where $c_2 \gg c_0 + c_1$ in polynomial-size $\QAC^0$. Then, \Cref{lem:approximator} implies that $A$'s maximum distinguishing advantage should be $O(n^{c_0 - c_2}) = o(1/n^{c_1})$, a contradiction. 
\end{proof}

\subsection{Learning Lower-Bounds from Pseudorandomness}
\label{sec:new_approaches}
We will now show how our implementations of PRUs (and quantum secure PRFs as in \Cref{eq:lweprf}) can be used to prove powerful computational lower bounds for average-case learning of $\QAC^0$. 

In particular, we will prove bounds with respect to the average-case distance measure, which is defined for two $n$-qubit CPTP maps $\mathcal{E}_1$ and $\mathcal{E}_2$ as
\begin{align*}
    \mathcal{D}_\text{avg}(\mathcal{E}_1, \mathcal{E}_2)=\underset{\ket{\psi}\sim \mu_{2^n}}{\mathbb{E}}\left[1-F\left(\mathcal{E}_1(\ketbra{\psi}{\psi}),\mathcal{E}_2(\ketbra{\psi}{\psi})\right)\right],
\end{align*}
where $\ket{\psi}$ is sampled from the Haar measure $\mu_{2^n}$ and $F(\cdot, \cdot)$ is the fidelity. In the case of unitary channels, the Haar distance measure simplifies to the average gate fidelity error. As such, we will abuse notation for unitary channels and write $\mathcal{D}_\text{avg}(U_1, U_2)$ to mean the distance according to the corresponding unitary channels. Important to our analysis will be the following fact:
\begin{fact}[\cite{Nielsen_2002}] \label{fact:uni_chans_dist}
    For unitaries $U_1$ and $U_2$, the average-case distance satisfies
    \begin{align}
        \mathcal{D}_\text{avg}(U_1, U_2) = \frac{2^n}{2^n+1} \left(1-\frac{1}{4^n}\left|\tr(U_1^\dag U_2)\right|^2\right).
    \end{align}
\end{fact}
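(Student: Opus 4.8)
The plan is to reduce the statement to a standard second-moment (Haar twirl) computation. Since $\mathcal{E}_1,\mathcal{E}_2$ are unitary channels, both output states $U_1\ketbra{\psi}{\psi}U_1^\dagger$ and $U_2\ketbra{\psi}{\psi}U_2^\dagger$ are pure, so the fidelity collapses (by the pure-state case recorded in the background) to $F = \abs{\bra{\psi}U_1^\dagger U_2\ket{\psi}}^2$. Writing $V := U_1^\dagger U_2$ and $N := 2^n$, this immediately gives
$$\mathcal{D}_\text{avg}(U_1, U_2) = 1 - \mathbb{E}_{\ket{\psi}\sim\mu_{N}}\left[\abs{\bra{\psi}V\ket{\psi}}^2\right],$$
so the whole problem reduces to evaluating the expected squared overlap $\mathbb{E}[\abs{\bra{\psi}V\ket{\psi}}^2]$.

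First I would linearize the square: using $\abs{\bra{\psi}V\ket{\psi}}^2 = \bra{\psi}V\ket{\psi}\,\overline{\bra{\psi}V\ket{\psi}} = \tr(V\ketbra{\psi}{\psi})\,\tr(V^\dagger\ketbra{\psi}{\psi})$ together with the identity $\tr(A)\tr(B)=\tr(A\otimes B)$, I rewrite the integrand as $\tr\big((V\otimes V^\dagger)\,\ketbra{\psi}{\psi}^{\otimes 2}\big)$. By linearity of trace and of expectation, the expectation then factors through the second moment of the Haar state, $\mathbb{E}_{\ket{\psi}}[\ketbra{\psi}{\psi}^{\otimes 2}]$.

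Next I would invoke the standard Haar twirl onto the symmetric subspace, $\mathbb{E}_{\ket{\psi}}[\ketbra{\psi}{\psi}^{\otimes 2}] = (I + \mathrm{SWAP})/(N(N+1))$, where $I$ and $\mathrm{SWAP}$ act on two copies of the $n$-qubit space. Substituting and evaluating the two resulting traces is routine: $\tr((V\otimes V^\dagger)\,I) = \tr(V)\tr(V^\dagger) = \abs{\tr(V)}^2$, and by the swap trick $\tr((V\otimes V^\dagger)\,\mathrm{SWAP}) = \tr(VV^\dagger) = \tr(I) = N$. Collecting these yields $\mathbb{E}[\abs{\bra{\psi}V\ket{\psi}}^2] = (\abs{\tr(V)}^2 + N)/(N(N+1))$, and subtracting from $1$ and simplifying gives $\frac{N}{N+1}\big(1 - \abs{\tr(V)}^2/N^2\big)$, which is the claimed identity after substituting $N = 2^n$ and $N^2 = 4^n$.

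There is no serious obstacle here; the computation is elementary once the pure-state fidelity reduction is made. The only point requiring care is the Haar second-moment identity itself, which one can either cite or derive in a line: by Haar invariance the twirl commutes with every $U^{\otimes 2}$, hence by Schur--Weyl it lies in the span of $I$ and $\mathrm{SWAP}$; since it is moreover a state supported on the symmetric subspace, it equals $\Pi_{\mathrm{sym}}/\dim(\mathrm{sym})$ with $\Pi_{\mathrm{sym}} = (I+\mathrm{SWAP})/2$ and $\dim(\mathrm{sym}) = \binom{N+1}{2}$, reproducing $(I+\mathrm{SWAP})/(N(N+1))$ and fixing the normalization.
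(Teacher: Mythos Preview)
Your derivation is correct and is exactly the standard one: reduce to the pure-state fidelity $\abs{\bra{\psi}V\ket{\psi}}^2$ with $V=U_1^\dagger U_2$, linearize via two copies, apply the Haar second-moment twirl onto the symmetric subspace, and simplify. The paper itself does not prove this fact at all; it simply cites it to \cite{Nielsen_2002} and uses it as a black box, so your proposal supplies more detail than the paper does.
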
 

We begin by using the PRU construction of \Cref{thm:qac0prus} (assuming hardness of Learning with Errors) to prove a super-polynomial computational lower bound for learning $\QAC^0$ circuits with polynomial ancillae.

\begin{lemma}
    \label{thm:learning_PRU}
    Assuming subexponential post-quantum security of LWE, $n^{\omega(1)}$ time is necessary to learn polynomial-sized $\QAC^0$ unitaries, according to the average-case distance measure.  
\end{lemma}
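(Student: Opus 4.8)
The plan is to derive the learning lower bound by contradiction from the existence of the $\QAC^0$ PRUs constructed in \Cref{thm:qac0prus}. Suppose, for contradiction, that there were a learning algorithm $\mathcal{L}$ running in time $n^{O(1)}$ that, given query access to any polynomial-size $\QAC^0$ unitary $U$, outputs a classical description of a channel $\mathcal{E}$ with $\mathcal{D}_{\text{avg}}(\mathcal{E}, U)$ small (say, smaller than some fixed constant like $1/3$). The idea is that such an $\mathcal{L}$ would yield a polynomial-time distinguisher between the PRU ensemble and Haar-random, contradicting \Cref{thm:lowerbound} (equivalently, the security guarantee underlying \Cref{thm:qac0prus}).

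\medskip\noindent\textbf{Key steps.} First I would instantiate the PRU ensemble from \Cref{thm:qac0prus} with parameters making it a polynomial-size $\QAC^0$ circuit family that is secure against $n^{\Poly\log n}$-time adversaries; call a typical member $U_k$. Next, I would build a distinguisher $\mathcal{D}$ as follows. Given oracle access to an unknown unitary $V$ (either a sample $U_k$ from the ensemble, or a Haar-random $U$), run $\mathcal{L}^V$ to produce a hypothesis description of a channel $\mathcal{E}$. Then estimate $\mathcal{D}_{\text{avg}}(\mathcal{E}, V)$ to within a small additive error by sampling a few Haar-random (or approximately Haar, via a low-design) input states $\ket{\psi}$, running $V$ and $\mathcal{E}$ on them, and estimating the fidelity $F(\mathcal{E}(\ketbra{\psi}{\psi}), V(\ketbra{\psi}{\psi}))$ via, e.g., a SWAP-test subroutine; by \Cref{fact:uni_chans_dist} this quantity is a concrete function of $|\tr(\cdot)|^2$ and is efficiently estimable. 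Output ``structured'' ($V = U_k$) if the estimated average-case distance is below a threshold, and ``random'' otherwise. On PRU inputs, $\mathcal{L}$ succeeds by assumption, so the estimated distance is small and $\mathcal{D}$ outputs ``structured'' with high probability. On Haar-random inputs, I would argue that \emph{no} bounded-query hypothesis $\mathcal{E}$ output by $\mathcal{L}$ can have small average-case distance to $U$, because a Haar-random $U$ is far (in average-case distance) from any fixed channel that is describable using only the information $\mathcal{L}$ extracted from polynomially many queries; concretely, $\mathbb{E}_{U}|\tr(\mathcal{E}^\dagger U)|^2$ concentrates near its small average value, forcing $\mathcal{D}_{\text{avg}}(\mathcal{E}, U)$ close to $1$ with overwhelming probability. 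Hence $\mathcal{D}$ distinguishes the two cases with non-negligible advantage in polynomial time, contradicting \Cref{thm:lowerbound}.

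\medskip\noindent\textbf{Main obstacle.} The delicate step is the Haar case: I must show that the learner's output $\mathcal{E}$ cannot accidentally be close to a Haar-random $U$. The subtlety is that $\mathcal{E}$ is \emph{adaptively} produced from queries to $U$ itself, so I cannot treat $\mathcal{E}$ as independent of $U$. The clean way to handle this is to invoke the information-theoretic fact that any algorithm making $q = \Poly(n)$ queries to a Haar-random unitary extracts only $\Poly(n)$ bits of useful information, whereas pinning down $U$ to average-case distance below a constant requires exponentially many bits; thus with overwhelming probability the produced $\mathcal{E}$ remains far from $U$. Alternatively — and this is the route I would actually take to keep everything polynomial-time and self-contained — I would sidestep the Haar analysis entirely by reducing directly to \Cref{thm:lowerbound}: a successful learner gives a distinguishing advantage regardless of which case we are in, since the \emph{learnability gap} between structured and Haar-random instances is itself the distinguishing signal. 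The remaining technical care is bookkeeping the accumulated errors (the $\mu$-implementation error from \Cref{lem:approximator}, the design/sampling error in the fidelity estimation, and the learner's tolerance) so that the total advantage stays non-negligible, which is routine given the negligible PRU error afforded by the subexponential-LWE assumption.
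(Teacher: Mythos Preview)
Your approach is essentially the paper's: assume a polynomial-time learner, turn it into a distinguisher (learn $\widetilde U$, then test closeness of $\widetilde U$ to the oracle $V$), and contradict PRU security. Two differences worth noting. First, the paper's closeness test is cleaner than your per-input sampling: it prepares the Choi states $\ket{\psi_V}=(V\otimes I)\ket{\Phi}$ and $\ket{\psi_{\widetilde U}}=(\widetilde U\otimes I)\ket{\Phi}$ and does a single SWAP test, whose zero-outcome probability is exactly $\tfrac12\bigl(1+\tfrac{1}{d^2}|\tr(\widetilde U^\dagger V)|^2\bigr)$; by \Cref{fact:uni_chans_dist} this directly reads off $\mathcal{D}_{\text{avg}}$ with no need to sample Haar-random inputs or run a design. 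Second, your proposed ``alternative'' of reducing to \Cref{thm:lowerbound} to sidestep the Haar analysis is circular: \Cref{thm:lowerbound} is itself a distinguishing statement, and invoking it requires knowing that the distinguisher behaves differently on Haar input than on PRU input, which is precisely the Haar analysis you are trying to avoid. Your first route (information-theoretic hardness of learning a Haar-random unitary from polynomially many queries) is the right one; the paper, incidentally, handles this point somewhat loosely as well, computing $\mathbb{E}_U[|\tr(\widetilde U^\dagger U)|^2]=1$ ``for any fixed (learned) $\widetilde U$'' without explicitly addressing the adaptivity of $\widetilde U$ on $U$.
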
 
\begin{proof}
    By \Cref{thm:lowerbound}, we can produce a $1/\Poly(n)$-approximate PRU $U$ in polynomial-sized $\QAC^0$ with $\omega(n^c)$-time security, for arbitrary constant $c=O(1)$. For contradiction, assume that there exists an $O(n^c)$-time algorithm, for learning and implementing a unitary $\widetilde{U}$ such that $\mathcal{D}_\text{avg}(U, \widetilde{U}) \leq 1/\Poly(n)$. Using the quantum circuits for $U$ and $\widetilde{U}$, we will show how to implement a polynomial time, i.e. $O(n^{\widetilde{c}})$-time test that can be used to distinguish $U$ from Haar-random. Since, $c$ can be made arbitrarily large, we can thus pick a PRU with $c > \widetilde{c}$, such that the overall procedure for learning and distinguishing requires time complexity, $O(n^c+n^{\widetilde{c}}) = O(n^c)$. This contradicts the security of the PRU, thus establishing that $\omega(n^{c})$-time is necessary to learn $U$. Furthermore, since we can amplify $c$ to be an arbitrarily large constant, this implies super-polynomial time is necessary to learn all possible $1/\Poly(n)$-precise PRUs to the desired precision.
    
    All that remains is to show that there exists 
    a polynomial time algorithm for distinguishing $U$ from a Haar random unitary, via the learned circuit $\widetilde{U}$. For ease of notation, we will denote $d=2^n$.
    To do so, we convert each unitary into a state via the Choi-Jamiolkowski isomorphism. In particular, let 
    $\ket{\Phi} = \frac{1}{\sqrt{d}} \sum_{i=1}^{d} \ket{i}\otimes \ket{i}$ denote the maximally entangled state, such that 
    $\ket{\psi_U} = (U\otimes I) \ket{\Phi}$ and $\ket{\psi_{\widetilde{U}}} = (\widetilde{U}\otimes I) \ket{\Phi}$ 
    are the respective Choi states of $U$ and $\widetilde{U}$. Performing a swap test between $\ket{\psi_U}$ and $\ket{\psi_{\widetilde{U}}}$ results in the zero outcome with probability
    \begin{align*}
        P_0 &= \frac{1}{2}\left(1+|\braket{ \psi_{\widetilde{U}}|\psi_U}|^2\right)= \frac{1}{2}\left(1+\left|\frac{1}{d}\sum_{i,j}\bra{j}\widetilde{U}^\dagger U \ket{i}\braket{j|i}\right|^2\right) \\
        &= \frac{1}{2}\left(1+\frac{1}{d^2}\left|\sum_{i}\bra{i}\widetilde{U}^\dagger U \ket{i}\right|^2\right) = \frac{1}{2}\left(1+\frac{1}{d^2}|\tr(\widetilde{U}^\dagger U)|^2\right).
    \end{align*}
    In the case that unitary $U$ is a $\QAC^0$ PRU, by the guarantees of the learning algorithm, this implies that
    \begin{align*}
        &\mathcal{D}_\text{avg}(U, \widetilde{U}) = \frac{d}{d+1} \left(1-\frac{1}{d^2}|\tr(\widetilde{U}^\dagger U)|^2\right) \leq \eps \implies  P_0 \geq 1 - \eps \cdot \frac{d+1}{2d}, 
    \end{align*}
    for $\eps = 1/\Poly(n)$, meaning that $P_0\approx 1$. Meanwhile, in the case that $U$ is Haar-random, for any fixed (learned) $\widetilde{U}$, the expectation of $P_0$ is tightly concentrated around 
    \begin{align*}
        \mathbb{E}_{U \sim \mu}\left[P_0\right] = \frac{1}{2}\left(1+\frac{1}{d^2}\mathbb{E}_{U \sim \mu}\left[|\tr(\widetilde{U}^\dagger U)|^2\right] \right)= \frac{1}{2}+\frac{1}{2d^2} \approx \frac{1}{2}.
    \end{align*}
    Therefore, by running the swap test just a constant number of times, we can estimate $\hat{P}_0$ and decide, say, if $\hat{P}_0<3/4$ then $U$ is Haar random (otherwise it is not)---thus distinguishing PRUs from Haar random unitaries.
\end{proof}

Using a similar proof approach and just the PRFs (through which the previous PRUs were constructed), we are also able to extend the lower-bound to the sub-linear ancilla setting (improved over the polynomial ancilla required for the PRU implementation). This proof structure is more directly analogous to the proofs of classical \cite{10.1145/167088.167197} and quantum \cite{doi:10.1137/20M1344202} hardness for average-case learning of $\AC^0$. 
\begin{theorem}
    \label{thm:learning_PRF}
    Fix any constant $\delta > 0$. Assuming subexponential post-quantum security of LWE, $n^{\omega(1)}$ time is necessary to learn $\QAC^0$ unitaries with $O(n^\delta)$ ancillae, according to the average-case distance measure.  
\end{theorem}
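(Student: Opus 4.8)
The plan is to mirror the structure of the preceding proof of \Cref{thm:learning_PRU}, but to construct the hard instance using a single scaled-down quantum-secure PRF rather than the full gluing-based PRU, which is what lets us control the ancilla count down to $O(n^\delta)$. The key observation is that the PRF of \Cref{eq:lweprf} lives in $\TC^0$ and, via \Cref{thm:tc0_power}, the associated phase oracle $\ket{x}\mapsto (-1)^{f(x)}\ket{x}$ can be implemented in $\QAC^0_f$; after applying the \texttt{FANOUT}-to-$\QAC^0$ replacement from \Cref{cor:parity_improved}, this becomes a $\QAC^0$ circuit whose ancilla budget is governed by the width of the \texttt{FANOUT} gates rather than by any gluing patch size. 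The idea is to run the PRF on only $m = O(\log n)$ input bits (so that the honest evaluation acts on a logarithmic-size register) while padding to $n$ qubits, keeping the \texttt{FANOUT} widths at $\Poly\log n$ so that \Cref{cor:parity_improved} introduces only $\Poly\log(n)\cdot\exp(\Theta(\log\log n))=n^{o(1)}$, and in fact $O(n^\delta)$, ancillae for any fixed $\delta>0$.

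First I would fix $\delta>0$ and choose the PRF input length and \texttt{FANOUT} width so that the total ancilla count is $O(n^\delta)$; this is the ``fully quantum'' analog of the \cite{doi:10.1137/20M1344202} reduction referenced in the discussion. Second, I would argue that the resulting $\QAC^0$ circuit $\mu$-implements a unitary $U$ (a diagonal phase unitary built from the PRF, suitably embedded) that is computationally indistinguishable from Haar-random to any $\Poly(n)$-time adversary making forward queries, invoking the security guarantee of \Cref{eq:lweprf} together with \Cref{lem:approximator} to absorb the implementation error $\mu = 1/\Poly(n)$ into the distinguishing advantage. Third, I would assume for contradiction the existence of an $O(n^c)$-time learner that outputs $\widetilde U$ with $\mathcal{D}_\text{avg}(U,\widetilde U)\le 1/\Poly(n)$, and then reuse verbatim the swap-test distinguisher from the proof of \Cref{thm:learning_PRU}: form the Choi states of $U$ and $\widetilde U$, swap-test them, and use \Cref{fact:uni_chans_dist} to show $P_0\approx 1$ in the PRF case while $\mathbb{E}_{U\sim\mu}[P_0]\approx 1/2$ in the Haar case. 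A constant number of swap tests then distinguishes the two regimes in polynomial time, contradicting the PRF security and forcing the learning time to be $n^{\omega(1)}$.

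The main obstacle I anticipate is the second step: verifying that a PRF on $O(\log n)$ input bits still yields a unitary that looks Haar-random on all $n$ qubits in the sense required by the average-case distance. A diagonal phase unitary on a logarithmic register is manifestly \emph{not} close to Haar-random on $n$ qubits, so the reduction cannot simply embed a small PRF and pad with identity. The resolution—and the subtle point to get right—is that we do not need $U$ to fool a Haar-distinguisher on its own; we need only that the swap-test statistic $\frac{1}{d^2}|\tr(\widetilde U^\dagger U)|^2$ separates the PRF case from the Haar case, and that a successful \emph{learner} for $U$ would constitute a forward-query distinguisher against the underlying PRF. Concretely, I expect the correct framing to be that learning the $\QAC^0$ channel to average-case distance $1/\Poly(n)$ lets one predict $U$ on a Haar-random input, which by the Choi/overlap calculation recovers enough of the PRF's phase values to break its pseudorandomness; making this implication quantitatively tight against the $\exp(-\Omega(\cdot))$ PRF advantage, while keeping the ancilla count at $O(n^\delta)$, is where the real care is required.
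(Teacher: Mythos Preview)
Your overall architecture matches the paper's: shrink the LWE-based PRF to act on a small register, pad with identity to $n$ qubits, implement the phase oracle in $\QAC^0$ via \Cref{cor:parity_improved} with $O(n^\delta)$ ancillae, and reuse the Choi-state swap test from \Cref{thm:learning_PRU} to turn a hypothetical polynomial-time learner into a PRF distinguisher. That is exactly the route the paper takes.

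There is one genuine gap, and it is in your parameter choice. Taking $m = O(\log n)$ does not work. Subexponential LWE security means the PRF on $m$ input bits is secure against $2^{m^{1/k}}$-time adversaries for some constant $k$. With $m = \Theta(\log n)$ you get security only against time $2^{(\log n)^{1/k}}$, which for any $k>1$ is $n^{o(1)}$---strictly subpolynomial---so you cannot contradict a polynomial-time learner, let alone conclude an $n^{\omega(1)}$ lower bound. The paper instead takes $m = \log^d n$ for a constant $d>k$, so that the shrunk PRF retains $2^{\log^{d/k} n}$ (quasipolynomial) security. The \texttt{FANOUT} widths are then $\Poly(m) = \Poly\log n$, and the single-patch ancilla analysis from \Cref{eq:design_one} still gives $O(n^\delta)$ ancillae; but the input register must be polylogarithmic, not logarithmic.

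A second, smaller point: your ``obstacle'' paragraph frames the comparison as PRF versus Haar, and then talks about ``recovering phase values.'' The paper's reduction never compares to Haar on $n$ qubits. It compares the PRF phase oracle to a \emph{truly random function} on $m$ bits (both padded with identity on the remaining $n-m$ qubits), and the only calculation needed is that for a uniformly random such $U$ and any fixed $\widetilde U$,
\[
\mathbb{E}_U\!\left[\tfrac{1}{4^n}\bigl|\tr(\widetilde U^\dagger U)\bigr|^2\right] = \tfrac{2^{n-m}}{2^n} = \tfrac{1}{2^m},
\]
so the swap test gives $P_0 \approx 1/2$ in the random-function case versus $P_0 \approx 1$ in the PRF case (by the learning guarantee and \Cref{fact:uni_chans_dist}). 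That direct computation replaces the vaguer ``predict on a Haar-random input'' story in your last paragraph.
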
 
\begin{proof}
    We will leverage the Banerjee, Peikert, and Rosen PRF construction described in \Cref{sec:pseudorandom_units}. Following from \Cref{thm:fpru}, we can implement a PRF in polynomially sized $\QAC^0_f$ that is secure against subexponential adversaries. Specifically, we will assume that the security is for time $t(n) = 2^{n^{1/k}}$, for some constant $k$. In order to efficiently implement the PRF in $\QAC^0$, we can use a procedure similar to that of the gluing lemma (\Cref{lem:glueddesigns}). Namely, we will ``shrink'' the PRF to only act on inputs of size $m=\log^d(n)$, where $d$ is a constant strictly greater than $k$, i.e. $d>k$. Note that we act by identity on the remaining $n-m$ input qubits. The security of this shrunk PRF on is still quasipolynomial, i.e. $t(m) = 2^{\log^{d/k}(n)} = 2^{\Poly\log(n)}$. Furthermore, implementing this PRF to arbitrary polynomial precision can be accomplished by a $\QAC^0$ circuit with $n^\delta$ ancillae, by the same logic as the single-patch step given by \Cref{eq:design_one} in the proof of \Cref{thm:qac0designs}. In particular, we no longer need an $O(n\ell)$ factor accounting for the system size. 

    Now, assume for contradiction that there existed an $O(n^k)$ time algorithm for learning a unitary $\widetilde{U}$ such that $\mathcal{D}_\text{avg}(U, \widetilde{U}) \leq \eps$. Via an analogous Choi state and swap test distinguishing procedure to that of \Cref{thm:learning_PRU}, we can thus distinguish the pseudorandom function from a truly random function. (Note that the only change in the analysis is that $U$ is now a unitary encoding of a $\Poly\log(n)$ size random function, which still has $\mathbb{E}_{U}[\frac{1}{{2^n}^2}|\tr(\widetilde{U}^\dagger U)|^2]=\frac{2^{n-m}}{2^n}=\frac{1}{2^m}$.) Therefore, access to an average-case $O(n^k)$-time learning algorithm would enable $O(n^k)$-time distinguishing of the PRF from a random function. Since the security guarantee of the PRF is negligible, and the implementation error can be made $O(1/k^\prime)$ with $k^\prime > k$, \Cref{lem:approximator} implies that no $O(1/n^k)$-distinguishing algorithm can exist, a contradiction. 
\end{proof}
Note that the ancilla-overhead reduction achieved by \Cref{thm:learning_PRF} relative to \Cref{thm:learning_PRU} is not fundamental, but rather a reflection of PRU implementation's reliance on PRFs, with additional ancilla overhead. However, we still believe the PRU proof given in \Cref{thm:learning_PRU} is of interest, as there could plausibly exist ancilla-efficient PRU constructions that are not based on PRFs.

Furthermore, we observe that the criterion of the previous learning lower-bounds almost entirely match the guarantees of the of the $\QAC^0$ unitary learning algorithm proposed by \cite{vasconcelos2025learningshallowquantumcircuits}. Crucially, the learning guarantees of the \cite{vasconcelos2025learningshallowquantumcircuits} algorithm are only proven for $\QAC^0$ with up to logarithmic ancillae, whereas our tightest learning lower-bound only holds for $\QAC^0$ with sublinear ancillae---i.e. $O(n^\delta)$ ancillae for constant $\delta >0$.\ However, granted the low-support quantum LMN conjecture \cite[Conjecture 1]{vasconcelos2025learningshallowquantumcircuits}, their learning algorithm extends to achieve quasipolynomial-time average-case learning of polynomial-sized $\QAC^0$, thus nearly matching our PRU and PRF lower-bounds.

\begin{corollary} \label{thm:learning_corollary}
    Assuming the quantum LMN low-support conjecture \cite[Conjecture 1]{vasconcelos2025learningshallowquantumcircuits}, the quasipolynomial $O(n^{\Poly\log n})$ time-complexity of the \cite{vasconcelos2025learningshallowquantumcircuits} algorithm nearly matches our super-polynomial $n^{\omega(1)}$ time lower-bound and is, thus, near-optimal.
\end{corollary}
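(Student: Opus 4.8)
The plan is to sandwich the true learning complexity $T(n)$ of polynomial-size $\QAC^0$ unitaries between the lower bound already established in \Cref{thm:learning_PRF} and the conjectural upper bound of \cite{vasconcelos2025learningshallowquantumcircuits}, and then observe that the two bounds nearly coincide. First I would recall that \Cref{thm:learning_PRF} shows, under subexponential post-quantum security of LWE, that learning $\QAC^0$ unitaries with $O(n^\delta)$ ancillae to average-case distance requires $n^{\omega(1)}$ time for any constant $\delta > 0$. Since the sublinear-ancilla circuits appearing in that lower bound form a subclass of general polynomial-size $\QAC^0$, the bound applies a fortiori to the full polynomial-size average-case learning problem, giving $T(n) = n^{\omega(1)}$.

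Next I would invoke the conjecture to supply the matching upper bound. Under the quantum LMN low-support conjecture \cite[Conjecture 1]{vasconcelos2025learningshallowquantumcircuits}, their low-degree/low-support learning algorithm extends from the logarithmic-ancilla regime (where it is proved unconditionally) to \emph{all} polynomial-size $\QAC^0$ circuits, yielding an average-case learning algorithm running in quasipolynomial time $O(n^{\Poly\log n})$. Because the lower-bound instances are themselves polynomial-size $\QAC^0$ circuits, this algorithm applies to exactly the same task, so $T(n) \le O(n^{\Poly\log n})$. Chaining the two bounds gives $n^{\omega(1)} \le T(n) \le O(n^{\Poly\log n})$, and writing $n^{\Poly\log n} = \exp(\Poly\log n)$ exhibits both endpoints as super-polynomial and sub-subexponential functions, differing only by the gap between ``super-polynomial'' and ``quasipolynomial.'' This is precisely the sense in which the \cite{vasconcelos2025learningshallowquantumcircuits} runtime is near-optimal.

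The main point of care here is conceptual alignment rather than a technical obstacle, since both endpoints of the sandwich are already in hand. One must check that (i) the ancilla budget in the lower bound, $O(n^\delta)$, sits inside the polynomial-ancilla regime handled by the conjectural upper bound, so that both statements refer to a single well-defined learning problem; and (ii) the distance measure (average-case distance) and the underlying assumptions are the same on both sides. Because a genuine gap remains between the super-polynomial lower bound and the quasipolynomial upper bound, the conclusion is stated as \emph{near}-optimal rather than optimal; closing it would require either strengthening the lower bound to quasipolynomial—e.g.\ via the more ancilla-efficient PRU/PRF constructions discussed in \Cref{sec:intro-directions}—or tightening the algorithm, and I would flag this residual gap explicitly as the reason ``near-optimal'' is the sharpest claim available.
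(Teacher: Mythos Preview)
Your proposal is correct and matches the paper's treatment: the corollary is stated without a separate proof, as it simply records the sandwich between the $n^{\omega(1)}$ lower bound of \Cref{thm:learning_PRF} and the conjectural quasipolynomial upper bound of \cite{vasconcelos2025learningshallowquantumcircuits}, exactly as you outline. The paper makes the same alignment checks (ancilla regime, average-case distance) in the paragraph immediately preceding the corollary, so your write-up is essentially a fleshed-out version of that discussion.
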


We conclude with a couple final learning observations. First, note that our $\QAC^0$ $t$-design implementations could also be used to obtain information-theoretic lower-bounds against low-degree learning of $\QAC^0$, via the approach of \cite{chen2025information}.  Second, although an exponential worst-case learning lower-bound was previously known for $\QAC^0$ \cite{10.1145/3618260.3649722, vasconcelos2025learningshallowquantumcircuits}, this is the first \emph{average-case} lower-bound.
\subsection{Measurement-Based Random Unitaries}
\label{sec:msmt}
As a consequence of the equivalence between Model 2 and Model 3 from \Cref{sec:intro-models}, constructions of random unitaries in $\QAC^0_f$ can be recast as constant-time \textit{measurement-based} implementations of random unitaries. To better illustrate the measurement-based protocol, we provide a summary in this section. The protocol requires the ability to perform mid-circuit measurements, as well as classical parity processing of the outputs. All quantum gates can be assumed to be between neighboring pairs of qubits arranged on a two-dimensional grid. 

\begin{theorem}
    \label{thm:msmt}Using constant-depth quantum circuits with two-dimensional nearest-neighbor gates and feedforward measurement, we can exactly implement 
    \begin{itemize}
        \item Standard $\eps$-approximate $t$-designs with $n \cdot \Poly(\log nt^2/\eps) \cdot \Poly(t)$ ancillae, and quasipolynomial-secure PRUs (assuming subexponential security of LWE) with $n \Poly \log n$ ancillae. 
        \item Strong $t$-designs (to inverse exponential error) with $\Poly(n,t)$ ancillae, and strong PRUs with $\Poly(n)$ ancillae.
    \end{itemize}
\end{theorem}
\begin{proof}
Both constructions follow the same overall blueprint:
\begin{algorithm}[H]
\begin{algorithmic}[1]
\For{each layer of the glued unitary}
    \For{each local patch}
        \For{$O(1)$ repetitions}
            \State Apply a layer of single-qubit gates.
            \State Apply a tensor product of disjoint \texttt{FANOUT} gates (within the patch). 
        \EndFor
    \EndFor
\EndFor
\end{algorithmic}
\end{algorithm}
A tensor product of disjoint \texttt{FANOUT} gates is a Clifford circuit, and therefore can be implemented as a $O(\ell)$-depth circuit on a one-dimensional nearest neighbor architecture~\cite{Bravyi_2021}, where $\ell$ is the local patch size. This can be converted into a 2D nearest neighbor circuit on $O(\ell) \times O(\ell)$ qubits with constant depth using mid-circuit measurements (see \Cref{app:Clifford}). Each of the local patches (Line 2) can be placed side-by-side in a 2D nearest neighbor grid and executed in parallel: the total number of qubits required is $O(m\ell)$, where $m$ is the total number of qubits in the circuit including ancillae. After completion of the first layer, the intermediate state of the circuit will lie on all the rightmost qubits of the $O(m) \times O(\ell)$ grid of qubits—see \Cref{fig:clifford}. We can then repeat the same process with the second layer to finish executing the circuit. Taking the appropriate local patch sizes from \Cref{lem:glueddesigns}, \Cref{lem:glueprus}, \Cref{thm:fdesign}, \Cref{cor:subexp_pru} (in the latter two, which correspond to strong random unitaries, the ``patch'' is the entire circuit) we obtain constant-depth measurement based-implementations of random unitaries with the claimed number of ancillae. 
\end{proof}

\subsection{Towards \texorpdfstring{$\texttt{PARITY} \notin \QAC^0$}{}}
\label{sec:parity_lower_bound}
In this section, we show how the construction of strong unitary $t$-designs in $\QAC^0_f$ implies a connection between random unitaries and $\mathtt{PARITY} \notin \QAC^0$. Specifically, we show that any algorithm which distinguishes $\QAC^0$ circuits from Haar-random unitaries using a polynomial number of forward- and inverse-queries implies that $\texttt{PARITY} \notin \QAC^0$:
\begin{theorem}
    \label{thm:path_to_par}
    Suppose that there exists a quantum algorithm which makes $\Poly(n)$ queries to $\Phi_C$ and $\Phi_{C^\dagger}$, and distinguishes the following two cases with at least inverse exponential advantage: 
    \begin{itemize}
        \item $C$ is a $\QAC^0$ circuit.
        \item $\Phi_C = \Phi_U = U\rho U^\dagger$ for a Haar-random unitary $U$. 
    \end{itemize}
    Then, no $\QAC^0$ circuit $C^\prime$ computes the boolean function $\texttt{PARITY}(x) = \bigoplus_i x_i$ exactly. 
\end{theorem}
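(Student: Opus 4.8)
The plan is to argue by contraposition: assuming \texttt{PARITY} (as a decision problem) is computed exactly by some $\QAC^0$ circuit, I would construct a \emph{strong} unitary $t$-design inside $\QAC^0$ and show that its channel is statistically indistinguishable from a Haar-random channel under $\Poly(n)$ forward and inverse queries, contradicting the hypothesized distinguisher. A strong $t$-design (rather than a PRU) is the right tool here, since the distinguisher is only query-bounded and may otherwise run for unbounded time. The chain of implications is: exact \texttt{PARITY} decision in $\QAC^0$ $\Rightarrow$ the \texttt{PARITY} unitary (equivalently, by \Cref{fact:par_fanout}, the \texttt{FANOUT} gate) is implementable in $\QAC^0$ $\Rightarrow$ $\QAC^0 = \QAC^0_f$ $\Rightarrow$ by \Cref{thm:fdesign} strong $\exp(-\Omega(n))$-approximate $t$-designs live in $\QAC^0$, which no $\Poly(n)$-query algorithm can separate from Haar beyond $\exp(-\Omega(n))$ advantage.

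The crux is the first implication, turning a parity \emph{decision} circuit into a clean, reversible \texttt{PARITY} \emph{unitary}. One cannot simply compute parity into a target qubit and uncompute the garbage, since uncomputation would require copying intermediate bits, i.e. exactly the \texttt{FANOUT} power we are trying to synthesize. To break this circularity I would invoke Rosenthal's reduction (the route through \Cref{lem:neko}): from an exact parity circuit one first prepares a nekomata state $\tfrac{1}{\sqrt2}(\ket{0^n}\ket{\psi_0} + \ket{1^n}\ket{\psi_1})$ in $\QAC^0$, where the free choice of $\ket{\psi_0},\ket{\psi_1}$ absorbs whatever garbage the decision circuit produces, so that no fanout-based uncomputation is needed. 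Since the parity circuit is exact, the resulting nekomata is exact (fidelity $1$), and \Cref{lem:neko} then yields a $\QAC^0$ circuit making queries to this preparation and its inverse that implements the \texttt{PARITY} unitary with fidelity $1$. Conjugating by $H^{\otimes n}$ (\Cref{fact:par_fanout}) gives an exact \texttt{FANOUT} gate, hence $\QAC^0 = \QAC^0_f$ with \emph{no} implementation error.

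With exact \texttt{FANOUT} in hand, every $\QAC^0_f$ circuit of \Cref{thm:fdesign} becomes a $\QAC^0$ circuit realizing the same unitary exactly, so for any target $t = \Poly(n)$ there is a $\QAC^0$ circuit $C$ that exactly implements a unitary $U$ drawn from a strong $\exp(-\Omega(n))$-approximate $t$-design (with the $2^{-n/8}$-type error of \Cref{thm:cpfc}). Because the implementation is exact, $\Phi_C(\rho) = U\rho U^\dagger$ and $\Phi_{C^\dagger}(\rho) = U^\dagger \rho U$ are genuine unitary channels with clean ancillas, so forward and inverse queries to $\Phi_C$ coincide with queries to $U$ and $U^\dagger$. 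Choosing $t$ larger than the distinguisher's query count, the defining property of a strong $t$-design bounds the advantage of any $\Poly(n)$-query algorithm at distinguishing $C$ from a Haar-random channel by $\exp(-\Omega(n))$. This contradicts the assumed distinguisher, completing the contrapositive.

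The main obstacle is twofold. First and foremost is making the nekomata step fully rigorous: verifying that Rosenthal's reduction indeed converts an exact parity decision circuit into an exact nekomata preparation within constant depth, and tracking that \Cref{lem:neko} preserves exactness (fidelity $1$) in this $\eps = 0$ regime. Second is the error budget in the final contradiction: the $t$-design error is fixed at $\exp(-\Omega(n)) \approx 2^{-n/8}$ by \Cref{thm:cpfc}, so the contradiction requires the hypothesized ``inverse exponential'' advantage to be a \emph{sufficiently large} inverse exponential, strictly exceeding this design error; I would state the theorem with this quantitative matching in mind, or alternatively drive the design error below any prescribed $2^{-\alpha n}$ if a stronger design construction is available.
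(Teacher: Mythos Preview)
Your overall contrapositive strategy is exactly the paper's: assume exact \texttt{PARITY} decision in $\QAC^0$, upgrade to the \texttt{PARITY} unitary, hence \texttt{FANOUT}, hence $\QAC^0 = \QAC^0_f$, then invoke \Cref{thm:fdesign} to place strong $\exp(-\Omega(n))$-approximate $t$-designs inside $\QAC^0$ and contradict the distinguisher. The final accounting also matches.

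Where you diverge is the ``crux'' step, and the obstacle you identify there is not real. You assert that compute--copy--uncompute ``would require copying intermediate bits, i.e.\ exactly the \texttt{FANOUT} power we are trying to synthesize.'' This is false: if $U$ is a $\QAC^0$ circuit (single-qubit gates and arbitrary-width \texttt{TOFFOLI}s in constant depth), then $U^\dagger$ is obtained by reversing the layers and taking single-qubit adjoints---\texttt{TOFFOLI} is self-inverse---so $U^\dagger$ is itself a $\QAC^0$ circuit of the same depth. The paper therefore simply applies $U$, \texttt{CNOT}s the output bit into a fresh target, and applies $U^\dagger$; this yields the exact \texttt{PARITY} unitary in $\QAC^0$ with no appeal to nekomata states or \Cref{lem:neko}. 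Your detour through Rosenthal's reduction is thus unnecessary, and moreover it is incompletely specified: \Cref{lem:neko} converts a nekomata \emph{preparation} into the \texttt{PARITY} unitary, but you never explain how to prepare a nekomata from a \texttt{PARITY} \emph{decision} circuit in the first place. (The paper does go through nekomata, but only in the proof of \Cref{thm:disting-implies-fanoutlb}, where the hypothesis is \emph{average-case} \texttt{FANOUT} and exact uncomputation is genuinely unavailable.)

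Your second worry---matching the hypothesized ``inverse exponential'' advantage against the fixed $\exp(-\Omega(n))$ design error---is a fair quantitative caveat; the paper simply absorbs the $\Poly(n)$ query overhead into $\exp(-\Omega(n))$ and declares a contradiction without further tuning of constants.
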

\label{altpath}
\Cref{thm:path_to_par} may unlock alternate paths towards proving $\texttt{PARITY} \notin \QAC^0$. For illustration, we show that having a single ``peak'' in the Pauli decomposition of a Heisenberg-evolved observable suffices to distinguish a unitary from Haar-random. In general, this requires inverse access to $U$.\   Variants of this property are implied by~\cite{Nadimpalli_2024, anshu2024computationalpowerqac0barely}—we leave details for future exploration.
To illustrate, we give an example of a candidate property which, if true, would separate $\QAC^0$ and $\QAC^0_f$. 
Suppose that for any unitary $U$ implemented by a $\QAC^0$ circuit $C$, we have $$\abs{\tr(UOU^\dagger P)/2^n}^2 \ge 1/n^{\Poly \log n}$$ for some Pauli string $P$, with high probability over a single qubit operator $O$\footnote{This is a weaker statement than the ``low-degree concentration'' conjectures of \cite{Nadimpalli_2024, vasconcelos2025learningshallowquantumcircuits}.}. Consider sampling  
\[
\rho' = (\Phi_C \Phi_O \Phi_{C^\dagger} \otimes I^{\otimes n})\ket{\mathsf{EPR}_n}
\bra{\mathsf{EPR}_n})
\]
in the Bell basis  
\[
\left\{\frac{1}{\sqrt{2^n}}(P \otimes I)\ket{\mathsf{EPR}_n} : P \in \{I, X, Y, Z\}^{\otimes n}\right\}.
\]
This yields $P^\prime$ with probability $\abs{\tr(UOU^\dagger P^\prime)/2^n}^2$. Sampling $\rho'$ twice and checking for agreement allows us to detect $P$ with $1/n^{\Poly \log n}$ probability. 

Conversely, for a Haar-random $U$, $\abs{\tr(UOU^\dagger P)/2^n}^2$ will almost always be exponentially small for all $P$. We sketch a proof: for any two non-identity \footnote{${\tr(UOU^\dagger I)} = \tr(O) = 0$, so we assume $P^\prime, Q^\prime \ne I$.} Pauli strings $P', Q'$, let $D$ be the Clifford operator such that $D P' D^\dagger = Q'$. By unitary invariance of the Haar-measure,
\begin{align*}
    \frac{1}{4^n}\mathbb{E}_{U}[|UOU^\dagger P^\prime|^2] = \frac{1}{4^n} \mathbb{E}_{U}[|\operatorname{Tr}(P^\prime UOU^\dagger)|^2]
    = \frac{1}{4^n} \mathbb{E}_{U}[|\operatorname{Tr}(P^\prime DUOU^\dagger D^\dagger)|^2]\\
    = \frac{1}{4^n} \mathbb{E}_{U}[|\operatorname{Tr}(D^\dagger P^\prime DUOU^\dagger)|^2] 
= \frac{1}{4^n} \mathbb{E}_{U}[|\operatorname{Tr}(Q^\prime UOU^\dagger)|^2] 
    =  \frac{1}{4^n}\mathbb{E}_{U}[|UOU^\dagger Q^\prime|^2].
\end{align*}
Hence, each for each non-identity $P^\prime$, $\abs{\tr(UOU^\dagger P^\prime)/2^n}^2$ has value  $1/(4^n - 1)$ in expectation.\ By concentration of measure (e.g., Lévy's Lemma, see \cite{Low_2009}), $\abs{\tr(UOU^\dagger P^\prime)/2^n}^2$ is exponentially small in $n$ except with \textit{doubly-exponentially} small probability (in $n$).\ Thus, this test distinguishes $\rho'$ from a Haar-random state, without appealing to any notion of low-degree concentration. 
\begin{proof}[Proof of \Cref{thm:path_to_par}]
    Suppose for the sake of contradiction that a $\QAC^0$ circuit $U$ exactly computes the (boolean function) $\texttt{PARITY}$, so $U$ is an $(n+a)$-qubit unitary taking $n$ input qubits and $a = \Poly(n)$ ancillae such that
    \begin{align*}
        U\ket{x_1, \dots, x_n, 0}\ket{0^a} = \ket{\texttt{PARITY}(x)}\ket{\psi_{x}} .
    \end{align*}
    For some $n + a - 1$-qubit state $\ket{\psi_x}$.\ Note that if we trace out the ancilla qubits, this channel is not in general unitary. We would like to use this to construct the \emph{unitary} implementation of $\texttt{PARITY}$. This can be done using the standard uncomputation technique: on input $\ket{x}$ we apply $U$ to $\ket{x}\ket{0^a}$ and CNOT the value $\texttt{PARITY}(x)$ into a separate register in the state $\ket{b}$.\ We then uncompute by applying $U^\dagger$ to get $\ket{x, \texttt{PARITY}(x) \oplus b}\ket{0^a}$.\ Therefore we can construct a circuit $U'$ that implements the $\texttt{PARITY}$ unitary 
    $U'\ket{x_1, \dots, x_n, b}\ket{0^a} = \texttt{PARITY}\ket{x_1, \dots, x_n, b} \ot \ket{0^a}$. Since we only used $U, U^\dagger$ and a CNOT gate, $U'$ is also a $\QAC^0$ circuit.

    Using the ability to compute \texttt{PARITY} together with the construction of strong unitary $t$-designs in \Cref{thm:fdesign}, we conclude that for any $t = \Poly(n)$, strong $\exp(-\Omega(n))$-approximate unitary $t$-designs can be implemented in $\QAC^0$.\ Therefore, for any constant $k$, there exist $\QAC^0$ circuits $C$ such that $n^{k}$ queries to $\Phi_C$ and $\Phi_{C^\dagger}$ do not suffice to distinguish $C$ from Haar-random with probability greater than $n^k \exp(-\Omega(n)) = \exp(-\Omega(n))$. This contradicts our the theorem assumptions.
\end{proof}

Furthermore, we show that if there exists an algorithm which distinguishes $\QAC^0$ circuits from Haar-random unitaries with a polynomial number of queries, then this implies that $\QAC^0$ circuits cannot compute the classical fanout function, even in the average-case. 
\begin{theorem}\label{thm:disting-implies-fanoutlb}
    Suppose that there exists a quantum algorithm which makes $\Poly(n)$ queries to $\Phi_C$ and $\Phi_{C^\dagger}$, and distinguishes the two cases with inverse polynomial error. Then no $\QAC^0$ circuit computes the (classical) \texttt{FANOUT} function $f(s, x_1, \dots, x_n) = (s, s+ x_1, \dots, x+ x_n)$ with average-case error $1/2 + \delta$, for any constant $\delta > 0$
\end{theorem}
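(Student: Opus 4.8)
The plan is to mirror the proof of \Cref{thm:path_to_par}: the only genuinely new work is to upgrade an \emph{average-case} classical \texttt{FANOUT} circuit into a high-fidelity \texttt{FANOUT} \emph{unitary} in $\QAC^0$, after which the argument is essentially the same. So I would argue by contradiction, assuming both that the distinguisher exists and that some $\QAC^0$ circuit $C$ computes the classical \texttt{FANOUT} function with the stated average-case advantage. The target is to produce a family of $\QAC^0$ circuits that $\mu$-implement a strong $\exp(-\Omega(n))$-approximate $t$-design for $t=\Poly(n)$ with $\mu = 1/\Poly(n)$; by \Cref{lem:approximator} any $\Poly(n)$-query algorithm then distinguishes these circuits from Haar-random with advantage at most $\exp(-\Omega(n)) + \Poly(n)\cdot\mu = 1/\Poly(n)$, and choosing the polynomials appropriately contradicts the assumed inverse-polynomial distinguishing advantage.

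The crux is the amplification step. I would exploit the fact that \texttt{FANOUT} is $\mathbb{F}_2$-linear, hence randomly self-reducible: $f(x\oplus r) = f(x)\oplus f(r)$ for every shift $r$. Thus for any fixed input $x$, running $C$ on $x\oplus r$ for a uniformly random $r$ and correcting by $f(r)$ yields a guess for $f(x)$ whose success probability equals the average-case success probability, \emph{independently of} $x$; this converts the average-case guarantee into a per-input (worst-case-over-inputs) guarantee. Taking a bitwise majority over $m=O(\log n)$ independent repetitions then drives the per-input error below any desired inverse polynomial, which is exactly the accuracy required above. A standard compute--copy--uncompute argument (as in \Cref{thm:path_to_par}, using that \texttt{FANOUT} is its own inverse and that we may query both $C$ and $C^\dagger$) converts this worst-case-correct classical circuit into a $\QAC^0$ circuit that $1/\Poly(n)$-implements the \texttt{FANOUT} \emph{unitary} in the sense of \Cref{def:approximplement}. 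Substituting these approximate \texttt{FANOUT} gates into the $\QAC^0_f$ strong-design construction of \Cref{thm:fdesign} and bounding the accumulated error with \Cref{lem:compose} then yields the desired $\QAC^0$ design family.

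The main obstacle is carrying out this amplification \emph{inside constant-depth} $\QAC^0$. Two sub-steps are delicate: computing the correction term $f(r)$ for random $r$ and preparing the $m=O(\log n)$ shifted copies of the input both require fanning a bit out to more than a constant number of places, and the final bitwise majority is a majority of $\omega(1)$ bits---neither of which is available in $\QAC^0$ without \texttt{FANOUT} itself. I would resolve this by a two-stage bootstrap: first use only \emph{constant} fan-in majority over a constant number of self-reduced samples---which needs only constant-width fan-out and is genuinely in $\QAC^0$---to obtain a crude \texttt{FANOUT} circuit with an arbitrarily small \emph{constant} worst-case error, and then use that circuit to supply the wider fan-out and logarithmic-width majority needed to reach inverse-polynomial error. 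Controlling this bootstrap, so that the error introduced by the crude stage does not overwhelm the accuracy sought in the second stage, is the technically sensitive point, and I expect it to be where most of the real work lies; the remaining reduction to a strong design and the contradiction with the distinguisher are routine given \Cref{thm:fdesign,lem:compose,lem:approximator}.

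Finally, once the \texttt{FANOUT} unitary is in hand one also recovers the \texttt{PARITY} unitary via \Cref{fact:par_fanout}, so the conclusion may equivalently be read as an obstruction to $\QAC^0 = \QAC^0_f$, consistent with the companion statement \Cref{thm:path_to_par}.
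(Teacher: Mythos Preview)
Your overall shape—produce an approximate \texttt{FANOUT}/\texttt{PARITY} unitary in $\QAC^0$, substitute into \Cref{thm:fdesign}, and contradict the distinguisher via \Cref{lem:approximator}—matches the paper, but your route to the unitary has a real gap. The circularity you flag is not resolved by the bootstrap: even a \emph{single} self-reduced sample with fresh runtime randomness $r=(r_0,\dots,r_n)$ requires XORing the correction $f(r)=(r_0,\,r_0\oplus r_1,\dots,\,r_0\oplus r_n)$ into the output, and the XOR of the one bit $r_0$ into all $n$ target positions is itself a width-$n$ \texttt{FANOUT}. So ``a constant number of self-reduced samples'' still needs width-$n$ fan-out, not constant width; stage~1 of your bootstrap never gets off the ground. (Restricting to $r_0=0$ removes the wide correction but then averages only over the $x$-coordinates, so the resulting per-input success probability is $q_s:=\mathbb{E}_y\,p_{s,y}$ and still depends on the control bit $s$; nothing forces both $q_0,q_1>1/2$, and hence majority amplification followed by compute--copy--uncompute need not yield an approximate unitary.)

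The paper avoids both issues. For amplification it does not bootstrap at all: it invokes the unconditional $\QAC^0$ implementation of $\Theta(\log n)$-width \texttt{FANOUT}/\texttt{THRESHOLD} from \Cref{cor:parity_improved} to boost $C$ to average-case success $1-1/\Poly(n)$ (still only average-case). The key idea you are missing is that worst-case correctness is never needed. A counting argument produces a \emph{fixed} $x$ for which both $(0,x)$ and $(1,x)$ are ``good'' inputs; applying the amplified circuit to $\tfrac{1}{\sqrt 2}(\ket{0,x}+\ket{1,x})$ and shifting by $X$-gates yields an approximate $(n{+}1)$-nekomata, whence \Cref{lem:neko} delivers the \texttt{PARITY} unitary. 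Two good classical inputs suffice; random self-reducibility and compute--copy--uncompute are unnecessary.
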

\begin{proof}
    We now show that if a $\QAC^0$ circuit could compute the classical \texttt{FANOUT} function on average, then $\QAC^0$ circuits can approximate $t$ designs for any $t = \Poly(n)$. Suppose a $\QAC^0$ circuit $C$ satisfies 
    $$\pr_{s\in \{0,1\}, x \in \{0, 1\}^n}[C(s, x) = (s, s \oplus x_1, \dots, s \oplus x_n)] \ge \frac{1}{2} + \delta$$
    By \Cref{cor:parity_improved}, a $\Theta(\log n)$-width \texttt{FANOUT} gate can be $\Theta({1/n^{c_0}})$-approximated by a $\QAC^0$ circuit for any constant $c_0$. Consequently, a $\Theta(\log n)$-width \texttt{THRESHOLD} gate also can be $\Theta({1/\Poly(n)})$-approximated in $\QAC^0$ (\Cref{thm:tc0_power}).\ This suffices to perform standard probability amplification:\ on input $(s, x)$, apply $\Theta(\log n)$-width \texttt{FANOUT} gates to copy the input $\Theta(\log n)$ times in parallel, then apply $C$ on each, followed by using \texttt{THRESHOLD} gates in parallel to take a majority vote across the target qubits.\ Applying standard Chernoff bounds, a $1/2 + \delta$ success probability for constant $\delta$ can be boosted into a $1 - 1/\Poly(n)$ success probability with a logarithmic number of repetitions. The entire circuit, denoted $C^\prime$, is in $\QAC^0$, and the implementation error is at most inverse polynomial ( \Cref{lem:compose}). Using \Cref{lem:approximator}, this implies that 
    \begin{align}
        \pr_{s\in \zo, x \in \{0, 1\}^n}[C^\prime(s, x) = \mathtt{FANOUT}(s, x)] \ge 1 - \frac{1}{n^{c_1}} \label{eq:avgsuccfanout}
    \end{align}

    By a simple counting argument, we see that $C'$ performs well on most of the inputs. In particular, there exists a subset of inputs $S \subseteq \zo \times \zo^n$, $C'$ of size $|S| \geq \frac{2}{3}\cdot \frac{1}{2^{n+1}}$ such that $C'$ accurately computes \texttt{FANOUT} with probability at least $1-\frac{3}{n^{c_1}}$. 
    \begin{align}
        \pr[C^\prime(s, x) = \mathtt{FANOUT}(s, x)] \ge 1 - \frac{3}{n^{c_1}} && \text{for each} \ (s,x) \in S.\label{eq:fanout-good-on-DS}
    \end{align}
    To see why this is true, let $S$ be the set of all inputs that $C'$ is correct with probability at least $1-\frac{3}{n^{c_1}}$. Suppose for the sake of contradiction that $|S|< \frac{2}{3}\cdot \frac{1}{2^{n+1}}$, then 
    $\Pr_{s,x}[C'(s,x) \neq \texttt{FANOUT}(s,x)] \geq \Pr_{s,x}[(s,x) \in S] \cdot 0 + \Pr_{s,x}[(s,x) \notin S] \cdot \frac{3}{n^{c_1}} > \frac{1}{3} \cdot \frac{3}{n^{c_1}} = \frac{1}{n^{c_1}}$, contradicting \Cref{eq:avgsuccfanout}.
    
    $C$ is implemented by a unitary $U$ acting on $n+1$ inputs and $a = \Poly(n)$ ancillae. We denote its action on input  $s \in \zo, x \in \zo^n$ as follows
    \begin{align*}
        U \ket{s, x}\ket{0^a} = \alpha_{s,x} \ket{s, s \oplus x_1, \dots, s \oplus x_n}\ket{\psi_{s,x}} + \beta_{s,x} \ket{\textsf{bad}_{s,x}}.
    \end{align*}
    For some an $a$-qubit state $\ket{\psi_{s,x}}$, and an $(n+1+a)$-qubit state $\ket{\textsf{bad}_{s,x}}$ that is orthogonal to $\ket{s, s\oplus x_1, \dots, s \oplus x_n}$. Furthermore, \Cref{eq:fanout-good-on-DS} tells us that for each $(s,x) \in S$, $\abs{\alpha_{s,x}}^2 \geq 1 - \frac{3}{n^{c_1}}$ and $\abs{\beta_{s,x}}^2 \leq \frac{3}{n^{c_1}}$. Since $|S| \geq \frac{2}{3} \cdot \frac{1}{2^{n+1}}$, there exists an $x$ such that both $(0, x)$ and $(1, x)$ are in $S$. Thus, we prepare the state $\frac{1}{\sqrt{2}}(\ket{0,x} + \ket{1,x})$ and apply $U$ to it to get
    \begin{align*}
        U \cdot (\frac{1}{\sqrt{2}}(\ket{0,x} + \ket{1,x}) = \frac{1}{\sqrt{2}} \pbra{\alpha_{0,x}\ket{0, x}\ket{\psi_{0,x}} + \alpha_{1,x} \ket{1, \bar{x}} \oplus \ket{\psi_{1,x}} + \beta_{0,x} \ket{\textsf{bad}_{0,x}} + \beta_{1,x} \ket{\textsf{bad}_{1,x}}}
    \end{align*}
     Furthermore, we apply the $X$ gate to each qubit $i$ of the second register with $x_i = 1$ to get a state of the form
     \begin{align}
        \ket{\psi} := \frac{\alpha_{0,x}}{\sqrt{2}}\ket{0^{n+1}}\ket{\psi_{0,x}} + \frac{\alpha_{1,x}}{\sqrt{2}} \ket{1^{n+1}} \ket{\psi_{1,x}} + \beta \ket{\textsf{bad}}.
     \end{align}
     Where  $\beta \ket{\textsf{bad}} = \beta_{0,x} \ket{\textsf{bad}_{0,x}} + \beta_{1,x} \ket{\textsf{bad}_{1,x}}$. Furthermore, we can assume that the $\alpha_{i, x}$'s are positive real numbers since we can apply a global phase to change $\alpha_{0,x}$ and then a phase gate to change $\alpha_{1,x}$. Let $\ket{\eta}: = \frac{1}{\sqrt{2}} (\ket{0^{n+1}}\ket{\psi_{0,x}} + \ket{1^{n+1}}\ket{\psi_{1,x}})$. The state $\ket{\eta}$ is a generalization of the CAT state, and is referred to as a \emph{nekomata state} \cite{rosenthal2020boundsqac0complexityapproximating}. The fidelity of our state $\ket{\psi}$ with $\ket{\eta}$ is
     \begin{align}
         \abs{\braket{\eta | \psi}}^2 &= \abs{\frac{1}{2}(\alpha_{0,x} + \alpha_{1,x}) + \beta \braket{\eta | \textsf{bad}}}^2 \\
         &\geq \pbra{\frac{1}{2} \abs{\alpha_{0,a} + \alpha_{1,x}} - |\beta| }^2 \\
         &\geq \pbra{\sqrt{1- \frac{3}{n^{c_1}}} - 2\sqrt{\frac{3}{n^{c_1}}}}^2
         \geq 1 - 2 \sqrt{\frac{3}{n^{c_1}}}. 
     \end{align}
     Thus we have prepared a state that is a $1 - O(1/n^{c_1/2})$ close in fidelity to a nekomata state. Therefore, by \Cref{lem:neko}, the \texttt{PARITY} unitary operation can be $1 - O(1/n^{c})$-approximated in $\QAC^0$ for any constant $c$.\ Using this approximation of \texttt{PARITY} together with the construction of strong unitary $t$-designs in \Cref{thm:fdesign}, we conclude that for any $t = \Poly(n)$, strong $\exp(-\Omega(n))$-approximate unitary $t$-designs can be approximated to arbitrary inverse polynomial precision in $\QAC^0$.\ Using \Cref{lem:approximator}, we conclude that for any constants $k_1$ and $k_2$, there exist polynomial size $\QAC^0$ circuits $C$ such that $n^{k_1}$ queries to $\Phi_C$ and $\Phi_{C^\dagger}$ do not suffice to distinguish $C$ from Haar-random with probability greater than $1/n^{k_2}$.
\end{proof}
In summary, we showed that a polynomial-query distinguishing algorithm for $\QAC^0$ versus Haar random would imply that $\QAC^0$ circuits cannot compute the classical parity function \textit{exactly on all inputs}, nor the classical fanout function \textit{approximately on-average}. It is natural to ask whether it also implies that $\QAC^0$ circuit cannot compute classical parity on average. While it is known that approximate \texttt{PARITY} and \texttt{FANOUT} unitaries are equivalent up to $\QAC^0$ reductions, this doesn't translate to an equivalence about circuits that only are guaranteed to approximate \texttt{PARITY} and \texttt{FANOUT} on inputs \emph{in the computational basis}.
Rather, it is known that if a \texttt{PARITY} circuit is approximately correct on \emph{all} input states, then we can convert it to a \texttt{FANOUT} circuit that is approximately correct on all input states.

\Cref{thm:path_to_par,thm:disting-implies-fanoutlb} connect random unitaries and quantum circuit lower bounds. This connection was previously studied in the context of quantum learning of boolean functions \cite{arunachalam2021quantumlearningalgorithmsimply}, and for state synthesis \cite{chia2024quantumstatelearningimplies}. The fact that distinguishing (channels approximating) $\QAC^0$ \textit{unitaries} suffices to prove quantum circuit lower bounds for \textit{decision problems} seems to be a particular feature of $\QAC^0$, enabled by the $\QAC^0$ reduction of implementing \texttt{PARITY} to synthesizing nekomata states (\Cref{lem:neko}). 
\section{Acknowledgements}
BF thanks Dale Jacobs, Jackson Morris, Thomas Schuster, and Vishnu Iyer for helpful discussions, and HY for hosting during a visit to Columbia where this work was conducted. NP thanks Miranda Christ for helpful discussions about weak pseudorandom functions. HY thanks Harry Zhou and Soonwon Choi for helpful discussions. NP is supported by the Google PhD fellowship. FV is supported by the Paul and Daisy Soros Fellowship for New Americans, NSF Graduate Research Fellowship Grant DGE2146752, and NSF
Grant 2311733. HY is
supported by AFOSR award FA9550-23-1-0363, NSF CAREER award CCF-2144219, NSF award CCF-232993, and
the Sloan Foundation.  
\printbibliography
\appendix
\section{Appendix}  
Here, we give deferred proofs from the main body of the paper. 
\begin{lemma}
    \label{app:compose}
    Assume that $C$ $\eps_0$-approximates $U$ and $D$ $\eps_1$-approximates $V$. Then, $CD$ $(\eps_0 + \eps_1)$-approximates $UV$. 
\end{lemma}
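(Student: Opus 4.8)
The plan is to prove additivity of the implementation error by chaining the two approximation guarantees through a single intermediate (hybrid) state, using a genuine metric on pure states rather than the fidelity itself. The definition in \Cref{def:approximplement} is stated via the squared overlap $\abs{\bra{\psi,0^a}C^\dagger(U\ket\psi\ot\ket{0^a})}^2 \ge 1-\eps$, and the first thing I would do is rephrase it in terms of the pure-state trace distance $T(\ket x,\ket y) := \sqrt{1-\abs{\braket{x|y}}^2}$, which (unlike the fidelity) obeys the triangle inequality and is invariant under a common unitary, $T(W\ket x, W\ket y)=T(\ket x,\ket y)$. In this language the hypotheses read: for every input $\ket\phi$, $T\big(C\ket{\phi,0^a}, U\ket\phi\ot\ket{0^a}\big)\le\sqrt{\eps_0}$ and $T\big(D\ket{\phi,0^a}, V\ket\phi\ot\ket{0^a}\big)\le\sqrt{\eps_1}$. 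I would assume $C$ and $D$ share a common ancilla register, padding with extra $\ket 0$'s if needed, so that $CD$ is well defined.

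Next I would fix an arbitrary $n$-qubit input $\ket\psi$ and insert the hybrid vector $C\big(V\ket\psi\ot\ket{0^a}\big)$ between the two endpoints $CD\ket{\psi,0^a}$ and $UV\ket\psi\ot\ket{0^a}$, then apply the triangle inequality:
\[
T\big(CD\ket{\psi,0^a},\,UV\ket\psi\ot\ket{0^a}\big)\le T\big(CD\ket{\psi,0^a},\,C(V\ket\psi\ot\ket{0^a})\big)+T\big(C(V\ket\psi\ot\ket{0^a}),\,UV\ket\psi\ot\ket{0^a}\big).
\]
The first term I would collapse using unitary invariance to strip the shared $C$, turning it into $T\big(D\ket{\psi,0^a}, V\ket\psi\ot\ket{0^a}\big)\le\sqrt{\eps_1}$, which is exactly the hypothesis on $D$. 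The second term is the hypothesis on $C$ specialized to the particular input $\ket\phi=V\ket\psi$, giving $\le\sqrt{\eps_0}$. Combining yields $T\big(CD\ket{\psi,0^a},UV\ket\psi\ot\ket{0^a}\big)\le\sqrt{\eps_0}+\sqrt{\eps_1}$ for every $\ket\psi$, which unwinds to the implementation guarantee $\abs{\bra{\psi,0^a}(CD)^\dagger(UV\ket\psi\ot\ket{0^a})}^2\ge 1-(\sqrt{\eps_0}+\sqrt{\eps_1})^2$.

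The main obstacle is precisely this last step: the clean metric chaining produces composed error $(\sqrt{\eps_0}+\sqrt{\eps_1})^2=\eps_0+\eps_1+2\sqrt{\eps_0\eps_1}$, whereas the statement asks for the sharper $\eps_0+\eps_1$. The extra cross term $2\sqrt{\eps_0\eps_1}$ is not merely an artifact of the triangle inequality — writing $D\ket{\psi,0^a}=\alpha\,V\ket\psi\ot\ket{0^a}+\beta\ket{\perp}$ with $\abs{\beta}^2\le\eps_1$ and directly estimating the weight of $CD\ket{\psi,0^a}$ orthogonal to $UV\ket\psi\ot\ket{0^a}$ reproduces the same $2\abs{\alpha}\abs{\beta}$-type cross term, which can be positive in the worst case. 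I therefore expect the honest composed bound to be $(\sqrt{\eps_0}+\sqrt{\eps_1})^2$, and I would recover the stated $\eps_0+\eps_1$ only in the small-error regime where the cross term is absorbed (the linear-order ``errors add'' heuristic). This is exactly the regime in which \Cref{lem:compose} is invoked, since there the per-gate errors are always combined inside $\exp(\cdot)$ or $\Poly(\cdot)$ estimates, where a square root or a factor of two is immaterial.
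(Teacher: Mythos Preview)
Your approach via the pure-state trace distance $T(\ket x,\ket y)=\sqrt{1-|\braket{x|y}|^2}$ is correct and clean, and it differs from the paper's route, which manipulates the overlap directly: the paper splits $\langle\psi,0^a|D^\dagger C^\dagger(U\otimes I)|\phi\rangle$ into $\langle\psi,0^a|D^\dagger|\phi\rangle$ plus a remainder, lower-bounds the first term by $\sqrt{1-\eps_1}$, and upper-bounds the remainder via Cauchy--Schwarz by $\|C^\dagger(U\otimes I)\ket\phi-\ket\phi\|_2$, which it then replaces by $1-\sqrt{1-\eps_0}$ to arrive at $\sqrt{1-\eps_0}+\sqrt{1-\eps_1}-1$ and hence $\eps_0+\eps_1$.

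More importantly, your skepticism about the exact constant is warranted. The paper's passage from $\|C^\dagger(U\otimes I)\ket\phi-\ket\phi\|_2$ to $1-\sqrt{1-\eps_0}$ is not valid: for unit vectors with $|\braket{u|v}|\ge\sqrt{1-\eps_0}$ one only has, even after optimizing the global phase, $\|u-v\|_2\le\sqrt{2-2\sqrt{1-\eps_0}}$, and $\sqrt{2-2\sqrt{1-\eps_0}}>1-\sqrt{1-\eps_0}$ for all $\eps_0\in(0,1)$. A two-dimensional real example with $\ket\alpha,\ket\beta,\ket\gamma$ at successive angles $\arcsin\sqrt{\eps_1},\arcsin\sqrt{\eps_0}$ saturates your bound $1-|\braket{\alpha|\gamma}|^2=(\sqrt{\eps_0}+\sqrt{\eps_1})^2-\, O(\eps_0\eps_1)$ and violates $\eps_0+\eps_1$. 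So your $(\sqrt{\eps_0}+\sqrt{\eps_1})^2$ is the honest composed error under \Cref{def:approximplement}, and your remark that this discrepancy is harmless for every invocation of \Cref{lem:compose} in the paper (where errors are tracked only up to $\exp(\cdot)$ or $\Poly(\cdot)$) is exactly right.
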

\begin{proof}
    We want to lower bound
    \begin{equation*}
        \abs{\bra{\psi, 0^a}D^\dagger C^\dagger (UV\ket{\psi} \otimes \ket{0^a})}^2
    \end{equation*}
    Let $\ket{\phi} = V\ket{\psi} \otimes \ket{0^a}$. Using linearity of the inner product, we obtain the following chain of inequalities: 
    \begin{align*}
        \abs{\bra{\psi, 0^a}D^\dagger C^\dagger(U \otimes I)\ket{\phi}} &=  \abs{\bra{\psi, 0^a}D^\dagger\ket{\phi}} +  \abs{\bra{\psi, 0^a}D^\dagger(C^\dagger(U \otimes I) - I)\ket{\phi}} \\
        &\ge \sqrt{(1 - \eps_1)} +  \abs{\bra{\psi, 0^a}D^\dagger(C^\dagger(U \otimes I) - I)\ket{\phi}} \\
        &\ge \sqrt{(1 - \eps_1)} + \norm{C^\dagger(U \otimes I)\ket{\phi} - \ket{\phi}}_2\\
        &= \sqrt{(1 - \eps_1)} + \sqrt{1 - \eps_0} - 1  \\ 
        \implies \abs{\bra{\psi, 0^a}D^\dagger C^\dagger (UV\ket{\psi} \otimes \ket{0^a})}^2 &\ge  \left(\sqrt{(1 - \eps_1)} + \sqrt{1 - \eps_0} - 1\right)^2  \ge 1 - \eps_0 - \eps_1
    \end{align*}
    from the first to the second line, we used that $D$ $\eps_1$-approximates $V$, and from the third to fourth line that $C$ $\eps_0$-approximates $U$. 
\end{proof}
\begin{lemma}
    \label{app:Clifford}
    Let $C$ be any $n$-qubit Clifford Circuit.\ $C$ is implementable by a $\QAC^0_f$ circuit with $O(n^3)$ ancillae. Alternatively, $C$ is implementable by a constant-depth circuit with intermediate measurements and parity corrections, using $O(n^2)$ ancillae (and classical memory), and 2D nearest-neighbor gates. 
\end{lemma}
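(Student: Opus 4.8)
The plan is to reduce an arbitrary Clifford $C$ to a constant number of structurally simple layers and then implement each layer in constant depth, using \texttt{FANOUT} to generate the parities that a \texttt{CNOT}/\texttt{CZ} network computes. Concretely, I would first invoke a canonical-form decomposition of the $n$-qubit Clifford group (e.g.\ the Aaronson--Gottesman normal form), which writes $C$ as a product of $O(1)$ layers, each of which is either a tensor product of single-qubit gates drawn from $\{H,S\}$, a \texttt{CNOT} network (equivalently an invertible linear map $x \mapsto Ax$ over $\mathbb{F}_2$), or a \texttt{CZ} network (equivalently a diagonal phase $\ket{x}\mapsto(-1)^{q(x)}\ket{x}$ for a quadratic form $q$). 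Single-qubit layers are depth $1$ and require no ancillae, so the entire content is implementing the two linear-algebraic layer types in constant depth.

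For a \texttt{CNOT} layer I would use the copy--compute--uncompute pattern. Using \texttt{FANOUT} (available by definition in $\QAC^0_f$, and equivalently letting us compute arbitrary parities via \Cref{fact:par_fanout}), fan each input $x_j$ into enough coherent copies that all $n$ output parities $(Ax)_i=\bigoplus_j A_{ij}x_j$ can be computed into fresh ancillae in parallel; then clear the input register by computing $A^{-1}(Ax)=x$ from the output register and XOR-ing it out, and finally reverse the \texttt{FANOUT} copies. Each of these stages is constant depth and uses $O(n^2)$ ancillae for the copies. A \texttt{CZ}/quadratic-phase layer is handled analogously: fan each qubit into $O(n)$ copies so that all $O(n^2)$ pairwise \texttt{CZ} gates can be applied simultaneously, then uncompute the copies. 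With a constant number of layers this yields a $\QAC^0_f$ circuit; a loose accounting of the fan-out resources across layers and of coherently clearing each intermediate register gives the stated $O(n^3)$ ancilla bound.

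For the measurement-based alternative I would use the folklore cluster-state implementation. Prepare the graph state associated with the generator circuit: this is done in constant depth on a 2D nearest-neighbor lattice by initializing in $\ket{+}$ and applying \texttt{CZ} along the $O(n^2)$ edges, which partition into $O(1)$ matchings. Then measure all non-output qubits \emph{simultaneously} in the prescribed single-qubit bases. The crucial point that keeps the quantum depth constant is that, because every gate is Clifford, the teleportation byproduct operators are Paulis, and conjugating them through the remaining Clifford layers again yields Paulis; consequently the correction that must be applied depends on the random measurement outcomes only through $\mathbb{F}_2$-linear (i.e.\ parity) functions. Thus no adaptive re-measurement is needed---a single round suffices---and the classical feedforward is exactly the promised parity computation, applied as a final Pauli correction. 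The resource state and measured qubits number $O(n^2)$, giving the claimed count on a 2D architecture.

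The main obstacle is the bookkeeping that guarantees constant depth in both constructions. In the $\QAC^0_f$ version one must verify that the uncomputation steps genuinely restore all ancillae to $\ket{0}$ so that the overall map is the intended unitary (and not merely a channel), and that the fan-out/uncompute layers compose without accumulating depth; \Cref{lem:compose} and the invertibility of $A$ are what make this go through. In the measurement-based version the delicate part is the Pauli-frame analysis: one must track how each byproduct Pauli commutes past the subsequent Clifford layers and confirm that the resulting outcome-dependence is linear, so that the feedforward stays within parity computations and the circuit remains non-adaptive. Everything else is routine.
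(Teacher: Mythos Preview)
Your proposal is correct, but the $\QAC^0_f$ half takes a genuinely different route from the paper. You decompose $C$ via a canonical form (Aaronson--Gottesman) into $O(1)$ structured layers and implement each \texttt{CNOT}/\texttt{CZ} layer directly with \texttt{FANOUT} and its Hadamard-conjugate \texttt{PARITY}. The paper instead first writes $C$ as an $O(n)$-depth 1D nearest-neighbor Clifford circuit (Bravyi et al.), then parallelizes it in one shot via Bell-pair gate teleportation; this produces the correct output up to a data-dependent Pauli $X^{c_1\cdot m}Z^{c_2\cdot m}\cdots$ indexed by an $O(n^2)$-bit outcome register $m$, and the \texttt{FANOUT}/\texttt{PARITY} gates are used only to coherently compute the $2n$ inner products $c_i\cdot m$ and undo the Pauli frame. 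What your approach buys is directness and, if you do the accounting carefully, a better ancilla count: each linear or quadratic layer needs only $O(n^2)$ copies, so the total is $O(n^2)$ rather than the paper's $O(n^3)$ (which arises from fanning the $O(n^2)$-bit register $m$ into $2n$ copies). What the paper's approach buys is a single construction that yields both statements simultaneously: replacing the coherent \texttt{FANOUT} of $m$ by an actual measurement of $m$ immediately gives the 2D nearest-neighbor measurement-based implementation, with the parity feedforward being exactly the $c_i\cdot m$ computations. Your measurement-based half (cluster state with simultaneous Clifford-basis measurements and linear Pauli-frame tracking) is essentially the same folklore MBQC idea the paper uses, just phrased in the graph-state rather than gate-teleportation language; note that to get the 2D nearest-neighbor geometry you implicitly need the $O(n)$-depth 1D decomposition here too, not the all-to-all \texttt{CNOT} layers from your canonical form.
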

\begin{proof}
    This lemma is nearly identical to many measurement-based constructions for Clifford circuits present in the literature \cite{jozsa2005introductionmeasurementbasedquantum, Buhrman_2024, parham2025quantumcircuitlowerbounds}.\ First, it is well-known that any $n$-qubit Clifford circuit $C$ can be decomposed into a $O(n)$ depth circuit $C_kC_{k-1}\dots C_2C_1$ consisting of 1D nearest-neighbor two-qubit Clifford gates \cite{Bravyi_2021} (see \Cref{fig:clifford}).
    \begin{figure}[ht]
    \centering
    \includegraphics[scale=0.5]{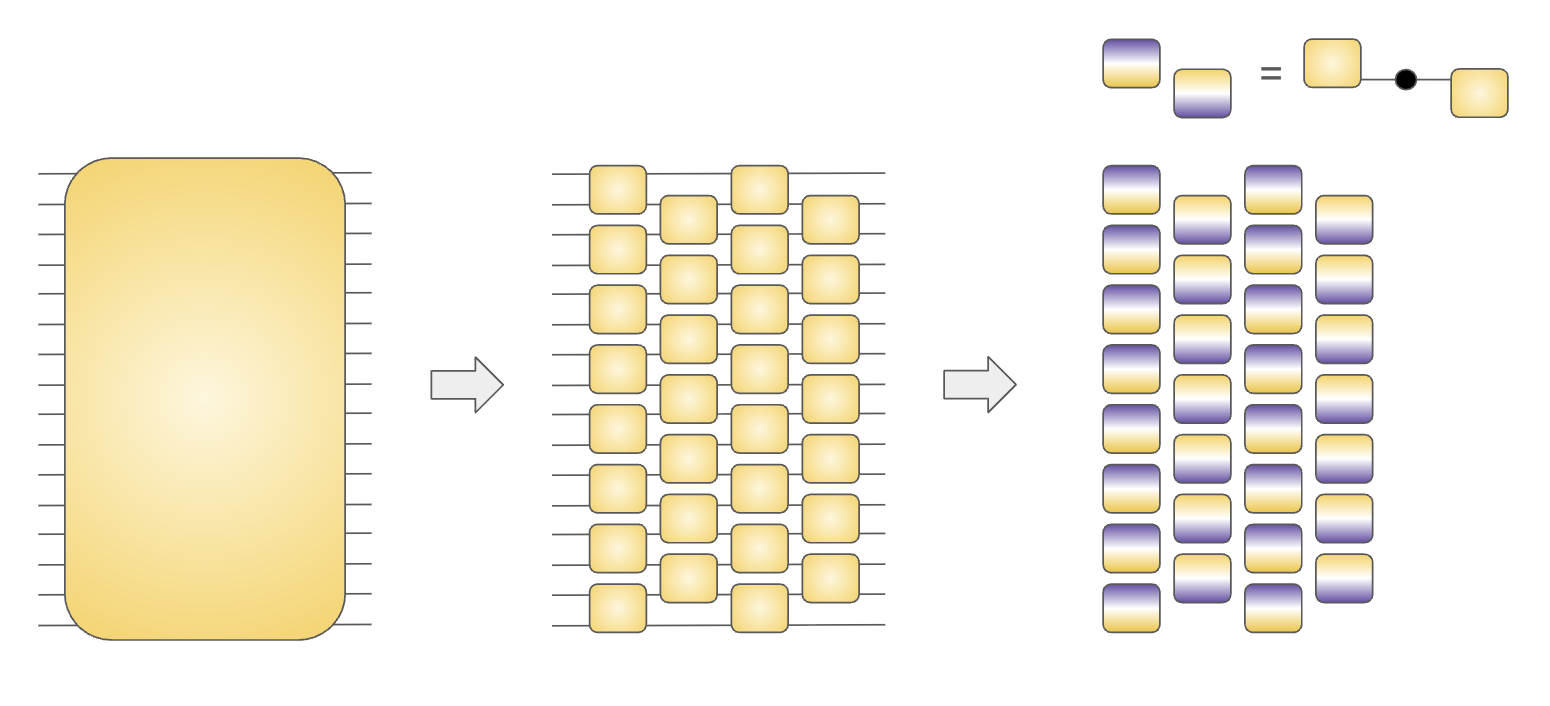}
    \caption{Converting any circuit into a linear-depth nearest neighbor circuit, then using Bell state preparation and measurement (with ancillae, denoted by the multi-colored boxes) to parallelize the circuit with a 2D nearest-neighbor architecture. Time flows left to right. The output state after Bell measurements (up to corrections) lies on the right edge of the grid. For Clifford circuits, the result of the measurement can be corrected efficiently.}
    \label{fig:clifford}
    \end{figure}
    Using standard Bell state teleportation gadgets (see \cite{Buhrman_2024}) to trade time for space, we obtain a $O(1)$ depth circuit which implements the operator 
    \begin{equation*}
         \sum_{m} (X_1^{c_1 \cdot m}Z_1^{c_2 \cdot m}\dots  X_n^{c_{2n-1} \cdot m}Z_n^{c_{2n} \cdot m}C)_\mathsf{A} \otimes \ket{m}\bra{m}_\mathsf{R}
    \end{equation*}
    where $\mathsf{R}$ is an ancilla register with $O(n^2)$ qubits, and the $c_i$'s are bits which depend on the decomposition of $C$ into two qubit unitaries.\ Then, we apply a \texttt{FANOUT} gate controlled on the qubits in $\mathsf{R}$ to $2n$ additional target registers, each with the same size as $\mathsf{R}$: 
    \begin{equation*}
        \sum_{m} (X_1^{c_1 \cdot m}Z_1^{c_2 \cdot m}\dots  X_n^{c_{2n-1} \cdot m}Z_n^{c_{2n} \cdot m}C)_\mathsf{A} \otimes \ket{m}\bra{m}_\mathsf{R} \otimes \bigotimes_{i=1}^{2n} \ket{m}\bra{m}_\mathsf{R_i}
    \end{equation*}
    (If using mid-circuit measurement, we simply measure $\mathsf{R}$).  
    We then coherently compute all $2n$ inner products, doing one inner product in each register, by applying a \texttt{PARITY} = $H^{\otimes O(n^2)} \cdot \texttt{FANOUT} \cdot H^{\otimes O(n^2)}$ gate to each copied register, possibly adding $X$ gates to some of the qubits to compute the appropriate $\mathbb{F}_2$-inner product. If using mid-circuit measurement, these inner products can be computed classically using the obtained measurement outcome. 
    \begin{equation*}
        \sum_{m} (X_1^{c_1 \cdot m}Z_1^{c_2 \cdot m}\dots  X_n^{c_{2n-1} \cdot m}Z_n^{c_{2n} \cdot m}C)_\mathsf{A} \otimes \ket{m}\bra{m}_\mathsf{R} \otimes \bigotimes_{i=1}^{2n} (\ket{m}\bra{m}_\mathsf{R_i} \otimes \ket{c_i \cdot m}\bra{c_i \cdot m}_\mathsf{S_i})
    \end{equation*}
    Then, we apply the appropriate controlled-$X$ and controlled-$Z$ gates from the qubits storing the values of these inner products to the appropriate target qubits in $\mathsf{A}$ to cancel out the extraneous Pauli operators:
    \begin{equation*}
        \sum_{m} C_\mathsf{A} \otimes \ket{m}\bra{m}_\mathsf{R} \otimes \bigotimes_{i=1}^{2n} (\ket{m}\bra{m}_\mathsf{R_i} \otimes \ket{c_i \cdot m}\bra{c_i \cdot m}_\mathsf{S_i})
    \end{equation*}
    We can uncompute the previous two steps to remove the garbage in the $\mathsf{R_i}$ and $\mathsf{S_i}$ registers. Overall, this results in a $O(1)$ depth $\QAC^0_f$ circuit which implements $C$, using $2n \cdot O(n^2) = O(n^3)$ ancillae, or $O(n^2)$ ancillae using mid-circuit measurement. 
\end{proof}
\end{document}